\newtheorem{lemma}{Lemma}
\theoremstyle{definition}
\theoremstyle{remark}
\begin{document}

\preprint{APS/123-QED}

\title{Topologically protected edge oscillations in nonlinear dynamical units}

\author{Sayantan Nag Chowdhury}
\email{snagchowdh@constructor.university}
\affiliation{School of Science, Constructor University, P.O.Box 750561, 28725 Bremen, Germany}
\author{Hildegard Meyer-Ortmanns}
\email{hmeyerortm@constructor.university}
\affiliation{School of Science, Constructor University, P.O.Box 750561, 28725 Bremen, Germany} 
\affiliation{Complexity Science Hub, Vienna, Metternichgasse 8, 1030 Wien, Austria}
\date{\today}

\begin{abstract}
Many examples from quantum and classical physics are known where topological protection is responsible for the robustness of the dynamics. Less explored is  the role of topological protection in the context of classical oscillatory systems. As on-site dynamics, we consider prototypical oscillator models with possible applications in biochemical systems. However, our choice of coupling geometry is inspired  by models from condensed matter physics which -in isolation- guarantee  non-trivial topology in momentum space. 
 We choose directed couplings between units on a two-dimensional grid, alternating between weak and strong values, such that oscillations become localized at the edges of the grid while bulk units transition to oscillation-death states, resulting in a frequency chimera-like state. These patterns are resilient to parameter mismatches, additive noise, and structural defects. To explain the robustness of these edge oscillations we use  topological characteristics  and calculate Zak phases. As it turns out, the edge-localized oscillations result from a  bulk-boundary correspondence that applies to our system even though our derived effective Hamiltonian is non-Hermitian.  By appropriately tuning the system parameters, it is possible to control which regions of the two-dimensional grid are in an oscillatory state and which settle to oscillation-death states.
\end{abstract}

\maketitle


\section{\label{sec:intro}Introduction}
Topological protection is a concept in physics. When it applies, the associated phenomena are rather stable. The space with non-trivial topology may be phase space, configuration space, or momentum space. The common mathematical origin are topological invariants. These invariants cannot be changed by continuous deformations of physical parameters of the dynamical system. A prototypical example is the quantum Hall effect, where electrons make cyclotron orbits in the bulk that lead to unidirectional currents along the boundaries \cite{cohen2019geometric}. This means that the system's transport mainly proceeds along the boundaries to which the currents are exponentially localized.

Meanwhile, topological protection is also discussed in classical systems such as electrical circuits \cite{imhof2018topolectrical}, acoustics \cite{liu2017pseudospins, ni2017topological}, and photonics \cite{xiao2014surface, yuce2019topological}, as well as active matter \cite{sone2019anomalous, shankar2022topological}, game theory \cite{knebel2020topological, yoshida2021chiral, bai2025topological} and mechanical networks \cite{kane2014topological, lubensky2015phonons, paulose2015topological}. Artificial systems such as metamaterials can be designed in view of stable emergent functionalities. 

It should be noticed that identifying topological protection in the above sense is more specific and different from analyzing the important role that in general the network topology (in the sense of network connectivity) plays for the dynamics; see, for example, ref.s~\cite{szolnoki2009topology, helbing2010evolutionary, perc2013evolutionary, szolnoki2014cyclic} in the context of game theory. 

Less explored is the possible role of topological protection in the context of natural biological systems. In view of the ubiquitous stochasticity  in biochemical reactions, their fluctuations in space and time, the well functioning of genetic, protein, cellular, or neuronal  networks comes as a surprise, and to date it is not well understood in all its facets. In ref.~\cite{tang2021topology} two-dimensional stochastic networks are considered that consist of minimal motifs which represent out-of-equilibrium cycles at the molecular scale, but lead to chiral edge currents, possibly associated with long temporal and spatial scales. More concretely, in ref.~\cite{zheng2024topological} the authors consider a model for KaiC, which regulates the circadian rhythm in cyanobacteria, and demonstrate topological protection of coherent edge oscillations. In contrast to other KaiC models, the model of ref.~\cite{tang2021topology} displays the interesting (since counterintuitive) option of a parameter regime with increased precision at reduced cost.

In general, possible applications of edge activities in the biological context  are in providing stable clocks, mass transport, growth processes, or synchronization.
Topologically protected synchronization \cite{sone2022topological} can be achieved, when nonlinear on-site dynamics of Stuart-Landau oscillators are linearly coupled  by interaction Hamiltonians with topological features in momentum space. In ref.~\cite{qi2006topological},  two choices  for the interaction Hamiltonian are non-Hermitian, both inspired by Chern-insulator models, but differently combined, and one choice is Hermitian, resulting from the first choice by a unitary transformation \cite{sone2022topological}. The resulting dynamics are chaotic in the bulk and synchronized oscillations along the edges of a two-dimensional grid. In particular, as result  of the nonlinear on-site dynamics, extra topological edge modes may emerge, termed exceptional edge modes (see also \cite{sone2020exceptional}). Possible applications are the generation of desired topological synchronization patterns and the detection of disordered oscillations in artificial systems.

In this paper, we consider on-site dynamics of Stuart-Landau oscillators, as well as two other paradigmatic nonlinear models of biochemical oscillations, the activator-repressor model \cite{nakajima2005reconstitution, rust2007ordered, goldbeter1991minimal, pomerening2005systems, danino2010synchronized}, and the brusselator model \cite{nicolis1977prigogine} (for other coupled versions see \cite{biancalani2010stochastic, kolinichenko2020multistability, osipov2000stochastic, pietras2019network}). Here, in view of having the chance of nontrivial topological effects, we choose  the linear coupling of these on-site models  according to the Su-Schrieffer-Heeger (SSH) model  on a two-dimensional square lattice \cite{su1979solitons, lieu2018topological}. The SSH-model is one of the simplest models to study a topological phase transition between topologically trivial and non-trivial phases, which are characterized by Zak phases. In the SSH-model, the phase transition occurs when the ratio between inter-and intracell electron hopping is tuned. Once intercell hopping becomes larger  than intracell hopping, the Zak phase becomes nonzero, protected edge states then result from a mechanism that is termed bulk-boundary correspondence \cite{hasan2010colloquium, qi2011topological}. This correspondence relates robust edge states to bulk topological invariants.\\
The bulk-boundary correspondence was first established in Hermitian systems. In non-Hermitian systems it need not hold in its usual form of Bloch band theory and topological invariants of Bloch bands \cite{xiong2018does, xiao2020non}. \\
Our effective Hamiltonian, derived from linearized non-linear on-site dynamics and linear SSH-coupling of individual units is non-Hermitian. Therefore, from the perspective of the SSH-model, we are interested in the effect of additional on-site dynamics on the occurrence of topologically protected  edge states and the mechanism behind their protection. For the arrangement on the grid we imitate the SSH-coupling with intracell and intercell couplings of different strength. On a square lattice each cell is composed of four coupled nonlinear units with their own on-site dynamics (note that ``site" of ``on-site'' refers to a single node on the grid, not to a grid cell).\\
As we shall see, from a certain ratio of intra-to intercell couplings on, we find strong oscillatory activity restricted to the boundaries, while units in the bulk approach oscillation-death (OD-)states \cite{koseska2013oscillation}. This clear spatial separation between incoherent oscillatory behavior at the edges and quiescence in the bulk bears resemblance to chimera states — dynamical regimes where coherent and incoherent domains coexist within a single system. In our context, the coexistence of rhythmic edge activity and bulk oscillation death can be viewed as a form of frequency chimera-like states. Such states have been reported across diverse systems, including adaptive neuronal networks \cite{wang2020chimeras} and coupled oscillators \cite{faghani2018effects,wei2018nonstationary,mehrabbeik2021synchronization,parastesh2019chimera,hussain2021synchronization}, where they have been analyzed using tools such as bivariate local order parameter decomposition \cite{parastesh2020detecting}.\\
 In general, apart from certain coupling intervals, the oscillations along the boundaries are not synchronized. The Zak phases of the dispersion bands for the bulk system approach a nonzero value, indicating the topological origin of the oscillatory edge states. They vanish in the topological trivial phase which occurs for OD-states in the bulk and along the edge.\\
For one of the different  on-site dynamics we also find that a single switch of an appropriate system parameter can interchange the ``frozen" and the oscillatory areas on the grid. In general it is the ratio between weak and strong couplings that decides about the value of the Zak phases.\\
The paper is organized as follows. In Sec.\ \eqref{sec2}, we define the on-site dynamics as well as the lattice construction, inspired by the SSH-model.  
Section \eqref{sec3} contains the phenomenology obtained from numerical simulations for Stuart-Landau oscillators. We discuss trivial edge effects as well as the weak-coupling limit which provides a heuristic understanding of edge-localized oscillations, discussed in more detail for Stuart-Landau oscillations including their robustness. In Sec.\ \eqref{sec4}  we derive an effective Hamiltonian in momentum space and analyze its symmetries at the example of Stuart-Landau oscillators. We calculate the Zak phases to trace back the topological protection of the edge states. Section \eqref{sec5} presents results from numerical simulations with different on-site dynamics. Section \eqref{sec6}  contains the summary and conclusions. The appendices contain  details of the Hamiltonian's symmetries and their explicit form for the coupled  activator-repressor model and the brusselators.

\section{The Model}\label{sec2}

\par To design a lattice that supports topological edge states, we begin with a structure inspired by the Su-Schrieffer-Heeger (SSH) model \cite{su1980soliton}. This model describes a system where alternating hopping amplitudes between neighboring sites create conditions for the emergence of topological edge states. Specifically, when the hopping amplitude between different unit cells exceeds that within a single unit cell, the Zak phase becomes nonzero, leading to the appearance of topologically protected edge states due to the bulk-boundary correspondence.  

\begin{figure}[htp]
	\centerline{\includegraphics[width=0.5\textwidth]{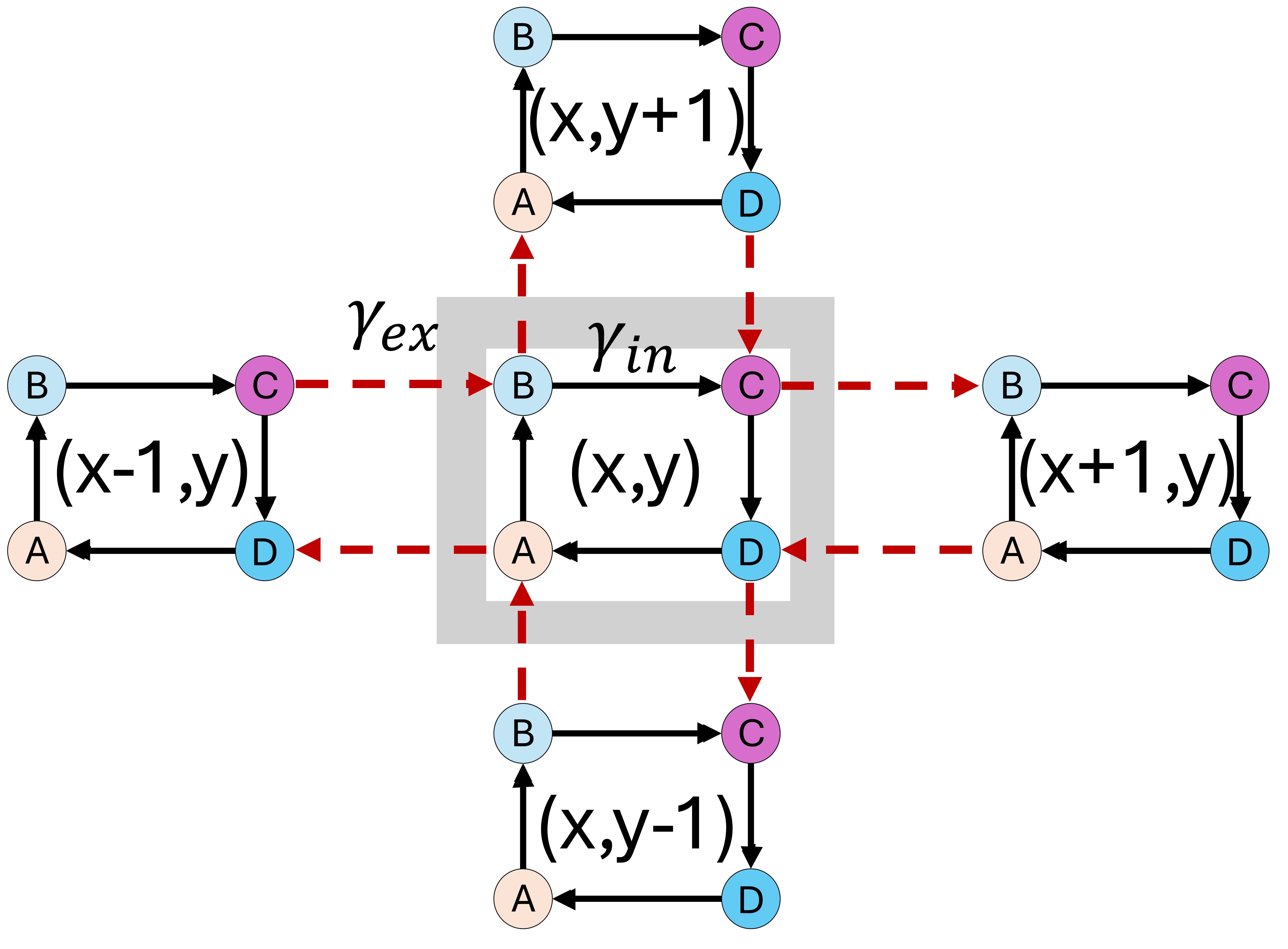}}
	\caption{{\bf Schematic representation of the designed lattice inspired by the SSH model}: A unit cell consists of four oscillators labeled as $A$, $B$, $C$, and $D$, which are 
coupled to each other with $\gamma_{in}$ (black solid lines). One such cell, located at position $(x,y)$,  is connected to its four nearest neighbors at positions $(x+1,y)$, $(x-1,y)$, $(x,y+1)$, and $(x,y-1)$ through 
coupling $\gamma_{ex}$, represented by red dashed lines.
	}
	\label{Picture1}
\end{figure}

Following this idea, we apply the same coupling approach to a system of nonlinear oscillators. The schematic in Fig.\ \eqref{Picture1} illustrates our designed geometry, where a single unit cell consists of four  coupled oscillators, in our case strongly coupled of strength $-s=\gamma_{in}$(black solid lines). The weak coupling between different cells, represented by red dashed lines, has a strength of $-w =\gamma_{ex}$. Each unit cell in the bulk interacts with its four neighboring cells through weak coupling.  

Note that in comparison to the SSH-model, the role of strong and weak couplings here is interchanged with respect to the identification of a unit cell. This is for convenience. The reason is that the stationary state of a strongly coupled unit cell almost coincides with the stationary state of the entire grid in the limit of weak external couplings in the bulk and for open boundary conditions. We make use of this stationary state in the linear stability analysis. (When the weak and strong couplings are then interchanged, the topological characteristic will also change as in the SSH-model, see Sec.\ \eqref{sec4}). In our case, the distinction of unit cells, made of four internally coupled oscillators, and coupled externally to their nearest neighbor cells, looks first artificial: The on-site dynamics are the same at every site on the grid  and the couplings within and between the cells are physically also of the same type, just differing by their strength, but this makes the effect we are interested in.

The dynamics of each individual $i$-th oscillator on the lattice are governed by the following equations:  
\begin{equation}
	\dot{X}_i = F_i(X_i, Y_i), \quad \dot{Y}_i = G_i(X_i, Y_i).
	\label{eq:XY_system}
\end{equation}
These equations describe the time evolution of the state variables \(X_i\) and \(Y_i\), where \(F_i\) and \(G_i\) represent the underlying continuously differentiable nonlinear functions governing the on-site dynamics.  

We consider a lattice consisting of a total of \(L = L_x \times L_y\) units arranged on a two-dimensional grid. To achieve the desired state of topologically protected edge oscillations, both the number of rows, \(L_x\), and the number of columns, \(L_y\), must be even. In the later sections, we will discuss why an odd number of rows or columns prevents the emergence of this state (see Sec.\ \eqref{subsec32}). The dynamical equation of the \(i\)-th oscillator $(i=1,2,3,\cdots,L)$ in the coupled system is given by 
\begin{equation}
	\begin{aligned}
		\dot{X}_i &= F_i(X_i, Y_i) - \sum_{j=1}^{L} \mathcal{A}_{ij} X_j, \\  
		\dot{Y}_i &= G_i(X_i, Y_i),
	\end{aligned}
	\label{eq:coupled_system}
\end{equation}
where $\mathcal{A}_{ij}\in\{-s,-w,g_i, 0\}$, and $s,w,g_i\ge0$, represents the components of the coupling matrix that defines the interaction between the oscillators that we choose in general to be directed. 
The diagonal elements, \(\mathcal{A}_{ii} = g_i\), represent additional linear self-coupling terms of the oscillators, usually not included in the standard form of the on-site dynamics, which we allow to be different from zero in view of their possible effect on the stationary state of the coupled system.  Unless explicitly mentioned otherwise, the parameter $g_i$ is assigned a fixed non-negative uniform value over the grid  and remains unchanged throughout the analysis. In the context of parameter mismatch, $g_i$ will become site-dependent.
 We choose open boundary conditions to observe edge effects and periodic boundary conditions to determine bulk physics.

For the on-site dynamics, we first consider Stuart-Landau (SL) oscillators \cite{kuramoto2003chemical} as normal form of Andronov-Hopf bifurcations,  given by  
\begin{equation} \label{eqL}
	\begin{aligned}
		F(X_i, Y_i) &= \left[\alpha_i - \beta_i \left(X_i^2 + Y_i^2\right)\right] X_i - \omega_i Y_i, \\
		G(X_i, Y_i) &= \left[\alpha_i - \beta_i \left(X_i^2 + Y_i^2\right)\right] Y_i + \omega_i X_i.
	\end{aligned}
\end{equation}
Each isolated $i$-th oscillator can exhibit a self-sustained limit cycle with frequency $\omega_i$ and amplitude $\sqrt{\dfrac{\alpha_i}{\beta_i}}$, where $\dfrac{\alpha_i}{\beta_i} > 0$ for each $i=1,2,3,\cdots,L$. 

In less detail we will present  results for two other nonlinear on-site dynamics, which individually serve as prototypes for many applications in biological systems. These are activator-repressor units \cite{krishna2009frustrated, kaluza2010role} and brusselators \cite{nicolis1977prigogine}.

Unlike SL-oscillators, which exhibit limit-cycle behavior governed by a simple nonlinearity, the activator-repressor system includes a feedback mechanism. The dynamics of the \( i \)-th oscillator are given by:  
\begin{equation} \label{BFU}
	\begin{aligned}
		\dot{X}_i &= F(X_i, Y_i) = \frac{\alpha_i}{1 + \frac{Y_i}{K}} \cdot \frac{b + X_i^2}{1 + X_i^2} - X_i, \\
		\dot{Y}_i &= G(X_i, Y_i) = \gamma (X_i - Y_i).
	\end{aligned}
\end{equation}  
Here, \( X_i \) and \( Y_i \) represent the activator and repressor concentrations, respectively. The parameter $\gamma$ is the ratio of the half-life of $X_i$ to that of $Y_i$, $K$ sets the strength of repression of $X_i$ via $Y_i$, and $b$ determines the basal expression level of $X_i$. The first equation describes the nonlinear regulation of the activator, incorporating saturating feedback through the denominator and an additional nonlinear term in the numerator. The second equation ensures coupling between the two variables, with \( \gamma \) controlling the time scale of the relaxation dynamics of the repressor. 
 
The model, considered in \cite{krishna2009frustrated}, served as an effective description of a genetic network, but the motif is quite generic and also found in neural networks.  In the deterministic limit (of infinitely many molecules), such a unit is known to show both excitable and oscillatory dynamics, depending on the choice of parameters \cite{kaluza2010role}. When activator-repressor units are repressively and nonlinearly coupled, a variety of states was found in \cite{labavic2014networks}, ranging from a toggle switch for two repressively  coupled  units to rich collective coherent behavior of individual fixed-point and limit-cycle behavior. There, the option showed up to control the duration of oscillations by monotonic variation of a single bifurcation parameter that may have an experimental counterpart. Thus, a large system of coupled activator-repressor units shows promising features for biological applications.
 
Differently from this previous work \cite{labavic2014networks}, here, by introducing attractive and linear couplings between these units with alternating strength, we want to explore whether  oscillations can be restricted to edges as for SL-oscillators. 

The brusselator  \cite{lefever1971chemical} is one of the simplest chemical models  for autocatalytic reactions such as the Belousov–Zhabotinsky reaction. When coupled via diffusion, the brusselator undergoes a Turing bifurcation resulting in the formation of Turing patterns like spots, stripes, and spirals in two spatial dimensions.

The dynamics of the $i$-th oscillator of the brusselator model are given by:
\begin{equation} \label{Brusselator}
	\begin{aligned}
		\dot{X}_i &= F(X_i, Y_i) = a_i - (b_i+1)X_i + X_i^2 Y_i, \\
		\dot{Y}_i &= G(X_i, Y_i) = b_i X_i - X_i^2 Y_i.
	\end{aligned}
\end{equation}
The state variables represent concentrations of chemical species, which must remain non-negative. Individually, also the brusselator exhibits limit-cycle oscillations as long as $b_i>1+a_i^2$ \cite{lefever1971chemical}. 

\section{Phenomenology}\label{sec3}
First we present results for coupled SL-oscillators as function of varying (in particular common)  coupling strength, chosen to be the same between external and internal  couplings, followed by a focus on the weak-coupling limit in the extreme case of vanishing weak coupling. Turning on weak, but non-vanishing coupling, we observe edge states and investigate the robustness of the observed edge-oscillations.

At a glance, in most cases we observe either oscillations (in general not synchronized), or OD-states. As the name suggests, OD-states result from strongly coupled oscillators which block each other towards oscillation death. Even when the  oscillation amplitudes are not exactly zero, but smaller than a model-dependent low threshold while approaching fixed-point values, we term theses states OD-states. The oscillations may be restricted to the boundaries, where they are completely incoherent or form chimera-like states with respect to their frequencies, or oscillations extend over the entire lattice, or all units are in OD-states, as we shall see. We use periodic boundary conditions to determine bulk behavior and open boundary conditions to simulate edge oscillations. For the numerical integration we use the Runge-Kutta-Fehlberg method, usually  for $t = 10000$ time steps with an integration step size of $\delta t = 0.01$.
Unless otherwise stated, a standard choice of parameters is $\beta_i = 1.0$, $\alpha_i = \alpha = 1.0$, and $\omega_i = \omega = 3.0$, for which each SL oscillator exhibits sinusoidal-like oscillations within the range $[-1,1]$. Therefore, we set the initial conditions for each oscillator randomly within the region $[-1,1] \times [-1,1]$. 

In Fig.\ \eqref{Picture2} we show a minimal grid of size $4\times 4$ with only one unit cell in the bulk, but two interchanged assignments of strong and weak couplings in (a) and (b), which we will use for discussions in the following. In anticipation of later results, version (a) leads to edge-localized oscillations with strong couplings within a unit cell, or version (b) to OD-states everywhere with weak intracell-coupling.
	\begin{figure}[htp]
	\centerline{\includegraphics[width=0.5\textwidth]{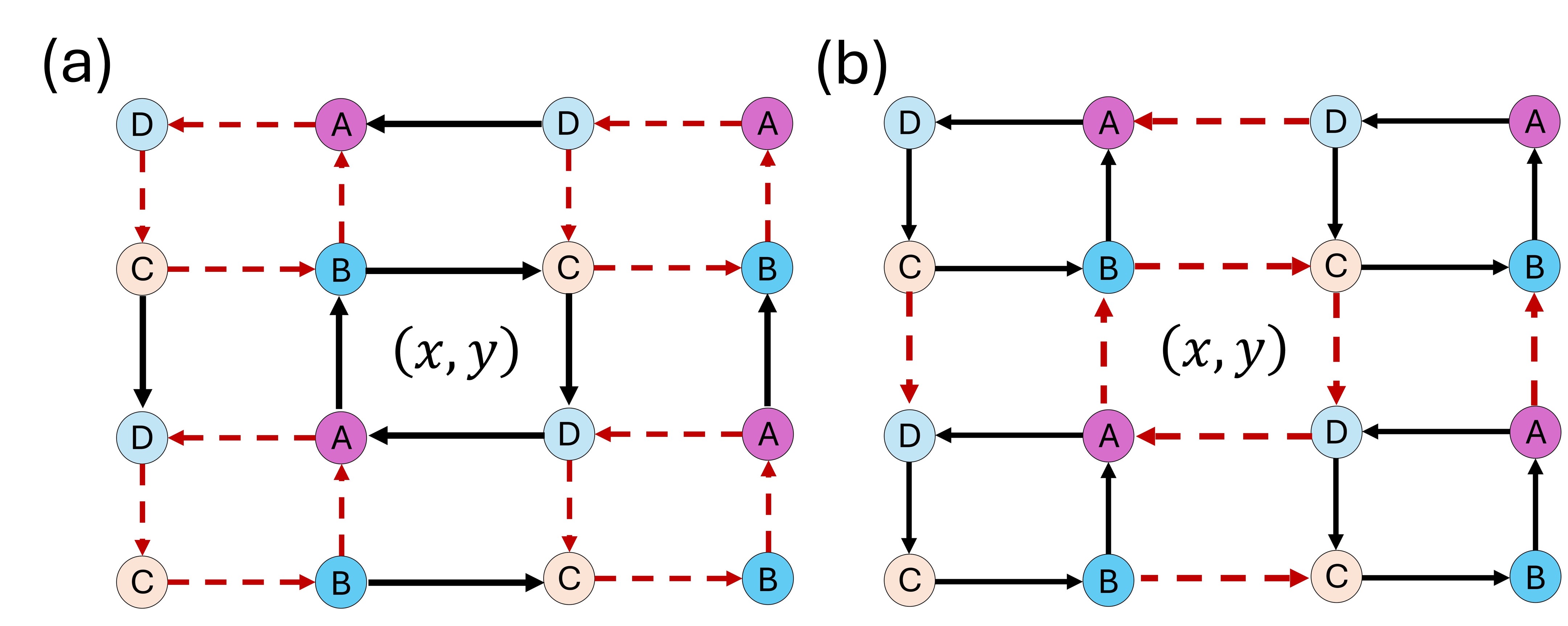}}
	\caption{{\bf Designed lattice structure}: This figure illustrates the coupling scheme inspired by the SSH model on a \(4 \times 4\) lattice. The red dashed lines represent the weak coupling of strength \(-w\), while the solid black lines indicate the strong coupling of strength \(-s\). The four oscillators, located in the interior of the lattice, are referred to as bulk oscillators. The remaining oscillators on the boundary of the lattice are called edge oscillators. Panel (a) shows a unit cell with strong internal coupling, weakly coupled to the boundaries, (b) the case with inverted coupling strengths so that the bulk is strongly coupled to the boundaries. 
	}
	\label{Picture2}
\end{figure}

\subsection{Trivial boundary effects}\label{subsec31}

Since the focus in this paper is on differences between edge and bulk dynamics, one should identify trivial topological effects which merely result from the fact that by definition boundary elements have less interactions to their nearest  neighbors than elements in the bulk. 
For $w=s=0.001$ and other parameters chosen as $g_i=0,\beta_i=1.0,\alpha_i=1.0, \omega_i=3.0$, all units oscillate incoherently (not displayed). For $w=s=8.0$ and other parameters as before, all units settle into different OD-states with values that depend on the initial conditions. Now, varying $w$ for $s=8.0$ to other values, we obtain Fig.\ 3(a) for  oscillators located along the edge of the grid (edge oscillators) and (b) for  oscillators located in the bulk of the grid (bulk oscillators).

\begin{figure}[htp]
\includegraphics[width=0.5\textwidth]{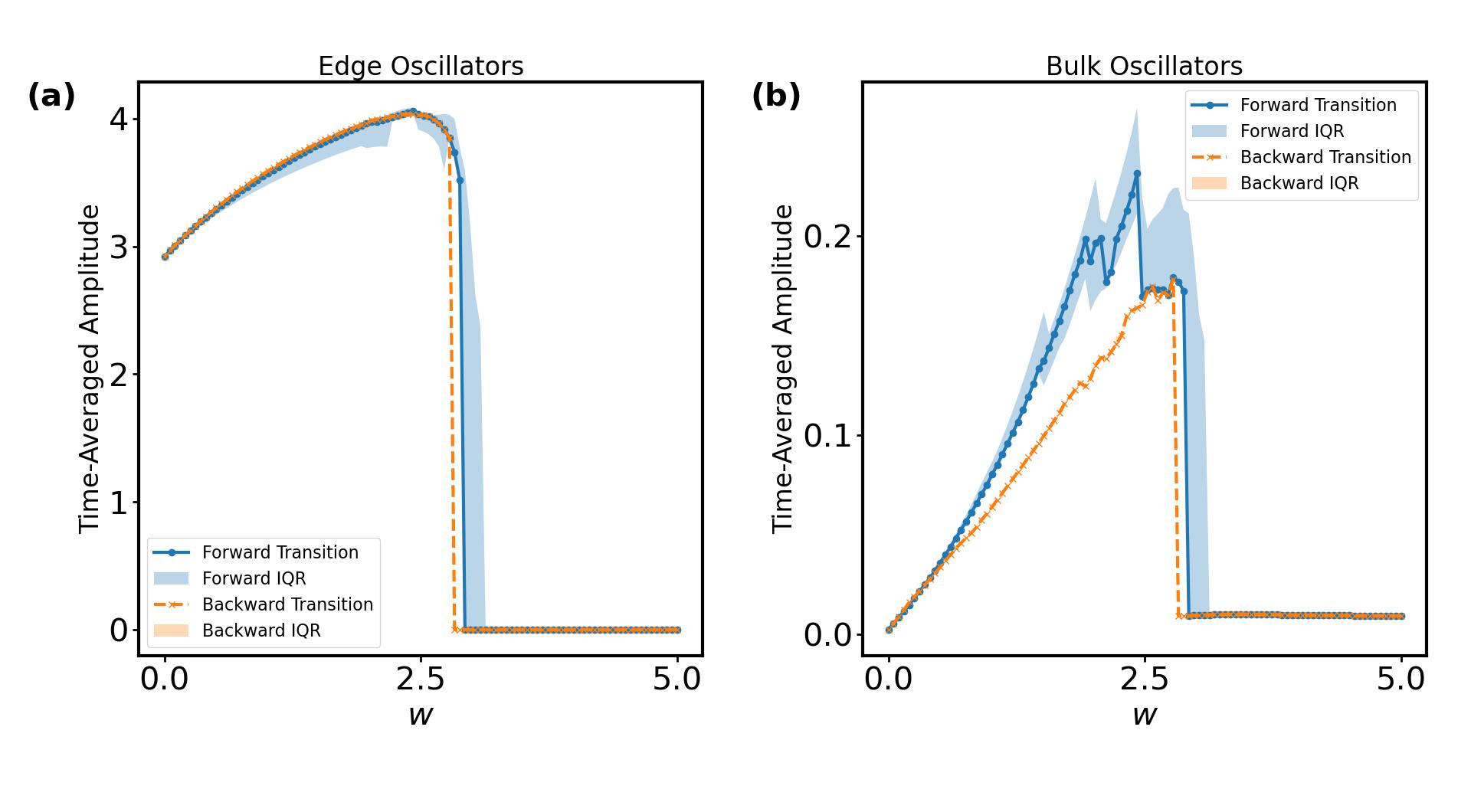}
	\caption{{\bf Oscillations in the bulk are damped as compared to those on the edge.}
 Time-averaged amplitude of oscillators as a function of coupling strength \( w \) in a \( 10 \times 10 \) lattice of coupled SL-oscillators, (a)  for edge oscillators, (b) for bulk ones. 
 Parameters are: \( g = 0 \), \( \beta_i = 1.0 \), \( \alpha_i = \alpha = 1.0 \), $s=8.0$, and \( \omega_i = \omega = 3.0 \). For further explanations see the text.	
	}
	\label{Picture14}
\end{figure}
	
To generate this figure, we first identify  edge and bulk oscillators for a grid of $10\times 10$ nodes. The time-averaged amplitude \cite{dixit2021emergent,sharma2012amplitude} of each oscillator is determined by calculating the difference between its maximum and minimum oscillation values for a given \( w \), followed by averaging this difference over all oscillators within the respective group.
The time-averaged amplitude is defined as:
	\begin{equation}\label{Time-average amplitude}
			\frac{1}{N_{\text{edge}}} \sum_{j=1}^{N_{\text{edge}}} \left[ \langle X^{\max}_j  \rangle_t - \langle X^{\min}_j  \rangle_t \right], 
\end{equation}
	analogously for the bulk oscillators, with $N_{\text{edge}}$ being the number of oscillators on the edge.
	Here, \( \langle \cdots \rangle_t \) denotes the time average taken over a sufficiently long duration.
	
For both forward and backward transitions, the simulations begin with random initial conditions sampled from \([-1,1] \times [-1,1]\). The final state of the system \eqref{eq:coupled_system} at a given \( w \) is then used as the initial condition for the subsequent iteration. This  is repeated across five independent runs, each starting with different initial conditions. The median and interquartile range (IQR, spanning the $25$th to $75$th percentile) of the oscillation amplitude are plotted in  Fig.\ \eqref{Picture14} to illustrate variations across different realizations. 
The main difference between edge and bulk units due to trivial boundary effects shows up in the magnitude of their oscillations. The amplitude of edge oscillators is an order of magnitude larger than that of oscillators in the bulk, but from a certain coupling strength on ($w=2.83$), both sets of units simultaneously go to OD-states. Due to the choice of scale, the figure does not show the actual onset of bulk oscillations, after being in an OD-state for sufficiently small values of $w$ such as $w=0.001$. In view of that, it is of interest, how large the weak coupling may become to keep the bulk units in OD-states, while sharply restricting the oscillations to the edge, before both types oscillate together and then together go to a resting state in an explosive transition \cite{boccaletti2016explosive,leyva2012explosive,dixit2021dynamic}. 

\subsection{The weak-coupling limit}\label{subsec32}
It is instructive to first consider the weak-coupling limit $w\rightarrow 0$ of a grid that is otherwise designed according to the SSH-model or Fig.\ \eqref{Picture2}. Weak-coupling everywhere is not sufficient to restrict the oscillations to the boundaries, as mentioned before.
	
Consider Fig.\ \eqref{Picture2}(a), which consists of four bulk oscillators  and twelve edge oscillators. When $w=0$, the system \eqref{eq:coupled_system} consists of three distinct types of oscillators. Firstly, the four corner oscillators remain completely isolated and oscillate within the range $[-1,1]$. Secondly, the four bulk oscillators  are strongly coupled with $s=8.0$ and undergo oscillation-death, converging to identical  fixed-point values which turn out to be locally stable. Using Newton's method with suitable positive initial conditions, we compute these steady states, which are found to be $(2.5202,1.1377)$. The eigenvalues of the Jacobian at these steady states, shown in Fig.\ \eqref{Picture10}, confirm their local stability. Notably, if the Newton method is initialized with negative coordinates, the four bulk oscillators instead converge to the mirror steady state $(-2.5202,-1.1377)$, although the eigenvalue structure remains unchanged, as shown in Fig.\ \eqref{Picture10}. The presence of negative real parts in all eigenvalues ensures that these OD-states are locally stable. 
		\begin{figure}[htp]
		\centerline{\includegraphics[width=0.4\textwidth]{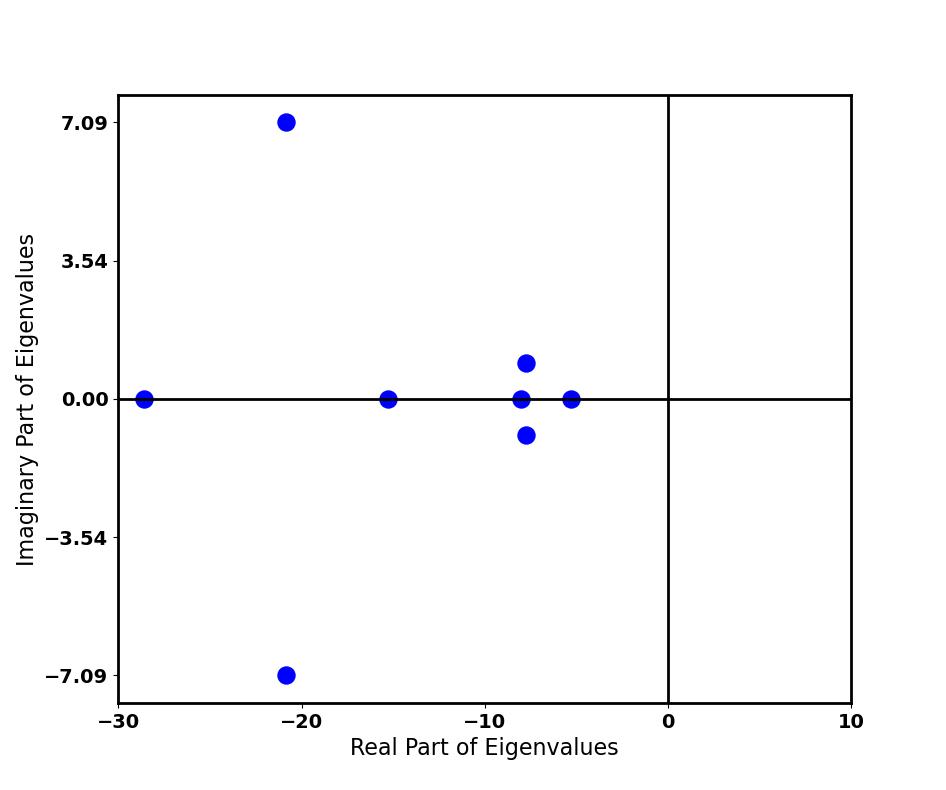}}
		\caption{{\bf Stability analysis of a strongly coupled unit cell of SL-oscillators}: Plotted are the eight eigenvalues of the Jacobian matrix evaluated at the OD-states.  The negative real parts of all eigenvalues confirm the local stability of these steady states, indicating that the oscillators cease to oscillate. Parameters: $s=8.0$, \(\beta_i = 1.0\), $g=0$,  \(\alpha_i = \alpha = 1.0\), and \(\omega_i = \omega = 3.0\).
		}
		\label{Picture10}
	\end{figure}
	\par The remaining oscillators form four pairs of unidirectionally coupled units. Within each pair, one oscillator exhibits low-amplitude oscillations confined to $[-1,1]$, while the other one oscillates with a higher amplitude in the range $[-2,2]$, the asymmetry in amplitudes is due to the directed coupling. 
	Thus, in the decoupled case with respect to the weak links, we find oscillations only along the boundary.
For a coupling assignment according to Fig.\ \eqref{Picture2}(b), all units go to OD-states.

\par In the weak-coupling limit it becomes also obvious why the same arrangement of couplings on a lattice as in Fig.\ \eqref{Picture2}(a) with an odd number of rows or columns fails to sustain edge oscillations, or, equivalently, why an arrangement according to Fig.\ \eqref{Picture2}(b) fails as well. Consider again Fig.\ \eqref{Picture2}(a). We modify this structure by removing the last column and we set \(w=0\). As a result, the edge oscillators now include nodes, which become part of a set of four strongly coupled oscillators with $s=8$. Consequently, these oscillators converge to OD-states, as discussed earlier. 
This explains why maintaining an even number of rows and columns for arbitrary finite lattice size in an arrangement according to Fig.\ \eqref{Picture2}(a) is essential for the emergence of edge oscillations.

\subsection{Edge-localized oscillations}\label{subsec33}
\par From Fig.\ \eqref{Picture2}(a) and vanishing coupling we know how to restrict oscillations to elements of the edges of the grid. How long can we maintain this restriction when the weak coupling is turned on, that is, towards which values of the weak coupling? Certainly, when the weak coupling gets too strong in relation to the strong one, there is no such localization of oscillations as we have seen in Fig.\ \eqref{Picture14}.
For the standard choice of initial conditions and of other parameter values, we next set $s = 8.0$ and $w = 0.001$, and consider a $10 \times 10$ lattice of coupled SL-units. Additionally, we turn off the self-feedback by setting $g_i = g = 0$. 

\begin{figure}[htp]
	\centerline{\includegraphics[width=0.5\textwidth]{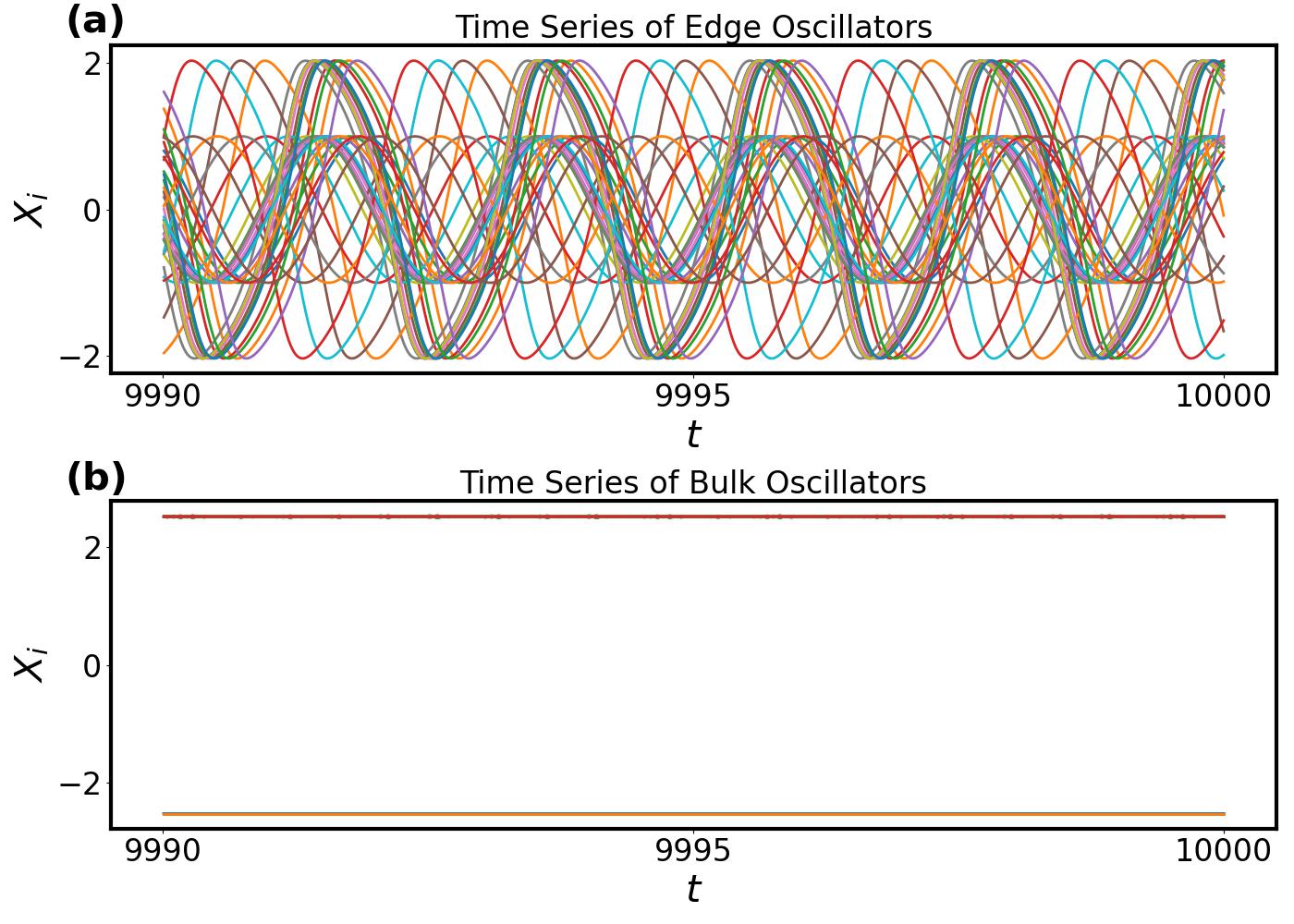}}
	\caption{{\bf Nontrivial edge oscillations on a \(10 \times 10\) lattice with coupled SL units}:  Edge oscillators sustain their oscillations, while the bulk oscillators converge to OD-states, despite all oscillators sharing identical parameters: \(\beta_i = 1.0\), \(\alpha_i = \alpha = 1.0\), and \(\omega_i = \omega = 3.0\).  Initial conditions for each oscillator are randomly chosen from \([0,1] \times [0,1]\). The coupling parameters are set as \(s=8.0\), \(w=0.001\), and \(g=0\).
	 }
	\label{Picture3}
\end{figure}

\par The $64$ bulk oscillators converge to two-cluster OD-states, where these two clusters are symmetrically distributed with respect to the origin (Fig.\ \eqref{Picture3}(b)). Depending on the initial conditions, the bulk oscillators join one of these clusters.
On the other hand, the $36$ edge oscillators exhibit oscillations with varying amplitudes, as shown in Fig.\ \eqref{Picture3}(a). While each isolated SL-oscillator, given the chosen parameter values, oscillates within the range $[-1,1]$, some of the edge oscillators remain within this range, whereas others exhibit higher-amplitude oscillations, reaching up to $[-2,2]$. This behavior is observed for any choice of random initial conditions chosen within the range $[-1,1] \times [-1,1]$ as well as for larger lattice sizes  with an even number of rows, $L_x$, and an even number of columns, $L_y$ and more cells in the bulk. 

In particular, edge oscillations are also found in the absence of any self-feedback term, i.e., with $g_i = g = 0$. The observed findings are consistent with the fact that coupled SL-oscillators may converge to OD-states in the presence of symmetry-breaking coupling, which serves as a necessary condition for the emergence of such coupling-dependent OD-states \cite{koseska2013oscillation,koseska2013transition}. Here, the symmetry-breaking   refers to couplings only in the $X_i$-components rather than in both $X_i$ and $Y_i$.

\begin{figure}[htp]
	\centerline{\includegraphics[width=0.5\textwidth]{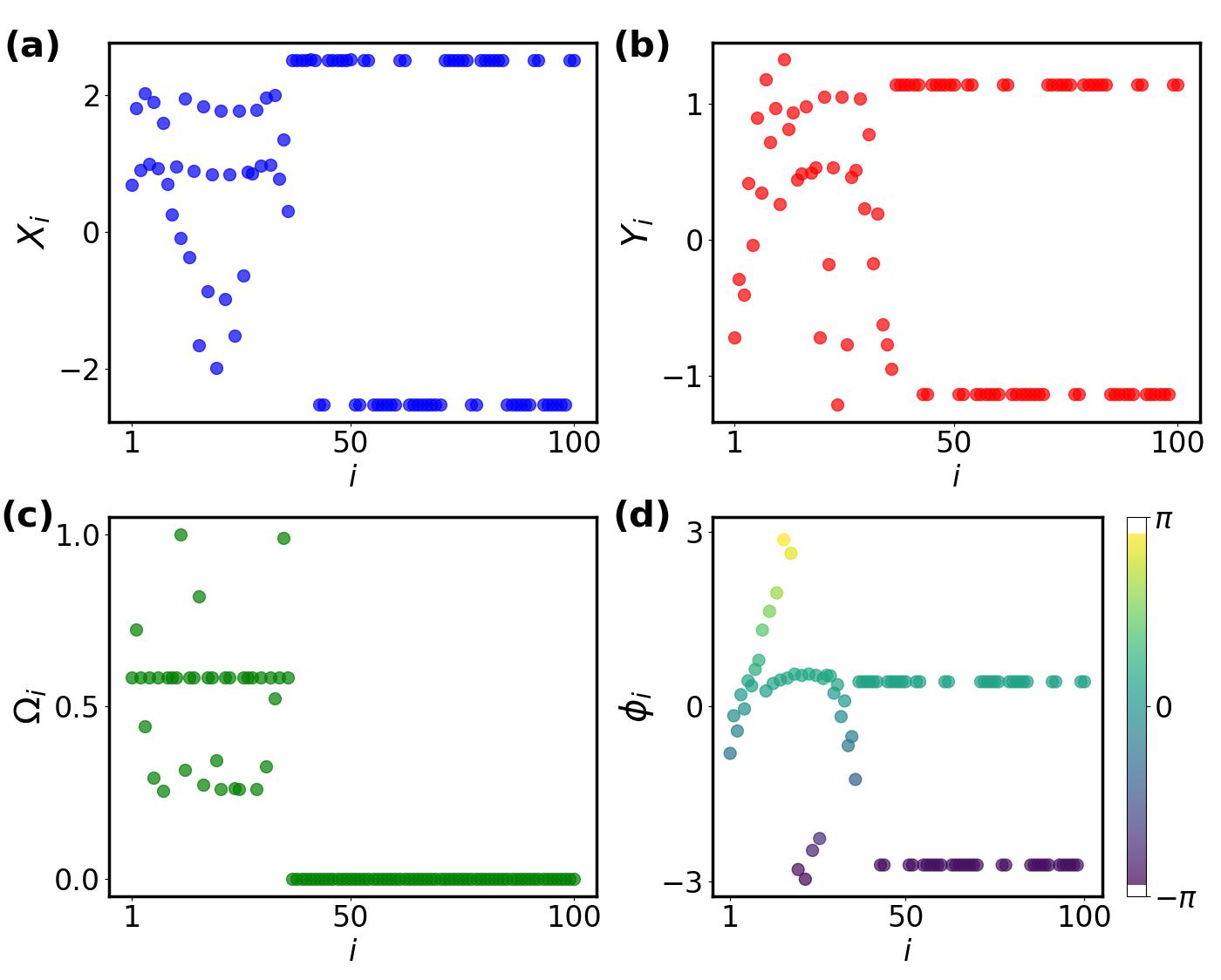}}
	\caption{{\bf Frequency chimera in edge oscillators and OD-states in the bulk}: Snapshots of the state variables at a particular time after a sufficiently long initial transient: (a) $X_i$, (b) $Y_i$, (c) normalized frequency $\Omega_i$, and (d) phase $\phi_i$. The lattice consists of a $10 \times 10$ grid of coupled SL-oscillators, as in Fig.\ \eqref{Picture3}. The $x$-axis is reordered such that the edge-oscillator indices are plotted first, followed by the bulk oscillators. For further explanations see the text.
}
\label{Picture4}
\end{figure}

\par To further characterize the emergent state of the system, we present snapshots of the oscillators' states, including \( X_i \), \( Y_i \), the normalized frequency values \( \Omega_i \), and the phase of each oscillator, \( \phi_i(t) = \arctan{\left(\frac{Y_i}{X_i}\right)} \), at a specific time step after the system has evolved sufficiently beyond its initial transient, as shown in Fig.\ \eqref{Picture4}. The normalized frequency values \( \Omega_i \) for each oscillator are determined by first calculating the phase angles \( \phi_i(t) \) at each time step. 
The phase velocity, which represents the instantaneous frequency of each oscillator, is then computed as the numerical derivative of the phase with respect to time. These instantaneous frequencies are averaged over a sufficiently long time window to obtain the mean frequency for each oscillator. The reason why we average over the frequencies is to avoid misinterpretation: an oscillator that stays quiet for a while might look like it is in an OD-state if only a snapshot is considered.
To normalize the frequencies, the maximum absolute frequency across all oscillators is identified, and each individual frequency is scaled by this maximum value. As a result, the normalized frequencies lie between $0$ and $1$ in Fig.\ \eqref{Picture4} (c). 

\par In each of the subfigures of Fig.~\eqref{Picture4}, we reorder the x-axis to display the oscillator indices in such a way that the edge units are plotted first, followed by the bulk units. Subfigures (a) and (b) in Fig.~\eqref{Picture4} are consistent with the results observed in Fig.~\eqref{Picture3}, where the edge units (the first $36$ unit indices on the $x$-axis) exhibit random scattering, indicating incoherent oscillations, while the bulk units (the last $64$  indices) converge to OD-states or show negligible oscillations around these states. These bulk units settle into two coherent groups depending on their initial conditions. Subfigure (c) further highlights the  feature that  a subset of the edge oscillators exhibit the same normalized non-vanishing frequencies, while the remaining edge oscillators have random, nonzero values. This suggests that the edge oscillators follow a frequency chimera-like behavior \cite{parastesh2021chimeras,majhi2019chimera,abrams2004chimera,khaleghi2019chimera}, where one group of edge oscillators exhibits coherent frequencies, while the other group remains incoherent. This chimera-like behavior holds regardless of the initial conditions chosen from the range $[0,1] \times [0,1]$. It is a special chimera-like state that is topologically protected, as we shall explain in Sec.\ \eqref{sec4}. In general, frequency chimera states can emerge from different mechanisms \cite{parastesh2021chimeras}.  Furthermore, in subfigure (d) of Fig.~\eqref{Picture4}, we plot the phases of each oscillator, which reveals that the edge oscillators display incoherent phases, while the bulk oscillators converge into two distinct phase groups, consistently with subfigures (a) and (b).\\
So far we have  considered open boundary conditions in $x$ and $y$ direction for all results presented so far. If we choose periodic boundary conditions in one of these directions, the  oscillations are confined to the edges with open boundary conditions, not further displayed here.\\
Further explorations of parameter changes are summarized in Appendix \eqref{AppendixA}.

\subsection{Robustness of edge-localized oscillations}\label{subsec34}
The mechanism of topological protection adds upon other possible origins of robustness and resilience in complex networks such as structural design, identification of early warning signals, or the devise of adaptive responses \cite{artime2024robustness, mikaberidze2024consensus}. 
In applications,  when robustness against various kind of perturbations is caused by topological protection, the  desired features (here of the localized edge oscillations) are much less sensitive to the choice of parameter values, since topological invariants are responsible for the protection. Therefore, we check the robustness against an inhomogeneous choice of $\alpha_i,\omega_i,\beta_i$ and $g_i$, as well as against structural defects and a bidirectional component in the couplings.

\subsubsection{Robustness against mismatch of parameters}\label{subsub341}
{\bf Varying $\alpha_i$ between different oscillators}. Specifically, we keep the parameters $\beta_i = 1.0$ and $\omega_i = \omega = 3.0$ fixed, while allowing the parameter $\alpha_i$ to be uniformly distributed around the mean value $\alpha = 1.0$ with a distribution width of $\Delta \alpha = 0.5$. Despite this variation, the overall dynamics remain qualitatively unchanged, figure not displayed. 
In this setup, the number of OD-clusters among the bulk units increases compared to the case with identical units. However, the edge oscillators continue to exhibit sustained oscillations and retain their frequency chimera-like behavior. 

\begin{figure}[htp]
	\centerline{\includegraphics[width=0.5\textwidth]{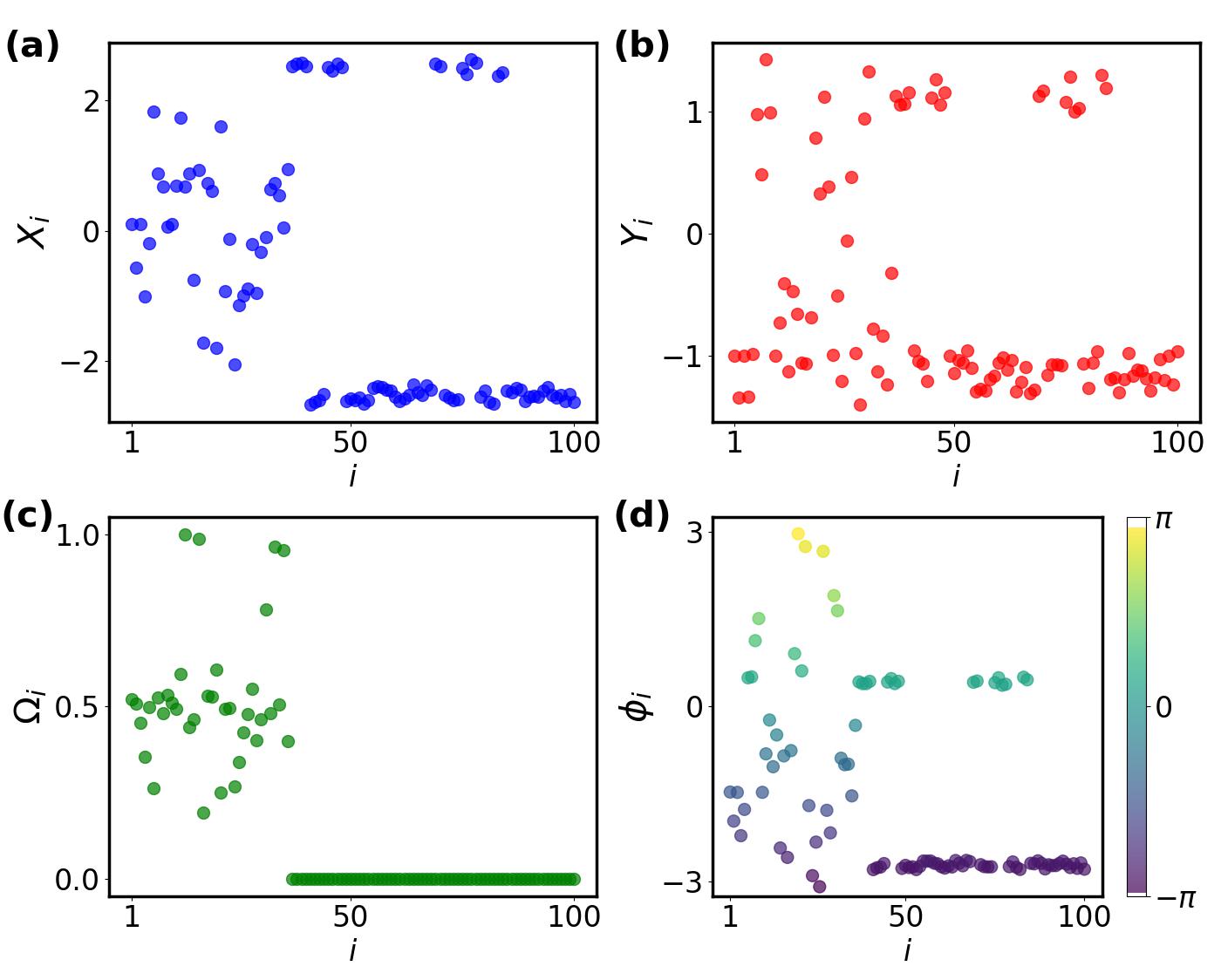}}
	\caption{{\bf Maintenance  of edge-localized oscillations in spite of disruption of frequency chimera in edge oscillators due to inhomogeneous intrinsic frequencies}: We introduce a small parameter mismatch in the intrinsic frequencies $\omega_i$ of the SL oscillators, drawn uniformly around a mean $\omega = 3.0$ with width $\Delta \omega = 0.5$, keeping all other parameters as in Fig.\ \eqref{Picture4}. While edge oscillators continue to oscillate, the frequency disorder disrupts their partial coherence. In contrast, bulk oscillators converge to oscillation death, as indicated by zero normalized frequency $\Omega_i$, though they settle into distinct OD-states. Oscillator indices are reordered to show edge units first. Note that edge and bulk units together still form a frequency chimera-like state.
	}
\label{Picture6}
\end{figure}

{\bf Varying $\omega_i$}. Concretely, we fix the parameters at $\beta_i = 1.0$ and $\alpha_i = \alpha = 1.0$ in Fig.\ \eqref{Picture6}, while introducing parameter mismatch through $\omega_i$, which is uniformly distributed around the mean value $\omega = 3.0$ with a distribution width of $\Delta \omega = 0.5$. In this scenario, the edge oscillators continue to oscillate, whereas the bulk oscillators converge to different OD-states depending on their initial conditions. Notably, the frequency chimera-like state observed earlier among the edge oscillators is now destroyed, as no coherent group can be identified among them (see Fig.~\eqref{Picture6}). With respect to all units, subfigure (c) of Fig.~\eqref{Picture6} illustrates that the normalized frequencies of the edge oscillators remain incoherent with nonzero values, while those of the bulk oscillators are coherent. The bulk oscillators settle into distinct OD-states, resulting in zero normalized frequencies. Although minor oscillations around the OD-states may persist in the bulk units, these oscillations are negligible compared to the amplitude of isolated oscillators. Subfigure (c) also illustrates the emergence of a frequency chimera-like state among all oscillators, where the bulk oscillators form a coherent group, while the edge oscillators remain incoherent.

{\bf Additional linear self-coupling turned on}. We further set all parameter values to \( s = 8.0 \), \( w = 0.001 \), \( \beta_i = 1.0 \), \( \alpha_i = \alpha = 1.0 \), and \( \omega_i = \omega = 3.0 \), and randomly assign each oscillator an additional linear coupling \( g_i \) drawn from the interval \( [0, 1] \). The results remain qualitatively the same: edge oscillators continue to oscillate, while bulk oscillators converge to an OD-state. However, the frequency chimera-like state among the edge oscillators is lost. Despite this, we still observe two distinct groups of oscillators: the edge oscillators are incoherent with nonzero normalized frequencies, while the bulk oscillators are coherent with zero frequencies. In this sense, the system forms a frequency chimera-like state. Unless otherwise stated, the initial conditions for all SL-oscillators in this study are drawn randomly from the interval \( [-1, 1] \times [-1, 1] \).

\subsubsection{Robustness against noise}\label{subsub342}
If we include additive noise to both components $X_i$ and $Y_i$ as random numbers drawn from two different distributions: the standard normal distribution $N(0,1)$ and the uniform distribution $[-1,1]$, the coupled system demonstrates the same behavior as shown in Figs.\ \eqref{Picture3} and  \eqref{Picture4}, as long as the noise strength is less than or equal to $10^{-2}$. We present the results for a fixed noise strength of $10^{-2}$ in Fig.\ \eqref{Picture7} in a different way. 
\begin{figure}[htp]
	\centerline{\includegraphics[width=0.45\textwidth]{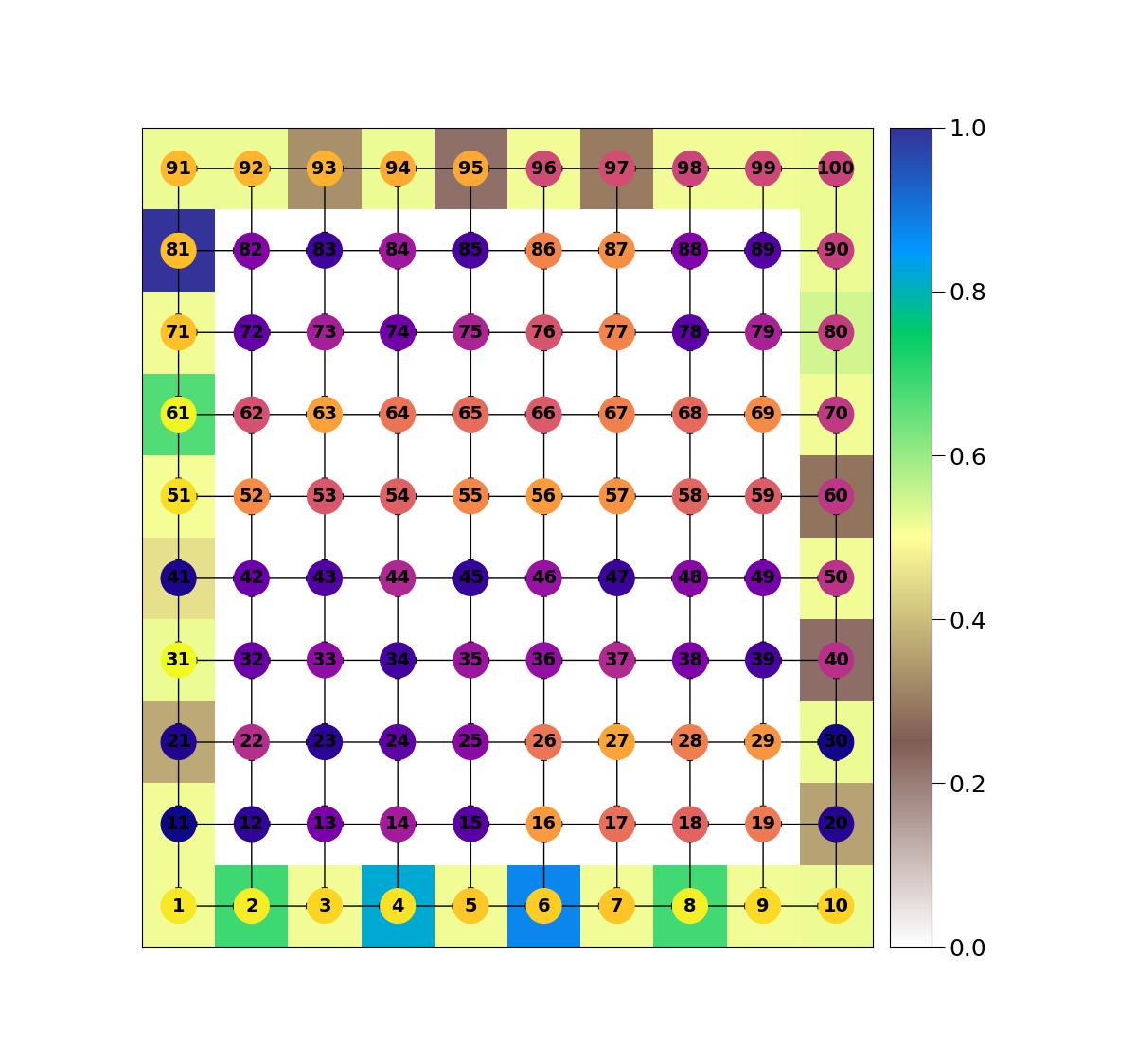}}
	\caption{{\bf Effect of additive noise}:  $10 \times 10$ lattice of SL-oscillators under additive noise on both components with a strength of $10^{-2}$. The nodes are colored according to their time-averaged phases, while the background is colored based on the normalized frequency. The colored background represents the normalized frequency of each oscillator. The parameter values are: \(\beta_i = 1.0\), \(\alpha_i = \alpha = 1.0\), \(\omega_i = \omega = 3.0\), $s = 8.0$, $w = 0.001$, and $g = 0$. The initial conditions are randomly drawn from $[-1, 1] \times [-1, 1]$.
	}
	\label{Picture7}
\end{figure}
In this figure, the time-averaged phases are rounded to six decimal places, and unique phase values are extracted. Each unique phase corresponds to an oscillator, and we assign a color to each node based on its phase. The background is colored according to the normalized frequency. Clearly, the bulk units, which have a zero normalized frequency, exhibit OD-behavior, while the edge oscillators continue oscillating, possessing a nonzero normalized frequency. The colored background in Fig.\ \eqref{Picture7} represents the normalized frequency of each oscillator, providing insight into the variation of intrinsic dynamics across the network. The resulting dynamics remain qualitatively unaltered if additive noise is applied to any one of the component instead of both.

\begin{figure}[htp]
	\centerline{\includegraphics[width=0.4\textwidth]{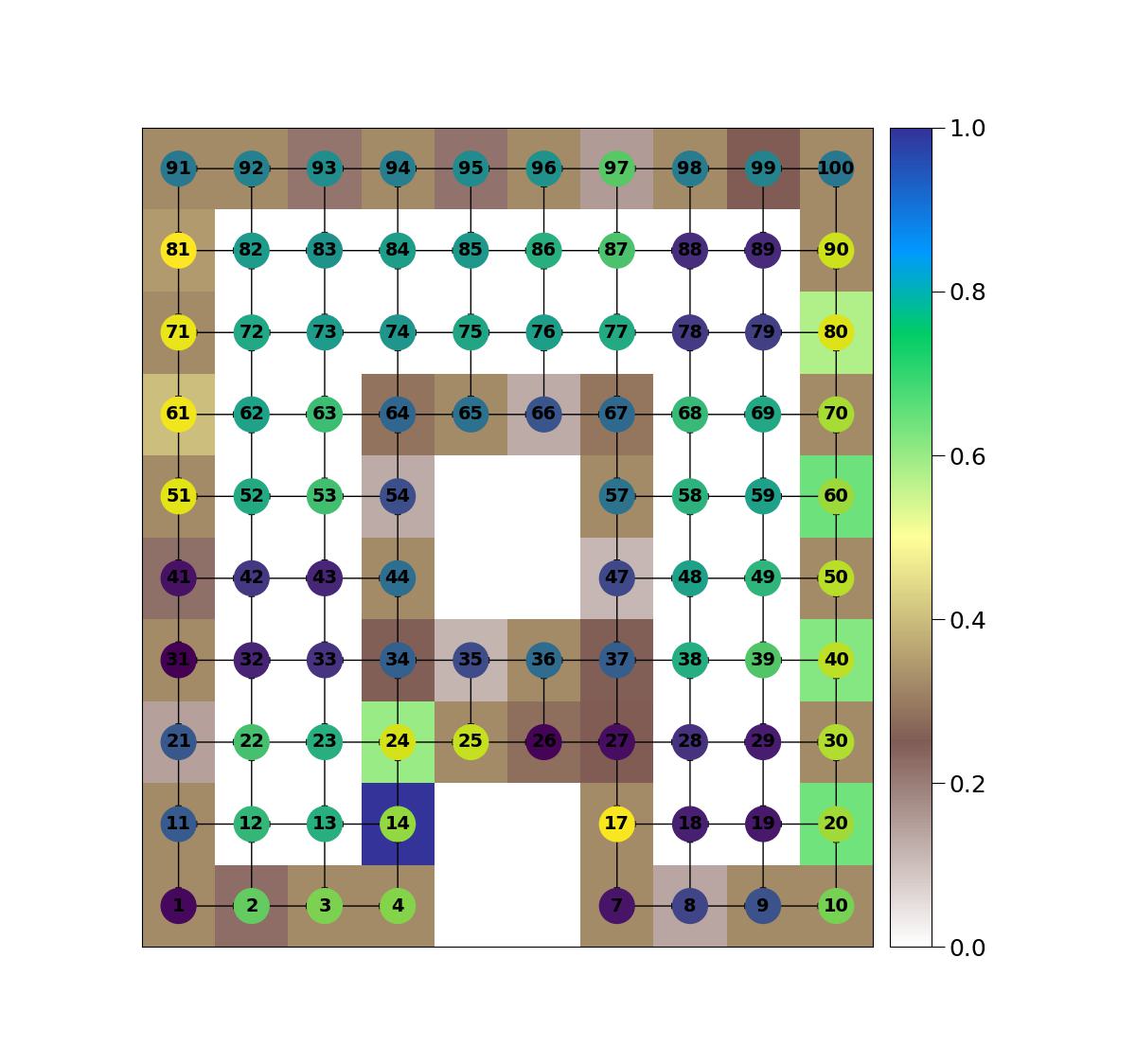}}
	\caption{{\bf Oscillations along a multiply connected edge of a defective lattice}: 
		White areas indicate removed grid points including their associated links, otherwise same color coding as in Fig.\ \eqref{Picture7}. Edge oscillations adapt to the newly created boundaries in the lattice, effectively wrapping around these boundaries. Parameters: $s=8.0$, $w=0.001$, $g=0$, $\beta_i = 1.0$, $\alpha_i = \alpha = 1.0$, and $\omega_i = \omega = 3.0$.
	}
	\label{Picture8}
\end{figure}

\subsubsection{Robustness against structural defects}\label{subsub343}
\par The observed edge oscillations appear to be highly robust against parameter mismatches up to a certain order and additive noise of suitable strength. A natural question arises: what happens if the designed lattice contains inherent defects in the form of missing grid points or a deformed boundary? To explore the impact of such a defected lattice on the system's dynamics, we remove a few oscillators from the bulk 
along with their corresponding connections. Additionally, we delete a few oscillators from the boundary. 

After eliminating these  grid points from the \(10 \times 10\) lattice of coupled SL-oscillators, we numerically integrate the system \eqref{eq:coupled_system} while keeping all other parameters and initial conditions unchanged, as in Fig.\ \eqref{Picture3}. The resulting dynamics is shown in Fig.\ \eqref{Picture8} after a sufficient transient.
The edge oscillators appear to wrap around the defects and continue oscillating with a nonzero normalized frequency. Meanwhile, the bulk oscillators still exhibit a zero normalized frequency, indicating their convergence towards OD-states.

This result suggests to induce defects in a controlled way to achieve oscillating layers within a bulk of resting states at wish. Figure \eqref{Picture110} gives an example for such a possibility.
\begin{figure}[htp]
	\centerline{\includegraphics[width=0.45\textwidth]{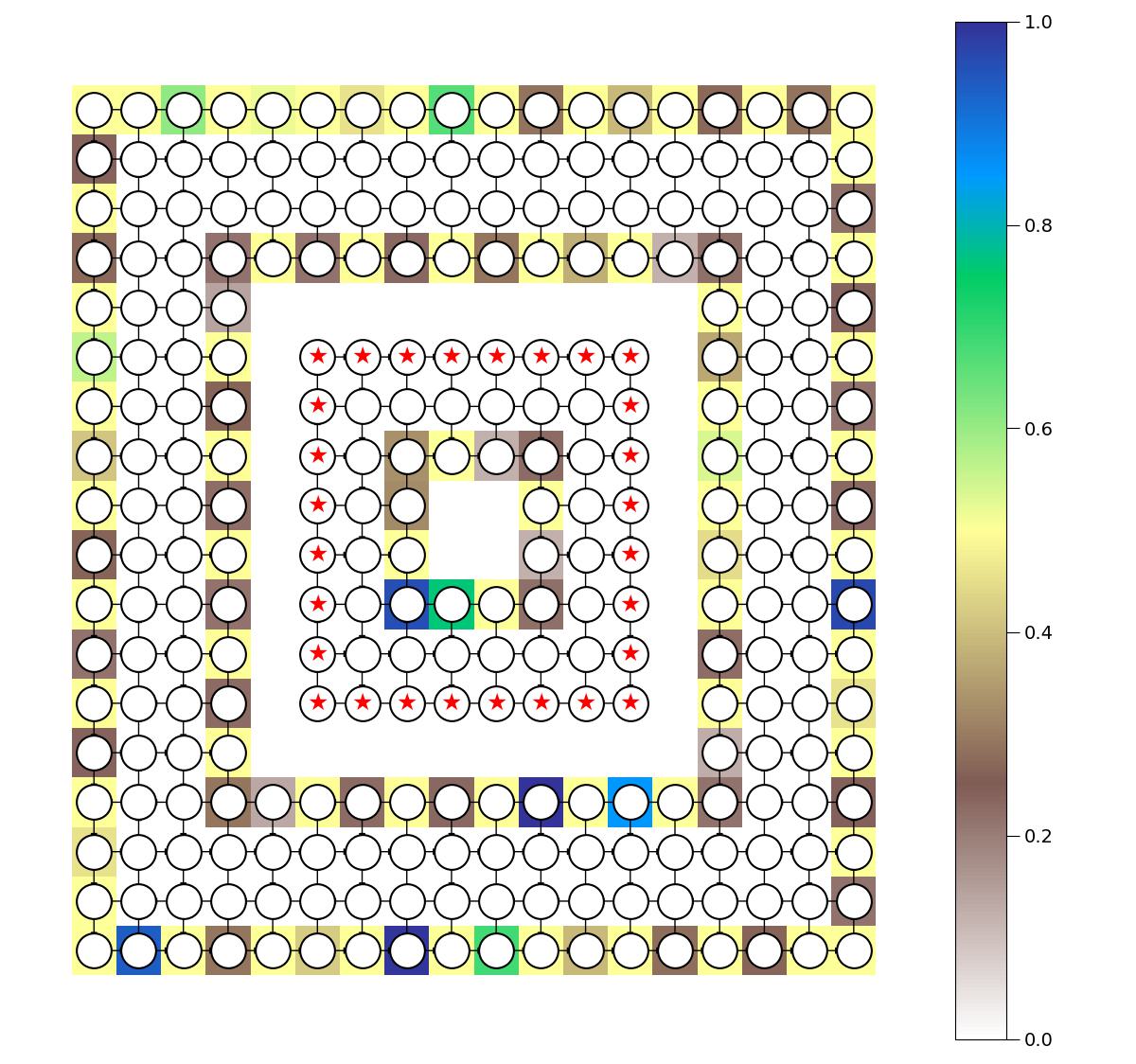}}
	\caption{{\bf Lattice with selectively removed units to restrict oscillations to desired bands}: Units have been removed from specific parts of the lattice, shown in white, creating defects in both the center and surrounding area (white frame). The colored background highlights regions where the remaining units  oscillate with nonzero frequency. In contrast, units without colored background settle into OD-states with zero frequency. Note that units at the outer edge of the white frame oscillate, those at the inner edge (marked with asterisks) are in OD-states in spite of their location at an edge.  The phase color coding of the nodes is omitted in this figure for clarity. The parameters are identical to those in Fig.\ \eqref{Picture8}. For further explanations see the main text.
	}
	\label{Picture110}
\end{figure}

In Fig.\ \eqref{Picture110} we see three layers with colored background, corresponding to edge oscillations, a white square in the center of the figure as well as  a white frame where SL-units are removed. Note that the white frame of defects is surrounded by oscillating units only on one side; on the other side of the white frame, the background of the SL-nodes is white, meaning that these units are in an OD-state. Being located on the edge is not sufficient for a unit to oscillate. This means if one wants to design oscillating  layers  in a bulk of OD-states, ``frozen'' to fixed-points, the choice of blocking areas where units are removed should respect the criteria as derived from the weak-coupling limit: the links orthogonal to the edges which connect the respective bulk and the edges should have weak couplings assigned. This condition is violated for the ``frozen'' edge (with white background color and red asterisks).

\subsubsection{Bidirectional coupling}\label{subsub344}
In analogy to the SSH-model, we have focussed so far on directed couplings. Following the indicated directions of couplings along the edge of the grid, it may suggest a directed transport (clockwise) and opposite to the anti-clockwise ``chirality'' of unit cells. We have not observed any indications for a wave propagating along the edge in a clockwise direction. Therefore the directed version of the couplings seems to be irrelevant for the phenomenon of edge oscillations. However, if we compare the dynamics of a pair of SL-units, coupled in a directed way or in both directions on the edge according to
\begin{equation}
	\mathcal{A}_{ji} =
	\begin{cases}
		-w', & \text{if } \mathcal{A}_{ij} = -w, \\
		-s', & \text{if } \mathcal{A}_{ij} = -s,
	\end{cases}
\end{equation}
where $w' \in [0,w]$ and $s' \in [0,s]$,
we see from Fig.\ \eqref{fig201202} that directed coupling between them facilitates oscillations: We compare two strongly coupled units along the link between two sites on the edge  with directed coupling in (a), and bidirectional coupling of equal strength in (b) for otherwise the same parameters.
\begin{figure}[htp]
\centerline{\includegraphics[width=0.46\textwidth]{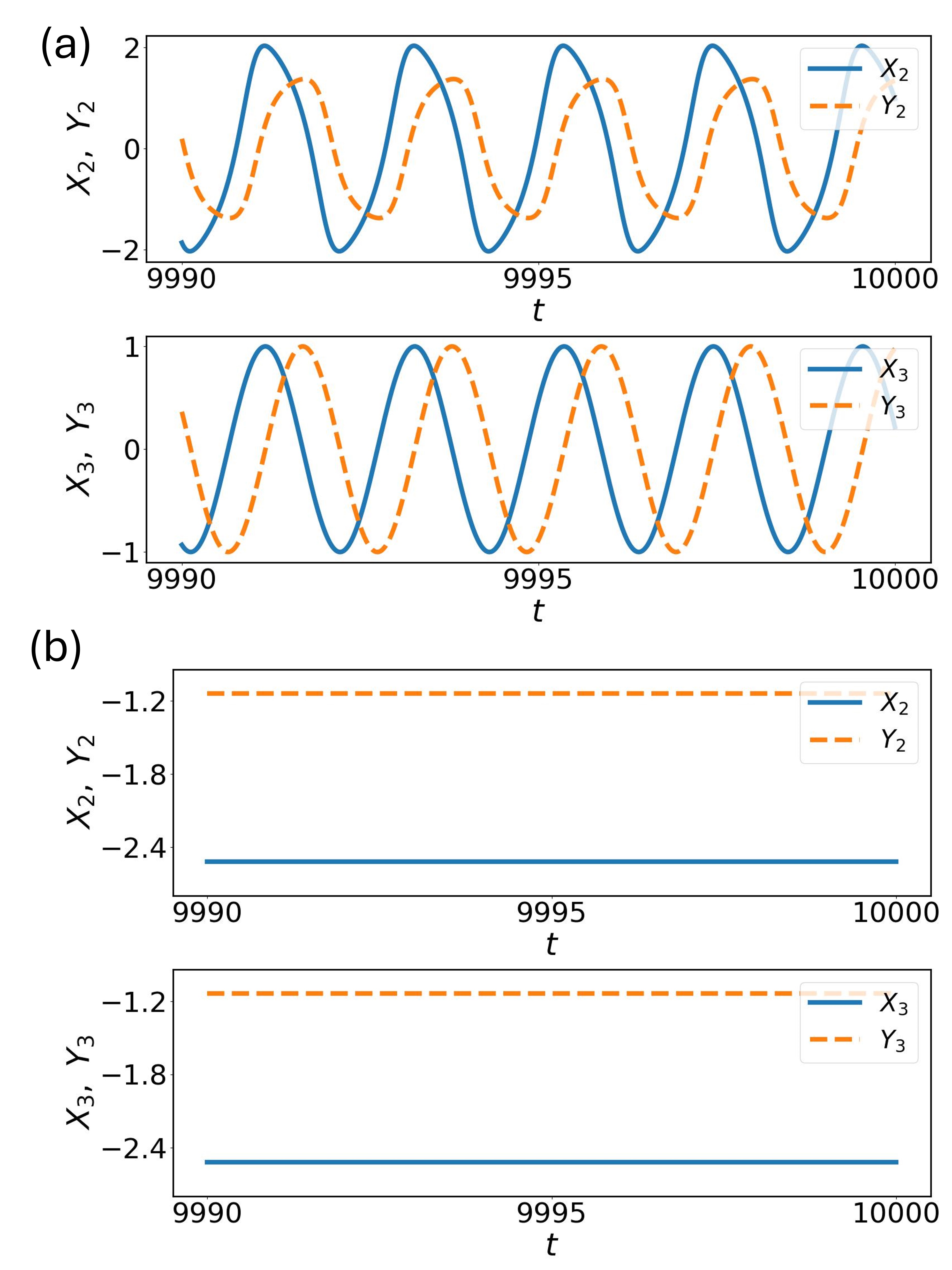}}
\caption{{\bf Directed coupling supports oscillations }:  (a) Oscillating pair of units at two sites on the edge connected by directed coupling of strength $s=8$, when such oscillating pairs are isolated in the weak-coupling limit ($w=0$). (b) Fixed-point values of units at rest at the same sites along the edge when bidirectionally coupled with the same strength.  All other parameters are kept fixed as in Fig.\ \eqref{Picture3}, but with $w=0$.
	}
\label{fig201202}
\end{figure}

If we choose bidirectional coupling on the entire grid with initial conditions as in Fig.\ \eqref{Picture3} and  $w' = w = 0.001$ and $s' = s = 8$ on a $10 \times 10$ lattice of SL oscillators, the corresponding dynamics are illustrated in Fig.\ \eqref{Picture16}, where only the four corner oscillators exhibit nonzero normalized frequencies, indicating that they are oscillating. These corner oscillators are incoherent and oscillate within the range $[-1,1]$, while the rest of the oscillators exhibit zero normalized frequencies, signifying their convergence to OD-states.  
The remaining edge oscillators settle into two distinct OD-states that are symmetric around the origin, depending on the initial conditions. Similarly, all bulk oscillators converge into two symmetric OD-states with coordinates  larger than those of the units on the edge. This means that for bidirectional coupling of equal strength we do not observe edge-localized oscillations.

\begin{figure}[htp]
	\centerline{\includegraphics[width=0.4\textwidth]{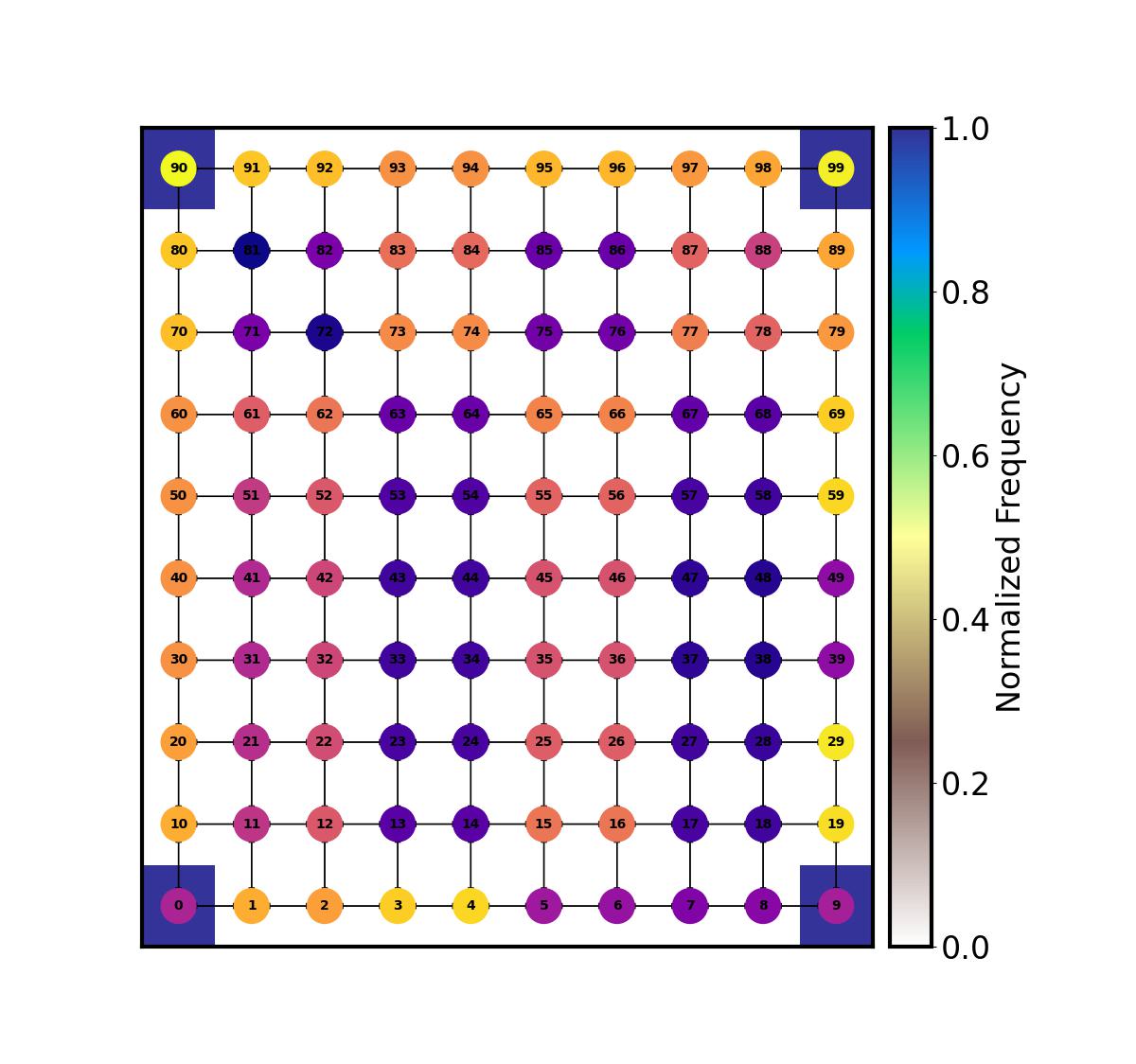}}
	\caption{{\bf Impact of bidirectional coupling on edge oscillations}: Snapshot of the dynamics on a \(10 \times 10\) lattice of SL-oscillators under bidirectional coupling with \( w' = w = 0.001 \) and \( s' = s = 8 \). Only the four corner oscillators exhibit nonzero normalized frequencies, oscillating within the range $[-1,1]$, while all other oscillators converge to OD-states. Color coding as in Fig.\ \eqref{Picture7}. All other parameters are kept fixed as in Fig.\ \eqref{Picture3}.
	}
	\label{Picture16}
\end{figure}

Numerical analysis reveals that for $w' = 57\%$ of $w$ and $s' = 57\%$ of $s$ (with $w=0.001$ and $s=8$), the bidirectionally coupled system continues to exhibit edge oscillations while bulk oscillators remain in OD-states. Once $w' = 58\%$ of $w$ and $s' = 58\%$ of $s$, the system transitions to a state similar to Fig.\ \eqref{Picture16}. It should be noted that edge oscillators in this case settle into different OD-states than the bulk clusters.  

This critical percentage threshold changes slightly when self-feedback is introduced. For $g=1$, edge oscillations persist up to $w' = 76\%$ of $w$ and $s' = 76\%$ of $s$ with $w=0.001$ and $s=8$. However, at $w' = 77\%$ of $w$ and $s' = 77\%$ of $s$, the system again exhibits behavior similar to Fig.\ \eqref{Picture16}.

\par If we completely reverse the directionality by setting \( w' = 0.001 \) and \( s' = 8 \) with \( w = s = 0 \) while maintaining the other parameters as in Fig.\ \eqref{Picture3}, such a reversal does not qualitatively alter the results of edge oscillations combined with bulk oscillators in OD-states.

\section{Topological origin of edge-localized oscillations}\label{sec4}

\subsection{Derivation of an effective Hamiltonian}\label{sec41}
We derive an analogue of the Bloch-Hamiltonian to analyze the bulk band structure in view of topological invariants, which are supposed to explain the emergence of edge states in terms of  non-vanishing Zak-phases. We start from Eq.\ \eqref{eq:coupled_system}. Linearizing Eq.\ \eqref{eq:coupled_system} around the stationary solutions $(X_0,Y_0)$ according to $X_i=X_{i0}+\delta X_i$ and $Y_i=Y_{i0}+\delta Y_i$ leads to an equation of the form
\begin{equation}\label{eq2}
\frac{d}{dt}\begin{pmatrix} \delta X_i\\ 
\delta Y_i\end{pmatrix}=\begin{pmatrix}J_{XX}\delta_{ij}-\sum_j A_{ij} &J_{XY}\delta_{ij}\\ J_{YX}\delta_{ij}& J_{YY}\delta_{ij}\end{pmatrix}\begin{pmatrix}\delta X_j\\ \delta Y_j\end{pmatrix},
\end{equation}
with $i\in\{1,...,L_x\times L_y\}$ for a grid of size $L_x\times L_y$ and $J_{XX},...,J_{YY}$ the respective components of the Jacobian, evaluated at the stationary point. This equation is of first order in time and reminds to the Schr\"odinger equation, therefore, the matrix on the right hand side will later be termed Hamiltonian. Since we are interested in stationary solutions, we separate variables and have to solve an eigenvalue equation that reads in coordinate space $E \psi_i=\Omega_{ij} \psi_j$ with $\psi_j$ the space-dependent part of the solution $(\delta X_j,\delta Y_j)^T$ and $\Omega_{ij}$ the matrix as in Eq.\ \eqref{eq2},  $E$ is an eigenvalue. 

Since we want to distinguish between internal and external couplings and later  perform the Fourier transform only with respect to external coordinates, we split an index $i$ as $(x,y)_I$ with $x\in \{1,...,N_x\}$, $y\in \{1,...,N_y\}$ indicating the external coordinate of a cell, while $I\in\{A,B,C,D\}$, labelling the internal coordinate within a cell. In this notation, Eq.\ \eqref{eq:coupled_system} reads
\begin{eqnarray}\label{eq1:coupled_system}
\dot{X}(x,y)_I &=& F(X(x,y)_I,Y(x,y)_I)\nonumber \\
 &-& \sum_{x^\prime,y^\prime;I^\prime} \mathcal{A}((x,y)_I,(x^\prime,y^\prime)_{I^\prime}) X(x^\prime,y^\prime)_{I^\prime}, \nonumber \\
\dot{Y}(x,y)_I &=& G(X(x,y)_I,Y(x,y)_I),
\end{eqnarray}
with 	$x^\prime\in \{1,...,N_x\}$, $y^\prime\in \{1,...,N_y\}$, and $I^\prime\in\{A,B,C,D\}$.
The on-site dynamics on all grid points is in principle the same, also the type of directed coupling in and outside of the cell. Although the splitting into internal and external coordinates  looks artificial at this point,  the very ratio of both coupling strengths turns out to be essential for the existence of edge states.

Next we evaluate the interaction term at given coordinate $(x,y)_I$  for directed interactions as in Fig.\ \eqref{Picture1}. It is given by
\begin{eqnarray}\label{eqfou}
&-&\sum_{x^\prime,y^\prime, I^\prime} A((x,y)_I, (x^\prime,y^\prime)_{I^\prime})\delta X((x^\prime,y^\prime)_{I^\prime})\\ \nonumber
&=&-\begin{pmatrix}\gamma_{in} \delta X(x,y)_D +\gamma_{ex}\delta X(x,y-1)_B\\ 
\gamma_{in} \delta X(x,y)_A +\gamma_{ex}\delta X(x-1,y)_C  \\ 
\gamma_{in} \delta X(x,y)_B +\gamma_{ex}\delta X(x,y+1)_D  \\ 
\gamma_{in} \delta X(x,y)_C +\gamma_{ex}\delta X(x+1,y)_A  
\end{pmatrix}
\end{eqnarray}
\\
For periodic boundary conditions, we next perform a Fourier transform only with respect to the external coordinates to reduce the eigenvalue problem to that of a single cell, using
\begin{equation}
\begin{pmatrix} \delta X(x,y)_I\\ 
\delta Y(x,y)_I\end{pmatrix}=\frac{1}{\sqrt{N_x N_y}}\sum_{k_x,k_y} e^{-i(k_xx+k_yy)}\begin{pmatrix}\delta \tilde{X}(k_x,k_y)_I\\ 
\delta \tilde{Y}(k_x,k_y)_I\end{pmatrix}
\end{equation}
with $(k_x,k_y)$ running over the first Brioullin zone, to obtain with $\begin{pmatrix}\delta \tilde{X}(k_x,k_y)_I\\ 
\delta \tilde{Y}(k_x,k_y)_I\end{pmatrix} \equiv \delta\tilde{\psi}_{I^\prime}(k_x,k_y)$ the eigenvalue equation
\begin{equation}
E \delta\tilde{\psi}_{I}(k_x,k_y) = \left(J\delta_{II^\prime}+\tilde{A}_{II^\prime}\small{\begin{pmatrix}1&0\\0&0\end{pmatrix}}\right)
\delta\tilde{\psi}_{I^\prime}(k_x,k_y),
\end{equation}
where $J\delta_{II^\prime}$ is the $2\times 2$-Jacobian at $I^\prime=I$, with $I\in\{A,B,C,D\}$, while  $\tilde{A}_{II^\prime}$ results from the Fourier transform of Eq.\ \eqref{eqfou}, both combined to the following $8\times 8$-matrix:

\begin{equation}\label{ourhamiltonian}
	\mathcal{H}(k_x, k_y) = 
	\scalebox{0.8}{$
		\begin{bmatrix}
			J_{XX}^{A} & J_{XY}^{A} & w e^{i k_y} & 0 & 0 & 0 & s & 0 \\
			J_{YX}^{A} & J_{YY}^{A} & 0 & 0 & 0 & 0 & 0 & 0 \\
			s & 0 & J_{XX}^{B} & J_{XY}^{B} & w e^{i k_x} & 0 & 0 & 0 \\
			0 & 0 & J_{YX}^{B} & J_{YY}^{B} & 0 & 0 & 0 & 0 \\
			0 & 0 & s & 0 & J_{XX}^{C} & J_{XY}^{C} & w e^{-i k_y} & 0 \\
			0 & 0 & 0 & 0 & J_{YX}^{C} & J_{YY}^{C} & 0 & 0 \\
			w e^{-i k_x} & 0 & 0 & 0 & s & 0 & J_{XX}^{D} & J_{XY}^{D} \\
			0 & 0 & 0 & 0 & 0 & 0 & J_{YX}^{D} & J_{YY}^{D}
		\end{bmatrix}
		$}
\end{equation}

With the Jacobians evaluated at the stationary state components at sites A,B,C,D for the three types of on-site dynamics, this is our analogue of the Bloch Hamiltonian, from which we want to determine the bulk bands and the Zak phases.
	
\subsection{Symmetries of the Hamiltonian}\label{sec42}
\par {\bf Non-Hermiticity.} A first observation is that our Hamiltonian is non-Hermitian because of $\mathcal{H}\neq\mathcal{H}^\dag$. Physical reasons for the non-Hermiticity in our case are the directed couplings that are responsible for the off-diagonal components, as well as the on-site dynamics, leading to the block-diagonal terms which  formally play an analogous role to gain and loss terms in Bloch Hamiltonians for extended non-Hermitian SSH-models \cite{lieu2018topological, lai2025non}. Therefore we expect complex eigenvalues, the need to distinguish right and left eigenvectors and possible exceptional points \cite{heiss2012physics,miri2019exceptional,panahi2024higher}, at which two or more eigenvalues and corresponding eigenvectors coalesce. 
Actually, we do find exceptional points in the spectrum of SL-oscillators for $s=w=8.0$ (other parameters: $\alpha_i=\alpha=1.0; \omega_i=\omega=3.0, \beta_i=1,g=0$), a parameter set  that lies deeply in the region of OD-states with no significant change in the sensitivity of the OD-states close to the exceptional points. Neither did we find exceptional points for other on-site dynamics in a parameter region of interest, where edge-localized oscillations occur. Therefore, we will not discuss exceptional points any further.

Moreover, as indicated before, for non-Hermitian systems, in general the bulk-boundary correspondence need no longer hold, due to which the topology of the bulk bands explains the existence of edge states located at the boundaries. As we shall see, our bulk bands show non-trivial topology. Of particular interest in relation to topology are symmetries of the Hamiltonian, here these are time reversal and inversion symmetry that we discuss next. \\

\par {\bf Inversion symmetry.} Inversion symmetry is expressed as  
\begin{equation} \label{inversion}  
	\mathcal{I} \mathcal{H}(k_x, k_y) \mathcal{I}^{-1} = \mathcal{H}(-k_x, -k_y),  
\end{equation}  
where \(\mathcal{I} = \mathbf{\sigma}_x \otimes I_4\), with \(\mathbf{\sigma}_x = \begin{bmatrix}  
	0 & 1  \\  
	1 & 0  
\end{bmatrix}\) being the Pauli matrix and \(I_n\) denoting the \(n\)-th order identity matrix. \(\mathcal{I}\) is a square, real, involutory matrix and unitary. This symmetry imposes a quantization condition on the integrated Berry connection in the Hermitian limit \cite{zak1989berry}. Our effective Hamiltonian $\mathcal{H}(k_x, k_y)$ satisfies inversion symmetry under the following conditions. Suppose, the four strongly unidirectionally coupled oscillators with coupling strength $s$ converge to the stationary states $(X_A, Y_A), (X_B, Y_B), (X_C, Y_C), (X_D, Y_D)$, respectively. If $(X_A, Y_A) = (X_C, Y_C)$ and $(X_B, Y_B) = (X_D, Y_D)$, then $\mathcal{H}(k_x, k_y)$ obeys inversion symmetry.

\par Note that there may be additional conditions under which the inversion symmetry \eqref{inversion} holds. For instance, in the cases of the Stuart-Landau and brusselator models, this symmetry is preserved under specific constraints such as $(X_A, Y_A) = (-X_C, -Y_C)$ and $(-X_B, -Y_B) = (X_D, Y_D)$. However, our numerical simulations did not find the steady states of four strongly coupled oscillators satisfying these conditions. In fact, for coupled brusselators, such symmetry is inherently not allowed, as the amplitude of the state variables must remain non-negative. Moreover, this symmetry ensures that the matrices $\mathcal{H}(k_x, k_y)$ and $\mathcal{H}(-k_x, -k_y)$ have the same eigenvalues, as reflected in the band structure, but not necessarily the same eigenvectors. This is easily proven (see lemma \eqref{Theo1}  in Appendix \eqref{appendixB}). 
 
{\bf Time reversal symmetry.} Our Hamiltonian \(\mathcal{H}(k_x, k_y)\) satisfies also time-reversal symmetry, which is expressed as:
\begin{equation} \label{time-reversal}  
	\mathcal{H}(k_x, k_y) = \mathcal{H}^{*}(-k_x, -k_y),  
\end{equation}
where \(\mathcal{H}^{*}(k_x, k_y)\) denotes the complex conjugate of \(\mathcal{H}(k_x, k_y)\). This symmetry implies that the eigenvalues of \(\mathcal{H}(k_x, k_y)\) and \(\mathcal{H}(-k_x, -k_y)\) appear in complex conjugate pairs. 
Also this statement is easily proven  (see lemma \eqref{Theo2} in Appendix \eqref{appendixB}).

{\bf Other symmetries.} We want to mention also two other symmetries that are  often discussed in the context of non-Hermitian Hamiltonians, which our Hamiltonian violates. The first one is 
PT symmetry, which is expressed as  
\begin{equation} \label{PT}  
	(\mathcal{I}\mathcal{T}) \mathcal{H}(k_x, k_y) (\mathcal{I}\mathcal{T})^{-1} = \mathcal{H}(k_x, k_y).  
\end{equation}  
Here, the time-reversal operator \(\mathcal{T}\) acts as the identity matrix \(I_8\). A more detailed explanation of why our effective Hamiltonian, which contains complex terms, does not satisfy this symmetry is provided  in lemma \eqref{Theo4} of Appendix~\ref{appendixB}. 

In relation to possible chiral currents, the sublattice symmetry is of interest, given as
\begin{equation} \label{Sublattice}  
	\mathcal{S}\mathcal{W}(k_x, k_y)\mathcal{S}^{-1} = -\mathcal{W}(k_x, k_y).  
\end{equation}  
Here, the symmetry operator is defined as \(\mathcal{S} = I_4 \otimes \sigma_z\). To see why our Hamiltonian violates this symmetry we remove all Jacobian terms from the principal block diagonal of \(\mathcal{H}(k_x, k_y)\). The resulting matrix \(\mathcal{W}(k_x, k_y)\) then satisfies the relation  
\begin{equation}  
	\mathcal{S}\mathcal{W}(k_x, k_y)\mathcal{S}^{-1} = \mathcal{W}(k_x, k_y),  
\end{equation}  
which contradicts the sublattice symmetry condition Eq.\ \eqref{Sublattice}.

\subsection{Zak phases and dispersion bands}\label{sec44}
Since we implement the very same geometry as the one of the two-dimensional SSH-model on a square lattice, we make also use of a topological  quantity that characterizes the non-trivial topology in the SSH-model. This is the Zak-phase \cite{zak1989berry, zheng2024topological, liu2017novel}, which is a geometric phase. In the SSH-model it is the phase acquired by a Bloch wavefunction as it evolves across the Brillouin zone. In two dimensions, we have to deal with a vector of Zak phases. There, for the j-th energy band the vectored Zak phase $(\Phi_x^{(j)},\Phi_y^{(j)})$ is defined as the line integration of the Berry connection (whose curl yields the Berry curvature). 
The Berry connection of the j-th energy band is defined as $A_j=(a_j(k_x),a_j(k_y))^T$, where $a_j(k_m)=i(\psi^{(j)}(k_x,k_y)^{(L)\dag} \frac{\partial}{\partial k_m}\psi^{(j)}(k_x,k_y)^{(R)})$, ($m\in\{x,y\}$) has the typical form of a geometric connection and $\psi^{(j)}(k_x,k_y)^{(L),(R)}\equiv \psi^{(j)}(\mathbf{k})^{(L),(R)}$ are a left (L) or right (R) eigenvector of the analogue Eq.\ \eqref{ourhamiltonian} of the reduced Bloch-Hamiltonian in momentum space, which is non-Hermitian. Thus, the definition is \cite{obana2019topological, nelson2024nonreciprocity}
\begin{eqnarray}\label{eqzak}
\Phi_x^{(j)}&=&\dfrac{i}{2\pi}\;\int_{-\pi}^{+\pi}dk_x \psi^{(j)}(\mathbf{k})^{(L)\dag} \partial_{ k_x}\psi^{(j)}(\mathbf{k})^{(R)} \nonumber \\
\Phi_y^{(j)}&=&\dfrac{i}{2\pi}\;\int_{-\pi}^{+\pi}dk_y \psi^{(j)}(\mathbf{k})^{(L)\dag} \partial_{ k_y}\psi^{(j)}(\mathbf{k})^{(R)} 
\end{eqnarray}

Before Computing the Zak Phase components, we normalize the right and left eigenvectors at each $\mathbf{k}$-point to unit norm.

A nonzero vector of Zak phases ($\vec{\Phi}^{(j)} \neq 0$) indicates a nontrivial topological phase, which often corresponds to the presence of edge states in a finite geometry.
When it vanishes, it typically corresponds to a trivial insulator in the context of condensed matter physics without protected edge states.	

We numerically compute the Zak phase along both \( k_x \) and \( k_y \) directions. The momentum components are discretized into $101$ points each. The system parameters are set as: $\alpha_i = \alpha = 1.0, \quad \omega_i = \omega = 3.0, \quad s = 8.0, \quad w = 0.001, \quad \beta_i=1, \quad g=0.$ For our choice of Hamiltonian Eq.\ \eqref{ourhamiltonian} and for this particular set of parameter values, the computed 2D Zak phases take the values $(\Phi_x^{(j)},\Phi_y^{(j)})= (0, 0)$, 
for each of the eight bands. More precisely, for a grid size of $101$ points, the imaginary part is of the order of $1.5 \times 10^{-10}$, for $201$ points, it reduces to $7.6 \times 10^{-11}$, and for $301$ points, it further decreases to $5.1 \times 10^{-11}$.	
When the coupling values of \( w \) and \( s \) are interchanged, (corresponding to a switch in the geometries in analogy to Fig.~\ref{Picture2} (a) and (b)), i.e., setting \( w = 8 \) and \( s = 0.001 \), the computed 2D Zak phases change to $(\Phi_x^{(j)},\Phi_y^{(j)})= (\pi,\pi)$ for each of the eight bands. Here, more precisely, for a grid size of $101$ points the imaginary part is of the order of $~0.0156$, for $201$ points, it reduces to $~0.0078$, and for $301$ points, it further decreases to $~0.0052$.\\


Had we calculated the Hamiltonian for a unit cell with weak internal coupling assignment (given explicitly in Eq.\ \eqref{ouroldhamiltonian}), the choice of $s=8.0$ and $w=0.001$ would have led to bulk bands with Zak phases $(\pi,\pi)$, whereas $s=0.001$ and $w=8.0$ would have led to vanishing Zak phases.

How are these topological features of the bulk bands reflected in edge-localized oscillations or  everywhere OD-states on a finite grid with alternating coupling geometry and different sizes? The answer depends on the coupling strength with which  boundary nodes of the bulk are connected to edge nodes. The corresponding links, connecting these nodes,  are orthogonal to the edge. In general, if these couplings are weak, we observe edge-localized oscillations, 
if they are strong, 
we observe OD-states everywhere. In Appendix \eqref{app5} we discuss in some detail the correspondence between the size of the lattice geometry, leading to edge-localized oscillations, and the choice of Hamiltonian (unit cell), leading to non-vanishing Zak phases of the bulk bands.\\
It is this correspondence that we take as a sign for topological protection of our edge oscillations. The computed 2D Zak phases remain nonzero and equal to $(\pi, \pi)$, when we activate the linear self-coupling by setting \( g = 1 \) and \( g = 2 \), while keeping all other parameter values the same as earlier. Even when we compute the 2D Zak phases for the discussed result with repulsive coupling of the same strength in coupled SL-oscillators, we find that the Zak phases remain nonzero.\\
System phases, characterized  by  vanishing Zak phases, are termed trivial in the sense that topological protection is lost. Accordingly, our numerical simulations reveal that in this regime, all coupled units transition to two clusters of OD-states, where the cluster composition depends on the initial conditions.\\

{\bf Dispersion bands.} The underlying dispersion bands to the different values of the Zak phases are shown in Figs.\ \eqref{band2d} and \eqref{fig102104} (a)-(d). For each momentum point \( (k_x, k_y) \) sampled on a uniform \( 101 \times 101 \) grid in the Brillouin zone, we compute the eigenspectra of both versions of the Hamiltonian, evaluated at \( s = 8.0 \), \( w = 0.001 \) (version A) or at \( s = 0.001 \), \( w = 8.0 \) (version B), respectively. This yields sorted eigenvalues and their associated right eigenvectors. The spectral norm difference between the sorted eigenvalues of version A and version B 
reveals that the real parts of the eigenvalues are identical up to numerical precision. However, the imaginary parts differ significantly in bands \( 2 \), \( 3 \), \( 6 \), and \( 7 \) for a subset of momentum points. 

In Figs.\ \eqref{band2d} (a)-(d) we show the dispersion bands in two  dimensions for Zak phases $(0,0)$ (upper two panels, for version A) and Zak phases $(\pi,\pi)$ (lower two panels, for version B). The two-dimensional plots of the real and imaginary parts of the eight eigenvalues as function of   $k_x$ and fixed $k_y \approx 0$ give a better resolution than the three-dimensional bands, plotted over the first Brioullin zone. 

\begin{figure}[htp]
\centerline{\includegraphics[width=0.46\textwidth]{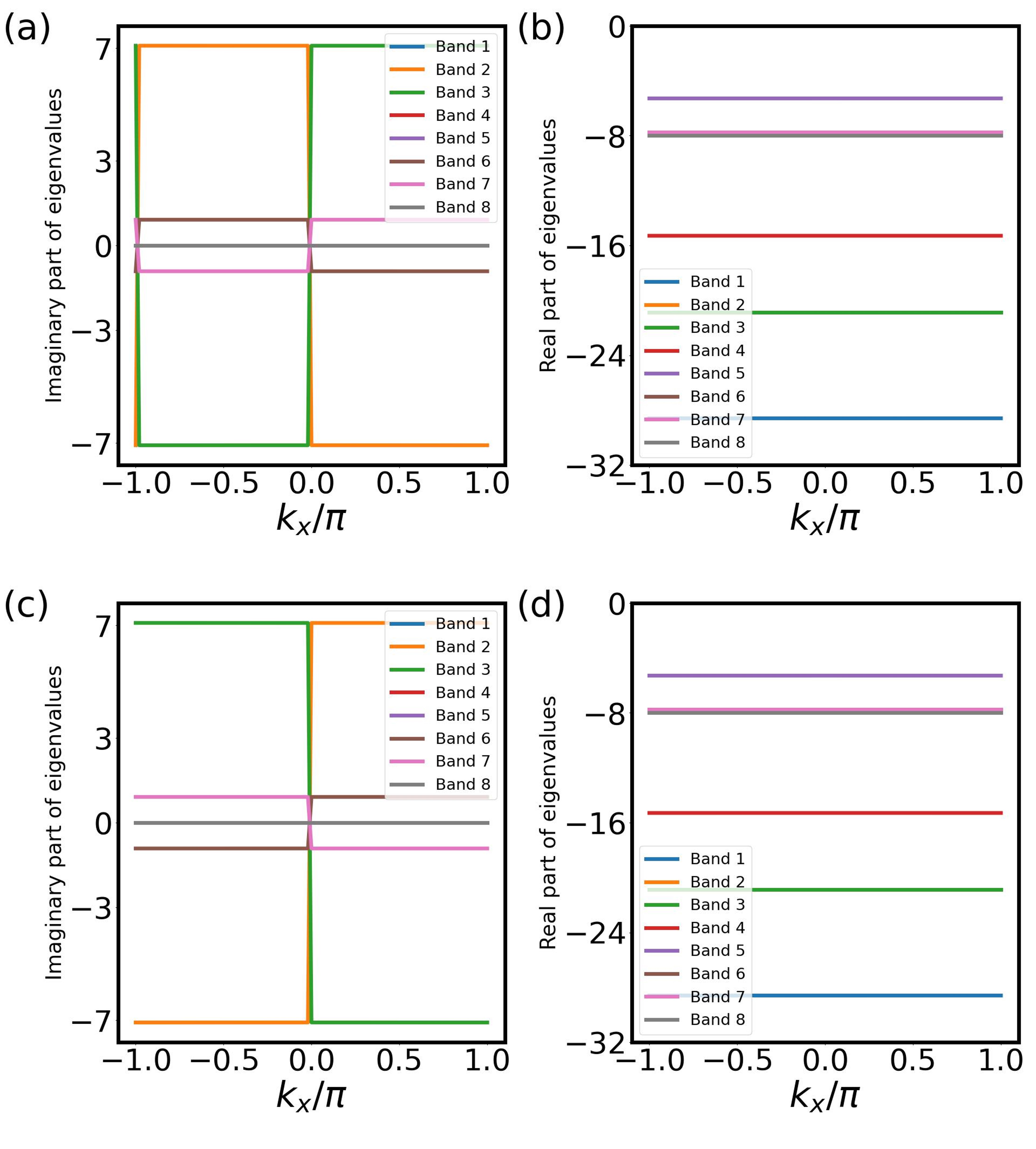}}
\caption{{\bf Eigenvalues for the Hamiltonian $\mathcal{H}$ with varying $k_x$ and fixed $k_y \approx 0$}:  (a) and (c): Imaginary part of the eigenvalues, showing a switch between the bands corresponding to the nonzero imaginary eigenvalues from positive to negative (or vice versa). (b) and (d): Real part of the eigenvalues, revealing six distinct bands, with four bands being completely real and two real parts arising from the complex conjugate pairs of nonzero imaginary eigenvalues.  Panels (a) and (b): Zak phases $(0,0)$, panels (c) and (d): Zak phases $(\pi,\pi)$. Other parameters $\alpha_i=\alpha =1.0, \omega_i=\omega=3.0$, $s=0.001$ and $w=8.0$ (c) and (d) and vice versa (a) and (b).}
\label{band2d}
\end{figure}
\vskip5pt

In Figs.\ \eqref{fig102104}(a)-(d) we show the corresponding dispersion bands in three dimensions for Zak phases $(0,0)$ ((a) and (b)) and Zak phases $(\pi,\pi)$ ((c) and (d)). The figure also reveals that the imaginary parts for bands 2, 3,  6, and 7 differ. The symmetry of the eigenvalues around the origin is a direct consequence of inversion and time-reversal symmetries.
\begin{figure*}[htp]
\includegraphics[width=0.75\textwidth]{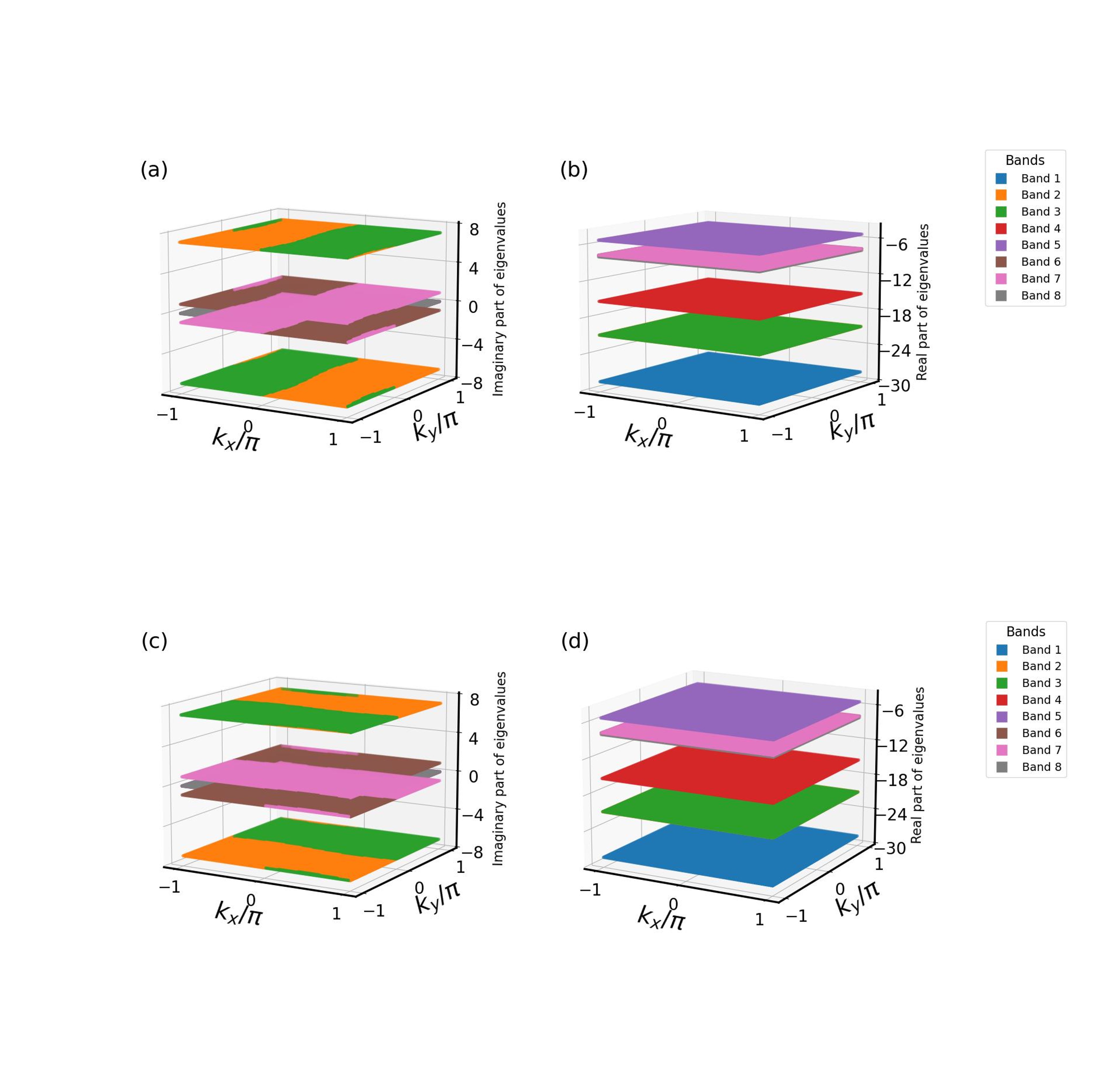}
	\caption{{\bf Dispersion relations for the Hamiltonian $\mathcal{H}$ with varying $k_x$ and  $k_y $}:  (a) and (c): Imaginary part of the eigenvalues, showing a switch between the bands (green and orange), corresponding to the nonzero imaginary eigenvalues switching from positive to negative (or vice versa). (b) and (d): Real part of the eigenvalues, revealing six distinct bands, with four bands being real and two arising from the complex conjugate pairs of nonzero imaginary eigenvalues. Panels (a) and (b): Zak phases $(0,0)$, panels (c) and (d): Zak phases $(\pi,\pi)$. Other parameters as in Fig.\ \eqref{band2d}.
}
\label{fig102104}
\end{figure*}

\begin{figure*}[htp]
\includegraphics[width=1.0\textwidth]{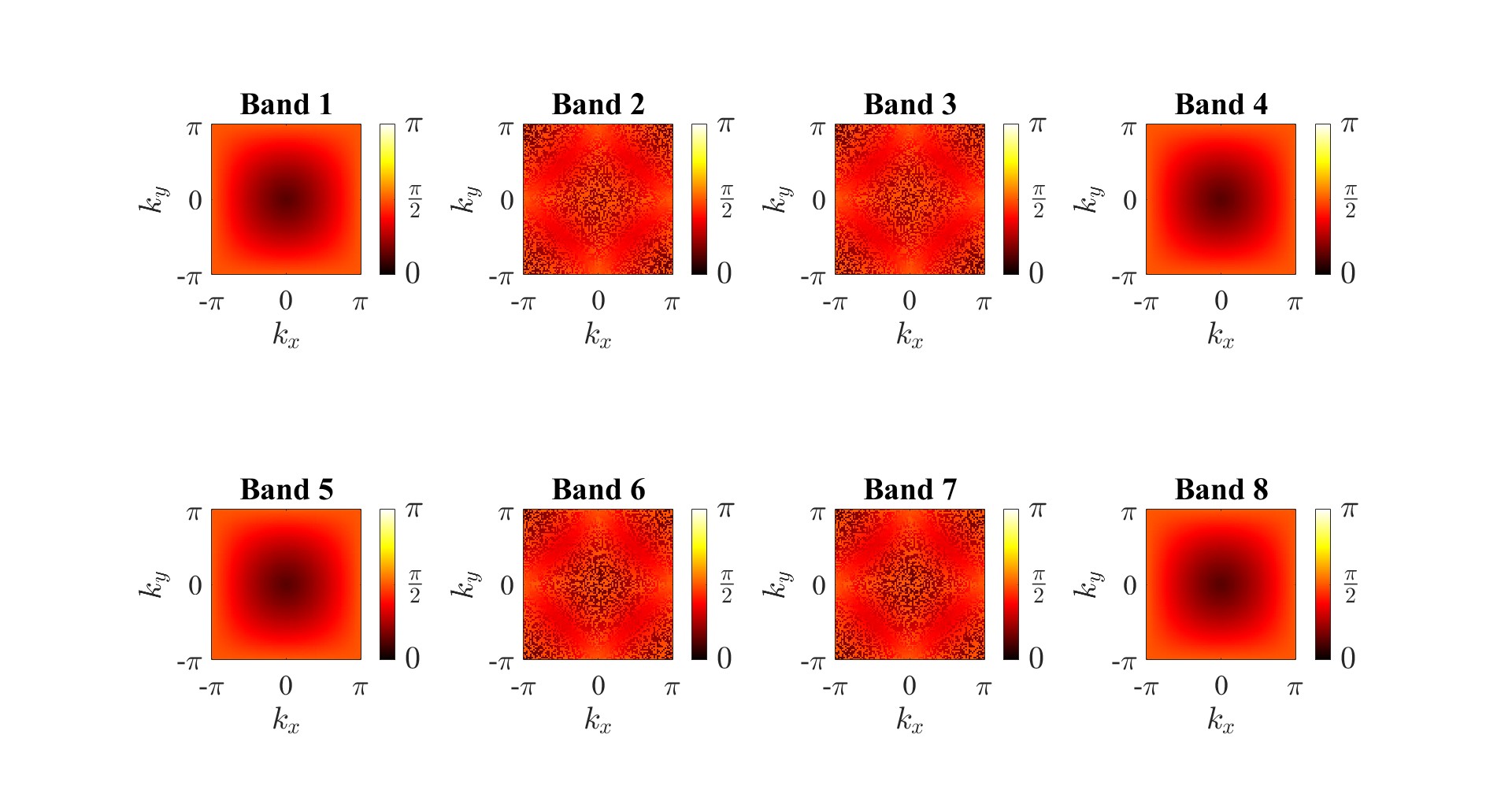}
\caption{ Different Zak phases caused by the parallel transport of different right eigenvectors of the Hamiltonian, monitored by their angle differences \( \theta_b^{(R)}(k_x, k_y) \) when evaluated for two parameter regimes: case A with \( s = 8 \), \( w = 0.001 \) (Zak phases (\( 0 \,0\)), and case B with \( s = 0.001 \), \( w = 8 \) (Zak phases $(\pi, \pi)$). Note that the eigenvectors differ while the eigenvalues for bands \( 1 \), \( 4 \), \( 5 \), and \( 8 \) are identical.
	Each subplot corresponds to a band index \( b = 1, \ldots, 8 \) and shows a heatmap over the Brillouin zone, sampled on a \( 101 \times 101 \) grid. The color scale highlights the angular deviation $\theta_b^{(R)}(k_x, k_y)$ ranging from \( 0 \) (perfectly aligned) to \( \pi \) (anti-parallel). Angle differences near \( \pi/2 \) are present, which indicates that the corresponding right eigenvectors are nearly orthogonal at certain momentum points \( (k_x, k_y) \). 
	}\label{figangle}
\end{figure*}

It should be noticed that the difference in the topological characteristic is observed for all of the eight bands from version A and B of the effective Hamiltonian, even for identical bands, such as bands $1,\;4,\;5,\;8$. This must result from differences in the eigenvectors of the effective Hamiltonian while parallel transported via the Berry connection in the Brioullin zone for $k_x$ and $k_y$ from $-\pi$ to $\pi$. Therefore, we consider the angle between the right eigenvectors and define:
\begin{equation}
\theta_b^{(R)}(k_x, k_y) = \cos^{-1}\left( \left| \frac{\langle u_b, v_b \rangle}{\|u_b\| \cdot \|v_b\|} \right| \right),
\end{equation}
with normalized eigenvectors \( u_b \) and \( v_b \).
Each subplot in the corresponding figure represents one of the 8 bands and displays a heatmap of \( \theta_b^{(R)}(k_x, k_y) \) over the Brillouin zone. The color scale spans from \( 0 \) (perfect alignment) to \( \pi \) (anti-parallel), providing a detailed visualization of how the eigenvectors differ across momentum space.
\vskip5pt
In summary, the values of the Zak phases, being $\pi$ or $0$ for the bulk bands, distinguish between the system's phases, characterized by different stationary states: these are (i) edge oscillations combined with OD-states in the bulk in the non-trivial system's phase, where the Zak phases are $\pi$ for both components, or, (ii) no edge oscillations and everywhere OD-states, where the Zak phases are both $0$. We term it also a trivial phase when we observe everywhere oscillations, although in this case, the Zak phase  cannot be determined from the eigenvectors of the Hamiltonian, as the Hamiltonian is derived from a linear stability analysis around the fixed-points.

As the Zak phases are topological invariants, the nonzero value of the Zak phases justifies to call the edge oscillations topologically protected. As we have seen, the protection is manifest in the robustness against strong parameter mismatch, noise, and deformations of the grid. Note that the non-vanishing Zak phases of the non-trivial system's phase (characterized by stationary states with edge oscillations for open boundary conditions) have been  determined for the bulk bands, obtained for periodic boundary conditions in both directions. This fact amounts to a realization of the bulk-boundary correspondence. The fact that the values are $0$ or $\pi$ results from the inversion symmetry of the effective Hamiltonian. In general, in $1$D Hermitian systems with inversion symmetry, the Zak phase of an isolated band is quantized to $0$ or $\pi$~\cite{jiao2021experimentally, xiao2016coexistence}; this quantization is related to the  symmetries of the Bloch wave at the center and the boundary of the Brioullin zone. In our derivation,  an ansatz in terms of ``Bloch waves'' is included in the discrete Fourier transform with respect to the external coordinates. The visible  angle difference between the two eigenvectors at the center and the boundary of the Brioullin zone  in Fig.\ \eqref{figangle} does not prove but is compatible with different symmetries of the eigenvectors at the center and the boundaries for case A and B, respectively.

\section{Other on-site nonlinearities}\label{sec5}
The main qualitative features that we observed for coupled SL-oscillators remain valid when the on-site dynamics is replaced by activator-repressor or brusselator systems. Oscillations, restricted to the edge of the grid, are protected by non-vanishing Zak phases of the bulk bands, while  vanishing Zak phases correspond to OD-states everywhere on the grid. Differences occur in the quantitative sensitivity to perturbations like noise, and in the substructure of edge oscillations (frequency chimera or not), or OD-states (structure of coexisting different OD-states).

\subsection{Activator-repressor on-site dynamics}\label{subsec351}
For the on-site dynamics of Eq.~(\ref{BFU}) we summarize numerical results for different parameter choices in Table~\ref{tab:oscillator_behavior}. Given that we already have three primary coupling parameters \( s \), \( w \), and \( g_i = g \), we fix other parameters in Eq.~(\ref{BFU}). Specifically, we set \( b = 0.01 \), \( \gamma = 0.01 \), and \( K = 0.02 \) throughout our study. 
To ensure a diverse set of initial conditions, we randomly initialize the system states within the domain \([0,1] \times [0,1]\) for all simulations.
	\renewcommand{\arraystretch}{1.3}
	\begin{table}[h]
		\centering
		\setlength{\tabcolsep}{6pt}
		\begin{tabular}{|c|c|c|c|l|l|}
			\hline
			$\alpha$ & $g$ & $w$ & $s$ & \multicolumn{1}{c|} {\textbf{bulk units}} & {\textbf{edge units}} \\ 
			\hline
			
	80 & 0.0 & 0.0001 & 0.5 &  OD, coh &  OS, incoh\\
			\hline
	80 & 1.0 & 0.0003 & 1.5 & OD,  coh & OS, incoh	\\
			\hline		
		60 & 0.0 & 0.0001 & 0.5 &  OD,  coh & OS, incoh	\\
			\hline
		60 & 1.0 & 0.0001 & 0.5 &OS, incoh& 2 OD, coh  \\
			\hline	
	80 & 1.0 & 0.0001 & 0.5 &  OS, incoh&	OS, incoh	 \\
			\hline
		60 & 1.0 & 0.0003 & 1.5 & OD,  coh &  2 OD,  coh  \\
			\hline	
			60 & 0.0 & 0.0003 & 1.5 & unbounded & unbounded  \\
			\hline
			80 & 0.0 & 0.0003 & 1.5 & unbounded & unbounded\\
			\hline
		\end{tabular}
		\caption{Summary of collective behavior of edge and bulk units for different parameter values,  OS stands for oscillatory states, OD for OD-states, the number $2$ for two clusters,  further characterized by coherent (coh) or incoherent (incoh) behavior.}
		\label{tab:oscillator_behavior}
	\end{table}

\par The first three rows of the table show the situation with oscillations restricted to the edge, independent of $\alpha$ (first and third row) and for the same coupling ratios $w/s$ (first and second row, but $g$ is changed). 

The difference between the results of the third and fourth row indicates that by a single change of the linear self-coupling $g$ from $0.0$ to $1.0$ it is possible to switch from edge-localized oscillations and OD-states in the bulk (third row) to incoherent oscillations of the bulk and two groups of coherent OD-states along the edge (fourth row). Note that this result is obtained without a change of the coupling ratio $w/s$, by which we usually enforced edge-localized oscillations. This behaviour is illustrated in Fig.\ \eqref{Picture23}, for which we have chosen a $10 \times 10$ lattice of coupled activator-repressor units as given in Eq.\ \eqref{BFU}. 
Self-feedback is chosen as $g=1$ or $g = 0$. 
After an initial transient, the edge oscillators settle into OD-states, as shown in Fig.\ \eqref{Picture23}(a), whereas the bulk oscillators continue to oscillate, as shown in Fig.\ \eqref{Picture23}(b). At time $t = 2000$ we then use the state values as the initial conditions for the coupled system 
and turn off the self-feedback by setting $g = 0$. Clearly, in this case, the oscillators at the boundary of the 2D square lattice begin oscillating, while those in the interior transition to an OD-state.
\begin{figure}[htp]
	\centerline{\includegraphics[width=0.4\textwidth]{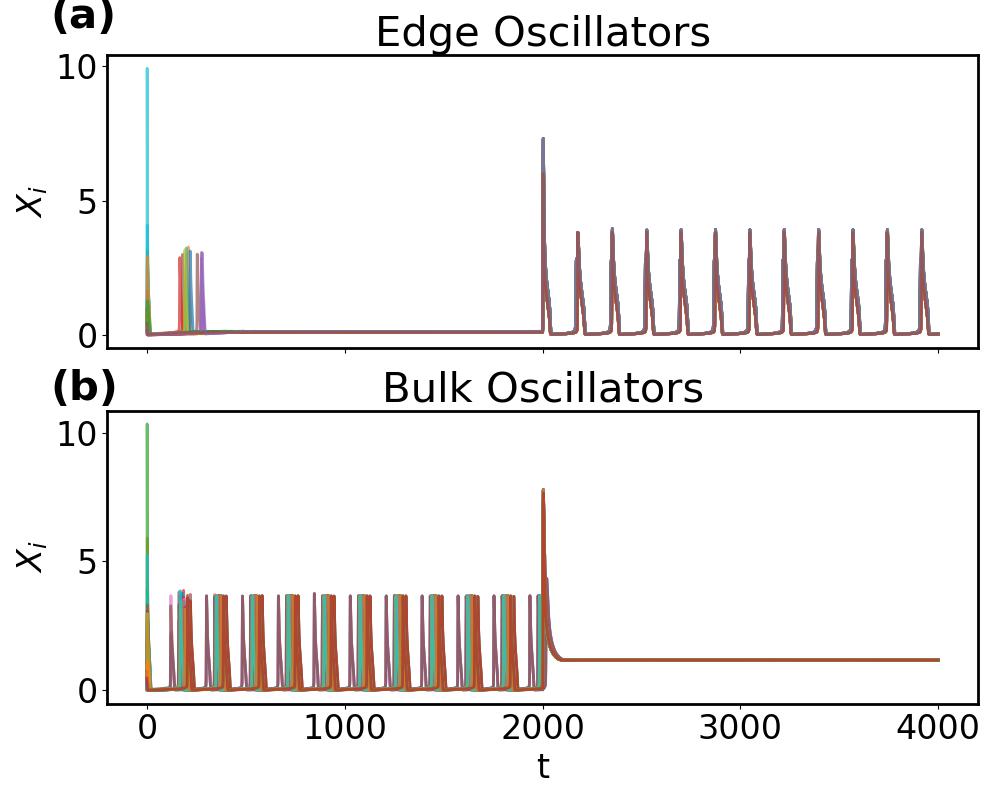}}
	\caption{{\bf Self-feedback induced switch between edge and bulk oscillations for coupled activator-repressor units}:  The initial conditions on a $10 \times 10$ lattice are chosen randomly from the range $[0,1] \times [0,1]$. Until time $t=2000$, we set $g=1$, with edge units at rest after an initial short transient of oscillations (a), while bulk units are oscillating (b);  after $t=2000$ we set  $g=0$  and find reversed states, that is, oscillations restricted to the edge (a) and bulk units in OD-states (b).   The other parameters are kept fixed at $s = 0.5$, $w = 0.0001$, $\alpha_i = 60$, $b = 0.01$, $\gamma = 0.01$, and $K = 0.02$.
	}
	\label{Picture23}
\end{figure}

Results for the $5$th and $6$th rows indicate states where all units are in oscillatory (OS) or OD-states, and where the ``$2$'' indicates two clusters of fixed-point values. For the parameter choice of the last two rows the states are unbounded.
 \\

{\bf Robustness of edge-oscillations.} To further validate the robustness of edge-oscillations, we introduce parameter mismatches and observe that the results remain consistent when $\alpha_i$ is uniformly distributed around the mean value $\alpha = 80.0$ with a broad distribution width of $\Delta \alpha = 10.0$. We present the dynamical evolution of each oscillator in Fig.\ \eqref{Picture27}, where we plot \( X_i \) vs \( Y_i \) in the phase space after a sufficiently large initial transient. In subfigure (a), the edge oscillators continue to oscillate and remain incoherent, exhibiting nonzero normalized frequencies. In contrast, subfigure (b) shows that the bulk oscillators transition to diverse OD-states, depending on $\alpha_i$. Since the OD-state in the activator-repressor model requires $X_i = Y_i$ (see Eq.\ \eqref{BFU}), the bulk oscillators settle along the line $X_i = Y_i$. Consequently, the system self-organizes into two distinct groups: incoherent edge oscillators and coherent bulk oscillators in a resting state, forming a  chimera-like state with respect to the normalized frequencies.

\begin{figure}[htp]
	\centerline{\includegraphics[width=0.5\textwidth]{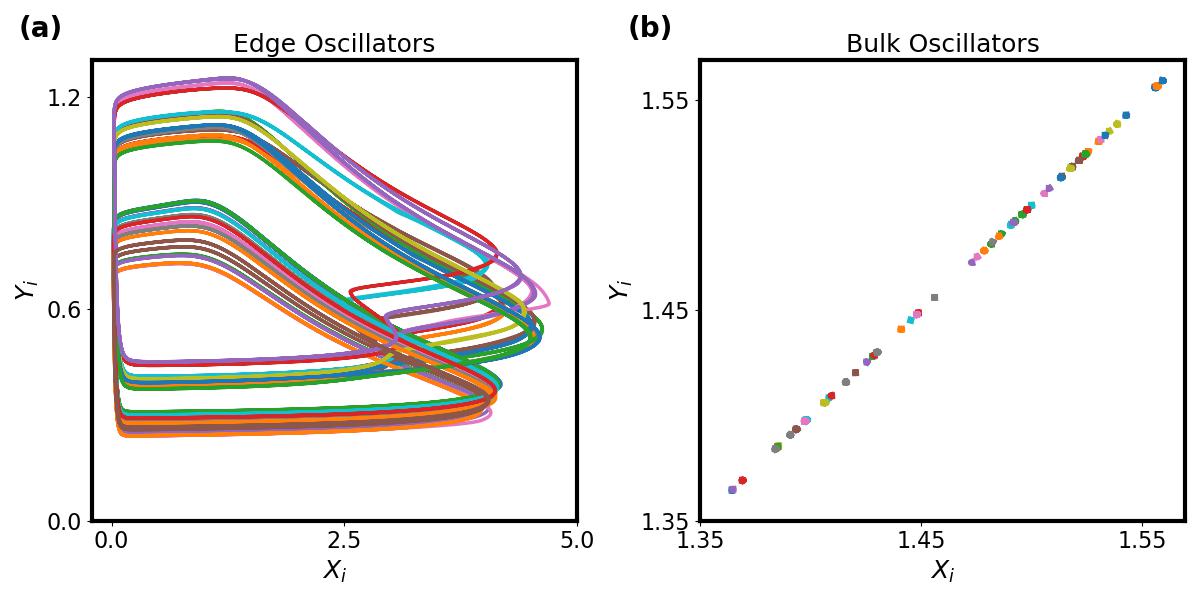}}
	\caption{{\bf Edge oscillations and bulk OD-states with parameter mismatch}: $10 \times 10$ lattice of coupled activator-repressor units with coupling assignment according to Fig.\ \eqref{Picture2}(a). The initial conditions are randomly selected from the range $[0,1] \times [0,1]$, parameter mismatch is introduced through $\alpha_i$, which follows a uniform distribution centered at $\alpha = 80.0$ with a distribution width of $\Delta \alpha = 10$. After an initial transient phase, (a) the edge oscillators sustain oscillations, (b) while the bulk units transition to OD-states.  The remaining parameters are fixed at $s = 0.5$, $w = 0.0001$, $b = 0.01$, $\gamma = 0.01$, and $K = 0.02$.
	}
	\label{Picture27}
\end{figure}

Furthermore, the qualitative nature of the results persists even in the presence of minor structural defects in the designed lattice, which we verify by  removing a few grid points from the 2D square lattice. We further introduce additive noise with a strength of up to \( 10^{-5} \), following the procedure outlined earlier for coupled SL-oscillators in Sec.\ \eqref{subsub342}. Our findings reveal that the system's qualitative behavior remains unchanged despite the presence of noise of suitable strength. 
In the Appendix \eqref{app3}. we derive the effective Hamiltonian, discuss its symmetries and determine the Zak phases, which reveal the topological protection of edge-oscillations and explain their resilience against perturbations as for SL-oscillators.

\subsection{Brusselator on-site dynamics}\label{subsec352}
\par In the brusselator model, the state variables represent concentrations of chemical species, which must remain non-negative as negative concentrations have no physical meaning. The condition $b_i > 1 + a_i^2$ ensures that an isolated brusselator is in an oscillatory state \cite{lefever1971chemical}.  We start with identical brusselators with parameter values $a_i = a = 1.0$ and $b_i = b = 3.0$, so that the condition $b > 1 + a^2$ is satisfied.  Additionally, we set $g_i = g = 0$, meaning that the linear self-coupling is turned off. The initial conditions for each oscillator are randomly selected from the range $[0,1] \times [0,1]$, and the coupling assignment is chosen according to Fig.\ \eqref{Picture2}(a). To analyze the collective behavior of coupled brusselators, we consider a $10 \times 10$ square lattice, where the primary control parameters are again the coupling strengths $s$ and $w$. 
\begin{figure}[htp]
	\centerline{\includegraphics[width=0.45\textwidth]{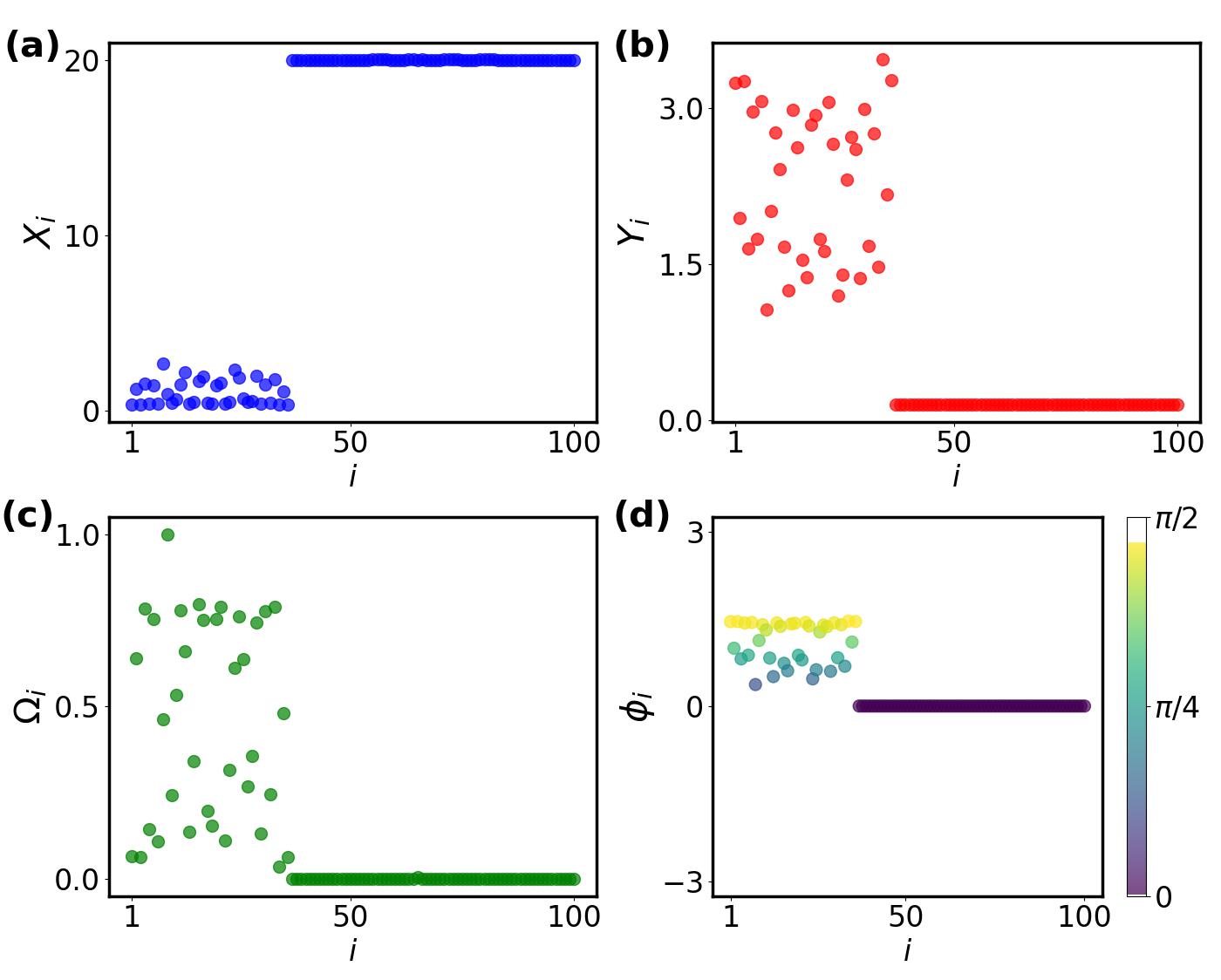}}
	\caption{{\bf Chimera-like state on a 2D square lattice of coupled brusselators}:   (a, b) Snapshots of the variables \( X_i \) and \( Y_i \), respectively, after the system has evolved beyond the transient regime.  
		(c) Normalized frequency \( \Omega_i \) showing sustained oscillations for edge oscillators, while bulk oscillators transition to an OD-state.  
		(d) Phase distribution \( \phi_i \), illustrating that bulk oscillators are phase-locked in a single coherent state, while edge oscillators remain incoherent. Parameters: $s=0.95$, $w=0.0001$, $g=0$, $a=1$, and $b=3$.
	}
	\label{Picture24}
\end{figure}
\par First we consider uniform coupling  for three different choices,  
$s = w = 0.95, 0.8, 0.6$, the system described in Eq.\ \eqref{eq:coupled_system} with coupled brusselators becomes then unbounded.  
However, for $s = w = 0.5$, all oscillators transition to distinct OD-states, depending on the initial conditions. On the other hand, for very weak coupling strengths, $s = w = 0.0001$, all coupled oscillators exhibit sustained oscillations. Edge-localized oscillations are observed when employing alternating weak and strong coupling strengths, specifically with $w = 0.0001$ and $s = 0.95$.  This behavior remains robust  when the strong coupling  
strength is slightly reduced to $s = 0.8$ while keeping $w = 0.0001$.  

We plot a snapshot of the oscillators' states, \( X_i \) and \( Y_i \), the normalized frequency values, \( \Omega_i \), and the phase of each oscillator, \( \phi_i \), at a specific time step after the system has evolved sufficiently beyond its initial transient in Fig.\ \eqref{Picture24}.  
Similar to previous snapshots, we first plot the indices of the edge oscillators, followed by the bulk oscillators along the $x$-axis. Clearly, we observe the expected edge oscillations, as indicated by the nonzero normalized frequency of the edge oscillators (subfigure (c)), while the bulk oscillators transition to an OD-state, where their normalized frequency becomes zero (subfigure (c)). In contrast to coupled SL-oscillators (Fig.\ \eqref{Picture4}), the bulk units converge into a single group (see (a) and (b)), thus, in brusselators we have a chimera-like state with respect to the state variables, nomalized frequencies, and phases (see Fig.\ \eqref{Picture24}).  This behavior remains qualitatively the same for \( s = 0.8 \) and \( w = 0.0001 \). 
\begin{figure}[htp]
	\centerline{\includegraphics[width=0.45\textwidth]{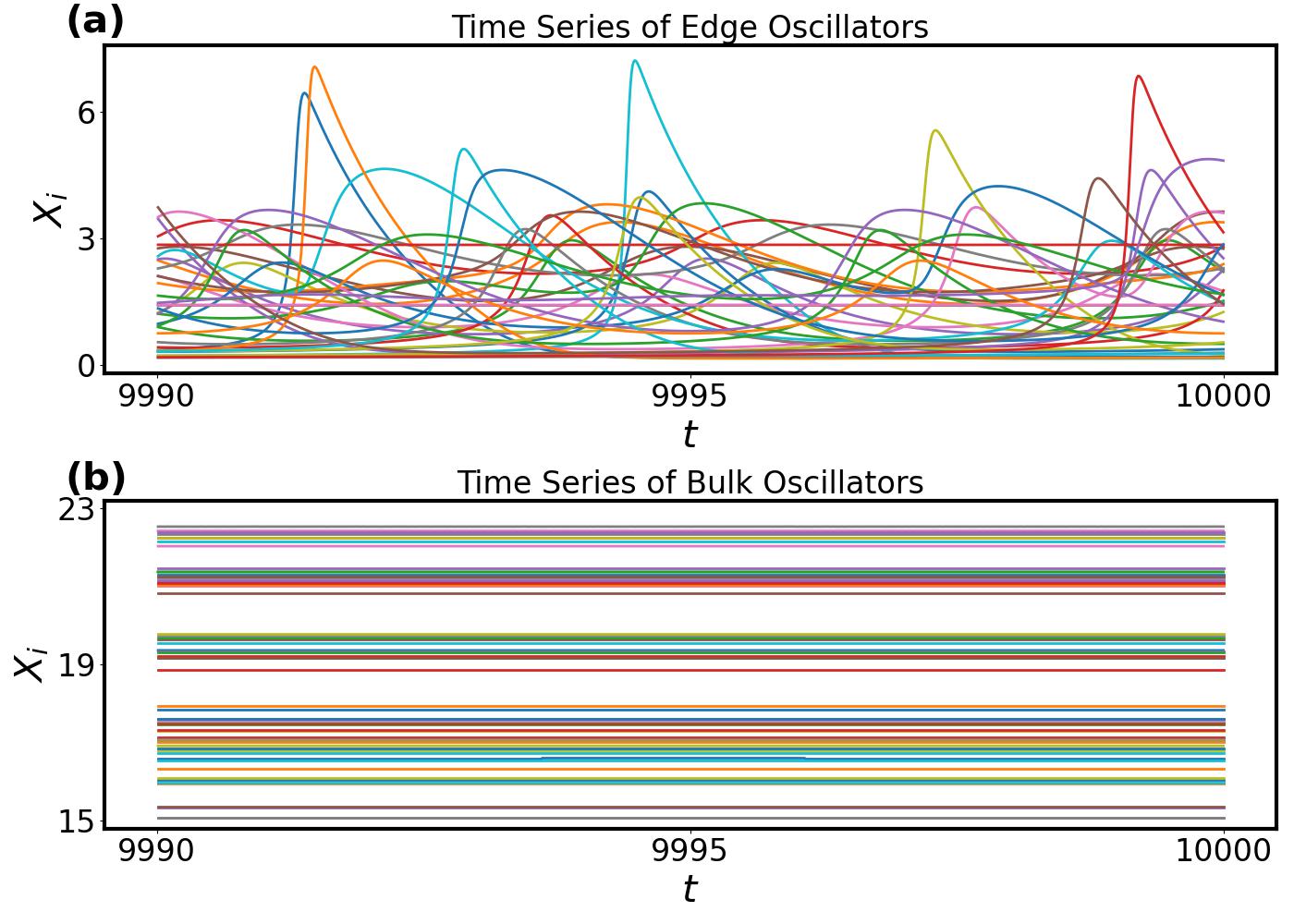}}
	\caption{{\bf  Time series for coupled brusselators with parameter mismatch:} Mismatch is introduced  in \( a_i \)  chosen from a uniform distribution centered at \( a = 1.0 \) with a distribution width of \( \Delta a = 0.5 \), while \( b \) remains fixed at \( b = 3.0 \). Some brusselators  along the edge have an \( a \) value for which \( b < 1 + a^2 \), causing those oscillators to automatically transition to an OD-state, while the majority of edge oscillators sustain oscillations. Other parameters are kept fixed at \( s = 0.95 \), \( w = 0.0001 \), and \( g_i = g = 0 \).
	}
	\label{Picture26}
\end{figure}

{\bf Robustness of edge-oscillations.} To further assess the robustness of our findings, we examine the impact of parameter mismatches. We find that our desired state remains robust when \( a_i \) is chosen from a uniform distribution centered at \( a=1.0 \) with a distribution width of \( \Delta a=0.4 \), while keeping \( b=3.0 \). Similarly, robustness is observed when \( b_i \) is drawn from a uniform distribution centered at \( b=3.0 \) with a distribution width of \( \Delta b=0.5 \), while keeping \( a=1.0 \). However, it is important to note that a careful selection of parameter mismatches is necessary to satisfy the constraint \( b > 1+a^2 \). If this condition is violated, the corresponding oscillators may spontaneously transition to OD-states, regardless of their position on the 2D square lattice, see  Fig.\ \eqref{Picture26}. 
Additionally, we observe robustness of edge oscillations when the lattice contains a few missing grid points or additive noise is introduced with a strength of up to \( 10^{-4} \), both figures are not displayed. 

\section{Summary and Conclusions}\label{sec6}

We have studied dynamical units that are individually either in a resting state or behave as classical nonlinear oscillators. Their coupling geometry is inspired by systems of condensed matter physics such as the SSH model, where dynamics is assigned to the links of a grid and edge phenomena like topologically protected edge currents can occur. From this perspective, it is of interest whether edge-localized activity can persist when nonlinear on-site dynamics is introduced in addition to the couplings along the connecting links. In our case, the edge-localized activity consists of oscillations that are, in general, not synchronized.

For the on-site dynamics, we have considered three types of oscillatory systems—Stuart--Landau units, activator--repressor units, and brusselators—all of which are individually known for their possible relevance in biological or biochemical applications. When the ratio of weak to strong couplings is appropriately chosen, edge-localized oscillations emerge robustly, regardless of the specific on-site dynamics, and persist under parameter mismatch, additive noise, or structural defects such as missing grid points. In most cases, we have edge-localized oscillations by tuning the ratio of inter- to intra-cell coupling. However, we have also observed a case in the activator--repressor system where additional self-coupling induces a switch between edge and bulk activity—a feature that may be further explored.

In all three cases, the derived effective Hamiltonian that  plays the role of a Bloch Hamiltonian exhibits inversion symmetry, resulting in quantized Zak phases of $(0,0)$ or $(\pi,\pi)$. These Zak phases match exactly with the observed behaviors: edge-localized oscillations for $(\pi,\pi)$ and OD-states throughout the grid for $(0,0)$, thus explaining the robustness of our numerical findings. Minor differences have appeared in the coherence of oscillations or OD-states. Generally, the edge oscillations are not synchronized except within smaller groups, and we have identified chimera-like states based on phase or normalized frequency.
We have also demonstrated how an appropriate removal of units from the grid can sustain oscillatory activity along closed bands inside a bulk of resting units, with sharp interfaces between active and resting units.

%

For future work promising extensions with respect to different parameter choices  are repressive couplings within or between unit cells. Particularly, for activator-repressor systems, repressive nonlinear coupling considerably enriches the dynamical behavior as we know from results for uniform coupling strengths \cite{labavic2014networks}. A similar enrichment of dynamics has also been reported for Stuart–Landau oscillators under the influence of repulsive coupling, where the interplay leads to cluster synchronization \cite{nag2024cluster} and  frustration \cite{chowdhury2020effect} in complex networks. For such a choice one may further search for signatures of chiral behavior along the edge related to wave propagation.\\
In view of biological applications, a first step may be to look for such applications only of unit cells, composed of four activator-repressor units.  In a next step one may then search for collective behavior that arises when many such units are  coupled to larger grids, whose edges might correspond to  membranes. The attractive feature would be that if one finds tunable  activity that can be localized to membranes, its  robustness would be pronounced  if topologically protected.

\section*{Acknowledgment} We thank Professor Evelyn Tang for valuable discussions in the beginning of this work and the German Research Foundation (DFG) (grant number ME-1332/30-1) for financial support.

\appendix
\section*{Appendix} \label{Appendix}

\appendix
\section{Edge oscillations for other explored parameter ranges}\label{AppendixA}


As shown in Fig.\ \eqref{Picture14}(b), the amplitude of bulk oscillators gradually increases with the coupling strength \( w \). For very weak coupling (\( w = 0.001 \) or \( 0.01 \)), and with all other parameters fixed as in Fig.\ \eqref{Picture14}, the dynamics resemble those in Figs.\ \eqref{Picture3} and \eqref{Picture4}. In this regime, edge oscillators exhibit frequency chimera-like behavior, while bulk oscillators tend toward oscillation death (OD), depending on initial conditions.

As \( w \) increases to around \( 0.1 \), bulk amplitudes exceed those at \( w = 0 \), and edge oscillators lose their frequency chimera-like coherence. Instead, they form an incoherent group with nonzero frequencies, while the bulk remains coherent with near-zero frequencies—still displaying a frequency chimera-like state, as in Fig.\ \eqref{Picture6}.

For higher values of \( w \) (e.g., \( w = 1 \) or \( 2 \)), bulk amplitudes continue to increase but remain lower than those of the edge oscillators (Fig.\ \eqref{Picture14}). Edge oscillators persist in incoherent oscillations, while the bulk remains coherent. Eventually, for sufficiently large \( w \), both edge and bulk oscillators may transition to OD states, depending on initial conditions.

Furthermore, we fix \( g = 0 \), \( \beta_i = 1.0 \), and \( \omega_i = \omega = 3.0 \) on a \( 10 \times 10 \) lattice arranged as in Fig.\ \eqref{Picture2}(a), and vary \( \alpha_i = \alpha \in [1, 6] \) with step size \( \delta\alpha = 1.0 \). We observe similar dynamics to those in Figs.\ \eqref{Picture3} and \eqref{Picture4}: edge oscillations persist while the bulk settles into OD.

Next, keeping \( \alpha_i = \alpha = 1.0 \) fixed, we vary \( \omega_i = \omega \in [1, 4] \) with \( \delta\omega = 1.0 \). Edge oscillators continue to display frequency chimera-like behavior, while bulk oscillators settle into two distinct OD states, depending on initial conditions (as in Fig.\ \eqref{Picture4}). For \( \omega = 5.0 \), bulk oscillators start oscillating within \( [-2, 2] \), while edge oscillators retain frequency chimera-like features.

Thus, for suitable ranges of \( \omega \) and \( \alpha \), edge oscillations can coexist with bulk OD. To investigate the role of the self-coupling term, we fix \( \omega_i = \omega = 3.0 \), \( \beta_i = 1.0 \), \( \alpha_i = \alpha = 1.0 \), \( w = 0.001 \), \( s = 8 \), and vary \( g_i = g \in [0, 3] \) with step \( \delta g = 1.0 \). As soon as \( g > 0 \), the edge oscillators become fully incoherent but remain oscillatory, while the bulk enters OD. This resembles a frequency chimera-like state, where incoherent edge oscillators (with nonzero frequencies) coexist with coherent bulk (at zero frequency). Increasing \( g \) further reduces the amplitude of edge oscillations:
for \( g = 0 \): range \( \sim [-2, 2] \) (Fig.\ \eqref{Picture3}),  \( g = 1 \): range \( \sim [-1, 1] \), 
	 \( g = 2 \): range \( \sim [-0.5, 0.5] \),  
	 \( g = 3 \): range \( \sim [-0.0005, 0.0005] \).
For \( g = 3 \), bulk oscillators begin oscillating incoherently, while edge oscillators exhibit multistability. Depending on initial conditions, edge oscillators either converge to amplitude death \cite{saxena2012amplitude} or oscillate with negligible amplitude near the origin.

\section{Symmetries of the Hamiltonian and their implications on the spectrum}\label{app2}
\label{appendixB}

\begin{lemma}\label{Theo1}
	Suppose, $\mathcal{H}(k_x, k_y)$ satisfies the inversion symmetry given in Eq.\ \eqref{inversion}. Then, the matrices $\mathcal{H}(k_x, k_y)$ and $\mathcal{H}(-k_x, -k_y)$ have the same eigenvalues.
\end{lemma}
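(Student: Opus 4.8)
The plan is to recognize that the inversion-symmetry relation of Eq.~\eqref{inversion} says precisely that $\mathcal{H}(-k_x,-k_y)$ is the conjugate of $\mathcal{H}(k_x,k_y)$ by the invertible matrix $\mathcal{I}$, i.e.\ the two matrices are \emph{similar}. Since similar matrices have the same characteristic polynomial, they have the same eigenvalues, counted with algebraic multiplicity, which is exactly the claim.

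Concretely, I would first record that $\mathcal{I}=\sigma_x\otimes I_4$ is invertible: it is involutory ($\mathcal{I}^2=I_8$) and unitary, so $\mathcal{I}^{-1}=\mathcal{I}=\mathcal{I}^\dagger$ exists. Then, starting from $\mathcal{I}\,\mathcal{H}(k_x,k_y)\,\mathcal{I}^{-1}=\mathcal{H}(-k_x,-k_y)$ and using $\mathcal{I}(\lambda I_8)\mathcal{I}^{-1}=\lambda I_8$ together with multiplicativity of the determinant, I would compute
\begin{align*}
\det\!\big(\mathcal{H}(-k_x,-k_y)-\lambda I_8\big)
&= \det\!\big(\mathcal{I}\,(\mathcal{H}(k_x,k_y)-\lambda I_8)\,\mathcal{I}^{-1}\big) \\
&= \det(\mathcal{I})\,\det\!\big(\mathcal{H}(k_x,k_y)-\lambda I_8\big)\,\det(\mathcal{I}^{-1}) \\
&= \det\!\big(\mathcal{H}(k_x,k_y)-\lambda I_8\big),
\end{align*}
since $\det(\mathcal{I})\det(\mathcal{I}^{-1})=1$. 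As the characteristic polynomials agree for all $\lambda$, the spectra coincide.

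There is essentially no genuine obstacle here; the only points worth stating carefully are that (i) the argument is purely algebraic and does not use Hermiticity, diagonalizability, or any orthogonality of eigenvectors — which is the relevant caveat since $\mathcal{H}$ is non-Hermitian and may possess exceptional points — and (ii) the conclusion is about eigenvalues as a multiset, not about eigenvectors: if $\psi$ is a right eigenvector of $\mathcal{H}(k_x,k_y)$ with eigenvalue $\lambda$, then $\mathcal{I}\psi$ is a right eigenvector of $\mathcal{H}(-k_x,-k_y)$ with the same $\lambda$, but $\mathcal{I}\psi\neq\psi$ in general, consistent with the remark in the main text that the eigenvectors need not agree.
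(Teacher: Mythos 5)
Your proof is correct. It differs mildly in route from the paper's: the paper proves the lemma by conjugating the eigenvalue equation itself, showing that if $\mathcal{H}(k_x,k_y)\mathbf{v}=\lambda\mathbf{v}$ then $\mathcal{H}(-k_x,-k_y)(\mathcal{I}\mathbf{v})=\lambda(\mathcal{I}\mathbf{v})$, so each eigenvalue of one matrix is an eigenvalue of the other with eigenvector $\mathcal{I}\mathbf{v}$ --- which is exactly the content of your closing remark (ii). Your main argument instead passes through the characteristic polynomial and the multiplicativity of the determinant. The two are both instances of the standard fact that similarity preserves spectra, but your version buys something the paper's does not state explicitly: equality of eigenvalues \emph{with algebraic multiplicities}, which is the cleaner statement for a non-Hermitian, possibly non-diagonalizable $\mathcal{H}$ (the eigenvector-mapping argument alone only gives equality of the spectra as sets, and says nothing about multiplicities at a defective eigenvalue). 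Conversely, the paper's version carries the extra structural information about how eigenvectors transform under $\mathcal{I}$, which is what the surrounding text actually uses when it notes that the eigenvectors of $\mathcal{H}(k_x,k_y)$ and $\mathcal{H}(-k_x,-k_y)$ need not coincide. Since you include both, your write-up subsumes the paper's proof.
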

\begin{proof}
	Let, $\mathcal{H}(k_x, k_y)\mathbf{v} = \lambda \mathbf{v}$; where $\lambda$ is an eigenvalue and $\mathbf{v}$ is the corresponding eigenvector.\\\\
	Now,  $\mathcal{I}(\mathcal{H}(k_x, k_y)\mathbf{v}) = \mathcal{I} (\lambda \mathbf{v}) $; where \(\mathcal{I} = \mathbf{\sigma}_x \otimes I_4\). So we have, 
	\begin{equation}\label{app1}
		 (\mathcal{I}\mathcal{H}(k_x, k_y))\mathbf{v} = \lambda (\mathcal{I} \mathbf{v}). 
	\end{equation}
By Eq.\ \eqref{inversion}, we have 	
	\begin{equation}\label{app2}
		\mathcal{I} \mathcal{H}(k_x, k_y) = \mathcal{H}(-k_x, -k_y)\mathcal{I}. 
	\end{equation}
	By combining Eqs.\ \eqref{app1} and \eqref{app2}, we get
		\begin{equation}\label{app3}
				 \mathcal{H}(-k_x, -k_y)(\mathcal{I}\mathbf{v}) = \lambda (\mathcal{I} \mathbf{v}). 
	\end{equation}
	Thus, the matrices \(\mathcal{H}(k_x, k_y)\) and \(\mathcal{H}(-k_x, -k_y)\) have the same eigenvalues. While the transformation preserves the eigenvalues, it alters the corresponding eigenvectors through the action of \(\mathcal{I}\). 
\end{proof}	

\begin{lemma}\label{Theo2}
	If $\mathcal{H}(k_x, k_y)$ satisfies the time-reversal symmetry given in Eq.\ \eqref{time-reversal}, then the eigenvalues of $\mathcal{H}(k_x, k_y)$ and $\mathcal{H}(-k_x, -k_y)$ come in complex conjugate pairs. That is, if $\lambda$ is an eigenvalue of $\mathcal{H}(k_x, k_y)$, then $\lambda^{*}$, the complex conjugate of $\lambda$, will be an eigenvalue of $\mathcal{H}(-k_x, -k_y)$.
\end{lemma}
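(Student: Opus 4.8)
The plan is to mimic the structure of the proof of Lemma~\eqref{Theo1}, but with complex conjugation playing the role that the unitary $\mathcal{I}$ played there. First I would start from an arbitrary eigenpair of $\mathcal{H}(k_x,k_y)$, writing $\mathcal{H}(k_x,k_y)\mathbf{v}=\lambda\mathbf{v}$ for an eigenvalue $\lambda$ and a nonzero eigenvector $\mathbf{v}$. The natural move is then to apply entrywise complex conjugation to this matrix equation; since conjugation is multiplicative on products of matrices and vectors and additive over sums, it distributes to give $\mathcal{H}^{*}(k_x,k_y)\,\mathbf{v}^{*}=\lambda^{*}\mathbf{v}^{*}$.

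Next I would invoke the time-reversal hypothesis. The relation stated in Eq.~\eqref{time-reversal} is $\mathcal{H}(k_x,k_y)=\mathcal{H}^{*}(-k_x,-k_y)$; conjugating both sides and relabelling $(k_x,k_y)\mapsto(-k_x,-k_y)$ yields the equivalent form $\mathcal{H}^{*}(k_x,k_y)=\mathcal{H}(-k_x,-k_y)$, which is exactly the object appearing on the left of the conjugated eigenvalue equation. Substituting gives $\mathcal{H}(-k_x,-k_y)\,\mathbf{v}^{*}=\lambda^{*}\mathbf{v}^{*}$, so $\mathbf{v}^{*}$ is an eigenvector of $\mathcal{H}(-k_x,-k_y)$ with eigenvalue $\lambda^{*}$. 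Since $\mathbf{v}\neq 0$ implies $\mathbf{v}^{*}\neq 0$, this is a genuine eigenpair, and running the argument in the reverse direction (or applying it to $\mathcal{H}(-k_x,-k_y)$, whose time-reversal partner is $\mathcal{H}(k_x,k_y)$) shows the correspondence $\lambda\leftrightarrow\lambda^{*}$ between the two spectra is a bijection. This establishes that the eigenvalues come in complex-conjugate pairs as claimed.

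I do not anticipate a genuine obstacle here: the only point requiring a word of care is the bookkeeping of which momentum argument carries the conjugate, i.e.\ checking that Eq.~\eqref{time-reversal} is equivalent to $\mathcal{H}^{*}(k_x,k_y)=\mathcal{H}(-k_x,-k_y)$ before substituting. One may also remark, as in Lemma~\eqref{Theo1}, that at the time-reversal-invariant momenta $(0,0)$ and the Brillouin-zone corners the relation forces the spectrum of $\mathcal{H}$ at that single point to be closed under complex conjugation, but this is a corollary rather than part of the proof.
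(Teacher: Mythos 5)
Your argument is correct and is essentially identical to the paper's own proof: conjugate the eigenvalue equation $\mathcal{H}(k_x,k_y)\mathbf{v}=\lambda\mathbf{v}$ to obtain $\mathcal{H}^{*}(k_x,k_y)\mathbf{v}^{*}=\lambda^{*}\mathbf{v}^{*}$, then rewrite the time-reversal condition as $\mathcal{H}^{*}(k_x,k_y)=\mathcal{H}(-k_x,-k_y)$ and substitute. Your added remarks on the nonvanishing of $\mathbf{v}^{*}$ and the bijection between the two spectra are minor elaborations of the same argument.
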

\begin{proof}
		Let, $\mathcal{H}(k_x, k_y)\mathbf{v} = \lambda \mathbf{v}$; where $\mathbf{v}$ is an eigenvector of the matrix $\mathcal{H}(k_x, k_y)$ with an associated eigenvalue  $\lambda$.  
		Now, by Eq.\ \eqref{time-reversal}, we have 
		\begin{equation} \label{B4}  
			\mathcal{H}^{*}(k_x, k_y) = \mathcal{H}(-k_x, -k_y).  
		\end{equation}
		Now,  $\mathcal{H}(k_x, k_y)\mathbf{v} = \lambda \mathbf{v}$ $\implies$  $\mathcal{H}^{*}(k_x, k_y)\mathbf{v}^{*} = \lambda^{*} \mathbf{v}^{*}$. Using Eq.\ \eqref{B4}, we have
		\begin{equation} \label{app5}  
			\mathcal{H}(-k_x, -k_y)\mathbf{v}^{*} = \lambda^{*} \mathbf{v}^{*}.  
		\end{equation}
	which completes the proof.
\end{proof}


\begin{lemma}  \label{Theo4}
	If our effective Hamiltonian \(\mathcal{H}(k_x, k_y)\) satisfies both time-reversal symmetry, as defined in Eq.\ \eqref{time-reversal}, and inversion symmetry, as given in Eq.\ \eqref{inversion}, then it necessarily violates PT symmetry, which is expressed in Eq.\ \eqref{PT}.  
\end{lemma}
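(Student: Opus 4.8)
The plan is a short proof by contradiction. Suppose, contrary to the claim, that $\mathcal{H}(k_x,k_y)$ \emph{does} obey the PT symmetry of Eq.~\eqref{PT}. With $\mathcal{T}$ acting as $I_8$ as stated, and $\mathcal{I}=\sigma_x\otimes I_4$ the same involutory matrix appearing in Eq.~\eqref{inversion}, the composite $\mathcal{I}\mathcal{T}$ reduces to $\mathcal{I}$, so Eq.~\eqref{PT} becomes the statement that $\mathcal{H}(k_x,k_y)$ commutes with $\mathcal{I}$, i.e.\ $\mathcal{I}\,\mathcal{H}(k_x,k_y)\,\mathcal{I}^{-1}=\mathcal{H}(k_x,k_y)$. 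I would first write this reduction out explicitly and check the bookkeeping carefully, since the composite $\mathcal{I}\mathcal{T}$ must leave $(k_x,k_y)$ unchanged in Eq.~\eqref{PT}, in contrast with the momentum-reversing relations Eqs.~\eqref{inversion} and \eqref{time-reversal}; this is the one place where a convention could slip.

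Next I would feed in the two symmetries that are assumed to hold. Inversion symmetry, Eq.~\eqref{inversion}, gives $\mathcal{I}\,\mathcal{H}(k_x,k_y)\,\mathcal{I}^{-1}=\mathcal{H}(-k_x,-k_y)$; combined with the assumed PT relation this forces $\mathcal{H}(k_x,k_y)=\mathcal{H}(-k_x,-k_y)$ for all momenta, so $\mathcal{H}$ would have to be an even function of $\mathbf{k}$. Time-reversal symmetry, Eq.~\eqref{time-reversal}, gives $\mathcal{H}(-k_x,-k_y)=\mathcal{H}^{*}(k_x,k_y)$, and chaining the three identities yields $\mathcal{H}(k_x,k_y)=\mathcal{H}^{*}(k_x,k_y)$: a Hamiltonian that is simultaneously PT-, time-reversal-, and inversion-symmetric would necessarily be a real matrix at every point of the Brillouin zone.

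The contradiction then comes straight from the explicit form Eq.~\eqref{ourhamiltonian}: the inter-cell couplings enter through the entries $w\,e^{i k_y}$, $w\,e^{i k_x}$, $w\,e^{-i k_y}$, $w\,e^{-i k_x}$, which are genuinely non-real for generic $(k_x,k_y)$ whenever $w\neq 0$; moreover, since the $e^{+ik_x}$ and $e^{-ik_x}$ entries occupy different positions of the matrix, $\mathcal{H}$ is already manifestly not $\mathbf{k}$-even. Either observation contradicts the conclusion forced by the PT assumption, so that assumption is untenable and the lemma follows. I expect the main (in fact the only) obstacle to be the first step — pinning down the reduced form of Eq.~\eqref{PT} consistently with the sign conventions of the other two symmetry relations; once that is settled, the rest is a two-line manipulation of the three identities together with a glance at the matrix.
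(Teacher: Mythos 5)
Your proposal is correct and follows essentially the same route as the paper's proof: assume PT symmetry, chain it with the inversion and time-reversal relations (you apply inversion first, the paper applies time reversal first, which is immaterial) to force $\mathcal{H}^{*}(k_x,k_y)=\mathcal{H}(k_x,k_y)$, and contradict this with the manifestly non-real entries $w\,e^{\pm i k_{x}}$, $w\,e^{\pm i k_{y}}$ of Eq.~\eqref{ourhamiltonian}. Your intermediate observation that $\mathcal{H}$ would also have to be even in $\mathbf{k}$ is a small bonus not present in the paper, but the core argument and the final contradiction are identical.
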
  
\begin{proof}  
	Suppose, for the sake of contradiction, that the Hamiltonian \(\mathcal{H}(k_x, k_y)\) satisfies PT symmetry. Additionally, we assume that \(\mathcal{H}(k_x, k_y)\) also obeys time-reversal and inversion symmetry.
	From the PT-symmetry assumption, we have  
	\begin{equation}  
		(\mathcal{I}\mathcal{T}) \mathcal{H}(k_x, k_y) (\mathcal{I}\mathcal{T})^{-1} = \mathcal{H}(k_x, k_y).  
	\end{equation}  
	This implies
	\begin{equation}  
		(\mathcal{I}\mathcal{T}) \mathcal{H}(k_x, k_y) (\mathcal{T}^{-1}\mathcal{I}^{-1}) = \mathcal{H}(k_x, k_y).  
	\end{equation}  
	Applying the time-reversal symmetry condition, this simplifies to  
	\begin{equation}  
		\mathcal{I} \mathcal{H}^{*}(-k_x, -k_y) \mathcal{I}^{-1} = \mathcal{H}(k_x, k_y).  
	\end{equation}  
	Using inversion symmetry, we then obtain  
	\begin{equation}  
		\mathcal{H}^{*}(k_x, k_y) = \mathcal{H}(k_x, k_y).  
	\end{equation}  
	This implies that \(\mathcal{H}(k_x, k_y)\) contains no complex terms, which contradicts the fact that the Hamiltonian has complex components. Hence, our initial assumption that \(\mathcal{H}(k_x, k_y)\) satisfies PT symmetry must be false, and the result follows.  
\end{proof}

\section{Effective Hamiltonian and band structure for activator-repressor systems}\label{app3}
To compute the Zak phases, we first derive the effective Hamiltonian. Following the same procedure as for SL-oscillators, the effective Hamiltonian is given by  
	\small
	\begin{equation}\label{BFU_Hamiltonian}
		\mathcal{H}(k_x, k_y) = 
				\begin{bmatrix}
					\mathcal{H}_{11} & \mathcal{H}_{12} & w e^{ik_y}& 0 & 0 & 0 & s  & 0 \\ 
					
					\mathcal{H}_{21} & \mathcal{H}_{22} & 0 & 0 & 0 & 0 & 0 & 0 \\
					
					s  & 0 & \mathcal{H}_{33} & \mathcal{H}_{34} & w e^{ik_x}& 0 & 0 & 0 \\
					
					0 & 0 & \mathcal{H}_{43} & \mathcal{H}_{44} & 0 & 0 & 0 & 0 \\
					
					0 & 0 & s  & 0 & \mathcal{H}_{55} & \mathcal{H}_{56} & w e^{-ik_y}& 0 \\
					
					0 & 0 & 0 & 0 & \mathcal{H}_{65} & \mathcal{H}_{66} & 0 & 0 \\
					
					w e^{-ik_x}& 0 & 0 & 0 & s  & 0 & \mathcal{H}_{77} & \mathcal{H}_{78} \\
					
					0 & 0 & 0 & 0 & 0 & 0 & \mathcal{H}_{87} & \mathcal{H}_{88}
				\end{bmatrix},
	\end{equation}
where
\begin{eqnarray}
		\mathcal{H}_{11} &=& \frac{2 X_A \alpha (1 - b)}{(1 + (Y_A/K)) (1 + X_A^2)^2} - 1 - g, \\\nonumber
		\mathcal{H}_{12} &=& -\frac{\alpha}{K} \frac{b + X_A^2}{(1 + X_A^2) (1 + Y_A/K)^2}, \\ \nonumber
		\mathcal{H}_{21} &=& \gamma, \quad \mathcal{H}_{22} = -\gamma, \\ \nonumber
		\mathcal{H}_{33} &=& \frac{2 X_B \alpha (1 - b)}{(1 + (Y_B/K)) (1 + X_B^2)^2} - 1 - g, \\ \nonumber
		\mathcal{H}_{34} &=& -\frac{\alpha}{K} \frac{b + X_B^2}{(1 + X_B^2) (1 + Y_B/K)^2}, \\ \nonumber
		\mathcal{H}_{43} &=& \gamma, \quad \mathcal{H}_{44} = -\gamma, \\ \nonumber
		\mathcal{H}_{55} &=& \frac{2 X_C \alpha (1 - b)}{(1 + (Y_C/K)) (1 + X_C^2)^2} - 1 - g, \\ \nonumber
		\mathcal{H}_{56} &=& -\frac{\alpha}{K} \frac{b + X_C^2}{(1 + X_C^2) (1 + Y_C/K)^2}, \\ \nonumber
		\mathcal{H}_{65} &=& \gamma, \quad \mathcal{H}_{66} = -\gamma, \\ \nonumber
		\mathcal{H}_{77} &=& \frac{2 X_D \alpha (1 - b)}{(1 + (Y_D/K)) (1 + X_D^2)^2} - 1 - g, \\
		\mathcal{H}_{78} &=& -\frac{\alpha}{K} \frac{b + X_D^2}{(1 + X_D^2) (1 + Y_D/K)^2}, \\ \nonumber
		\mathcal{H}_{87} &=& \gamma, \quad \mathcal{H}_{88} = -\gamma
\end{eqnarray}
with $\alpha = \alpha_i$, and $g = g_i$ $\forall$ $i$.
The Hamiltonian $\mathcal{H}(k_x, k_y)$ satisfies time-reversal symmetry and obeys inversion symmetry under the condition $(X_A, Y_A) = (X_C, Y_C)$ and $(X_B, Y_B) = (X_D, Y_D)$, as discussed earlier. Fixing all parameters at $g = 0$, $s = 0.5$, $w = 0.0001$, $\alpha_i = 80$, $b_i = 0.01$, $\gamma_i = 0.01$, and $K_i = 0.02$, we numerically compute the equilibrium values:  $(X_A, Y_A) = (X_B, Y_B) = (X_C, Y_C) = (X_D, Y_D) = (1.4737,1.4737)$.

The band structure analysis reveals that for this choice of parameters, no exceptional points emerge.
The eigenvalue spectrum shows that bands $1,4,5,8$ are purely real, while bands $2-3$ and $6-7$ form complex conjugate pairs, not displayed here. A qualitatively similar feature has already been observed in the upper two rows of Fig.\ \eqref{fig102104} for coupled SL-oscillators. The computed Zak phases are $(\pi, \pi)$ for $w=0.5$, $s=0.0001$ for each of the eight bands, and $(0,0)$ for $w=0.0001$, $s=0.5$. This confirms that the observed state, where edge oscillators remain active while bulk oscillators transition to OD-states, is topologically protected. 
\begin{figure*}[htp]
	\centerline{\includegraphics[width=1.0\textwidth]{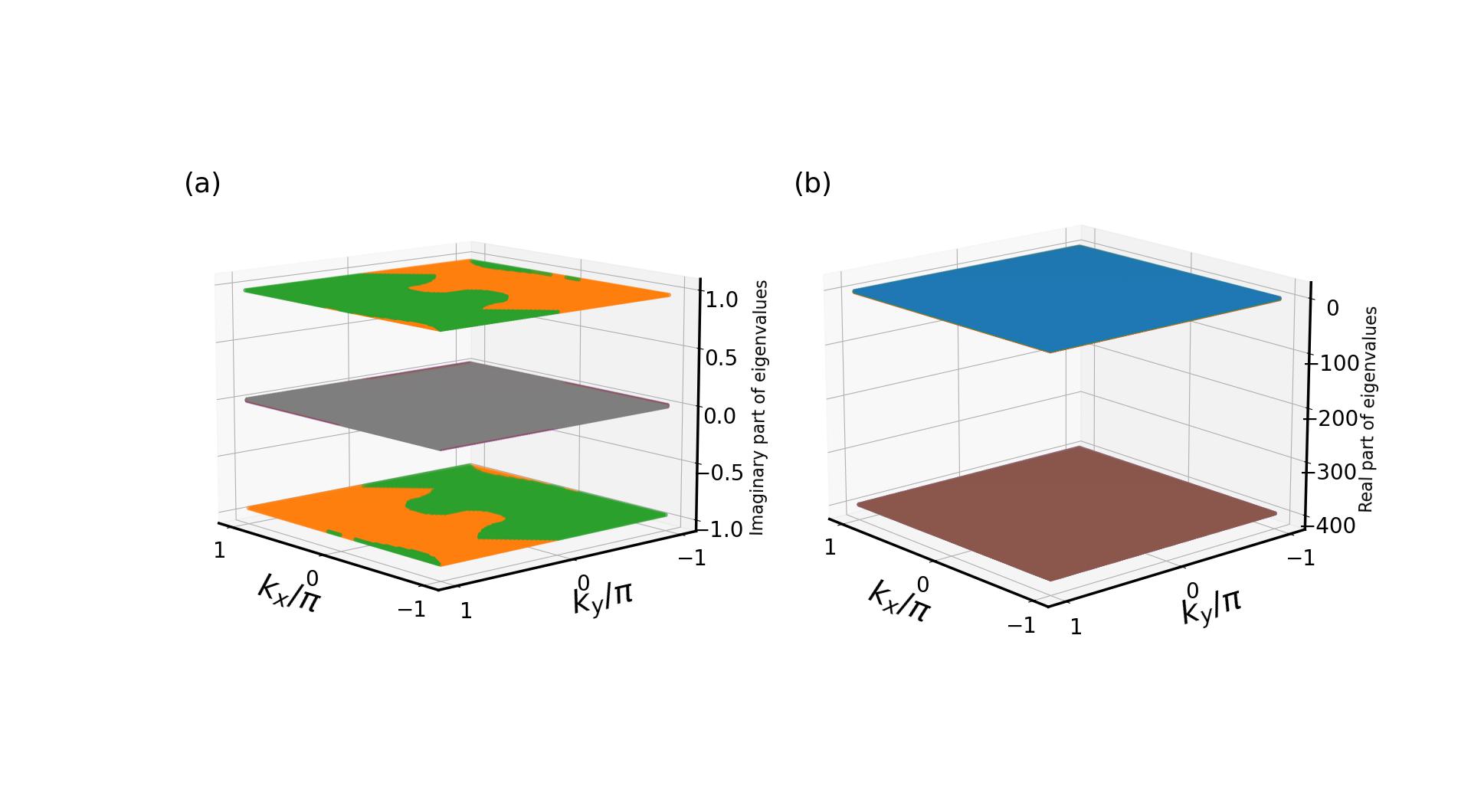}}
	\caption{{\bf Eigenvalue spectrum of the effective Hamiltonian \(\mathcal{H}(k_x, k_y)\) for coupled brusselators}: 
 (a) The imaginary parts of the eigenvalues split into two 
 parts (green and orange) due to eigenvalue switching. (b) The flat bands corresponding to the real part of the eigenvalues predominantly lie in the negative range, ensuring system stability. Despite an apparent visual 
coincidence, a closer inspection of (b) shows that bands 1–4 and  5–8 form tightly packed groups, with complex-conjugate pairings leading to six distinct real bands. 
Coupling parameters are \( w=0.0001 \) and \( s=0.95 \), other parameters are $g=0$, $a=1$, and $b=3$.
	}
	\label{Picture25}
\end{figure*}

\section{Effective Hamiltonian and band structure for brusselators}\label{app4}

The components of the effective Hamiltonian of Eq.\ \eqref{BFU_Hamiltonian}  for brusselator on-site dynamics are given as:
\begin{equation}
	\begin{aligned}
		\mathcal{H}_{11} &= 2X_A Y_A - b - 1 - g, \;\;
		\mathcal{H}_{12} = X_A^2, \\
		\mathcal{H}_{21} &= b - 2 X_A Y_A, \;\;
		\qquad \quad\;\mathcal{H}_{22} = -X_A^2, \\
		\mathcal{H}_{33} &= 2X_B Y_B - b - 1 - g, \;\;
		\mathcal{H}_{34} = X_B^2, \\
		\mathcal{H}_{43} &= b - 2 X_B Y_B, \;\;
		\qquad \quad\;\mathcal{H}_{44} = -X_B^2, \\
		\mathcal{H}_{55} &= 2X_C Y_C - b - 1 - g, \;\;
		\mathcal{H}_{56} = X_C^2, \\
		\mathcal{H}_{65} &= b - 2 X_C Y_C, \;\;
		\qquad \quad\;\mathcal{H}_{66} = -X_C^2, \\
		\mathcal{H}_{77} &= 2X_D Y_D - b - 1 - g, \;\;
		\mathcal{H}_{78} = X_D^2, \\
		\mathcal{H}_{87} &= b - 2 X_D Y_D, \;\;
		\qquad \quad\;\mathcal{H}_{88} = -X_D^2.
	\end{aligned}
\end{equation} We calculate the locally stable OD-states as
$(X_A, Y_A) = (X_B, Y_B) = (X_C, Y_C) = (X_D, Y_D) = (20.0, 0.15)$,
using Newton's method for the unidirectionally coupled four Brusselators with $s = 0.95$. This state changes to $ (X_A, Y_A) = (X_B, Y_B) = (X_C, Y_C) = (X_D, Y_D) = (5.0, 0.6)$,
for $s = 0.8$. Again, the effective Hamiltonian $\mathcal{H}(k_x, k_y)$ satisfies time-reversal symmetry and obeys inversion symmetry under the condition $(X_A, Y_A) = (X_C, Y_C)$ and $(X_B, Y_B) = (X_D, Y_D)$, which holds for these specific parameter values as well. In both cases, we compute the Zak phases \eqref{eqzak}, which remain $(0,0)$ for each band. 
When we swap the values and choose \( s = 0.0001 \) and \( w = 0.95 \) (or \( 0.80 \)), all oscillators transition to OD-states. In this case, the corresponding Zak phases become $(\pi, \pi)$ for each band. 

\par We further analyze the eigenvalue spectrum of the effective Hamiltonian \( \mathcal{H}(k_x, k_y) \) by varying \( k_x \in [-\pi, \pi] \) and \( k_y \in [-\pi, \pi] \), and plot the results in Fig.\ \eqref{Picture25} for the chosen coupling parameters \( w=0.0001 \) and \( s=0.95 \). We consider a total of \( k_{\text{points}} = 100 \) equally spaced points along each momentum direction. At first glance, Fig.\ \eqref{Picture25}  seems to suggest the presence of only two bands with nonzero imaginary parts in subfigure (a) and two distinct real bands in subfigure (b). This might lead one to anticipate possible eigenvalue degeneracies or band crossings for these parameter values. However, this apparent ambiguity arises primarily due to the large eigenvalue range \((0,-400]\) in subfigure (b), which masks the finer details of the band structure.

A closer inspection, particularly by zooming into subfigure (b), reveals that bands 1–4 are closely spaced in the range \([-2, 0)\), where bands 2 and 3 coincide as real part of the complex-conjugate pairs. Similarly, bands 5–8 cluster in the range \([-397.02, -396.98]\), with bands 5 and 8 also forming complex-conjugate pairs. Consequently, the system exhibits six real bands: four of them are purely real, while the remaining two are components of complex-conjugate quartets. This confirms the absence of exceptional points for \( w=0.0001 \) and \( s=0.95 \).

Apart from these spectral characteristics, the overall band structure qualitatively resembles that of the coupled SL-oscillators presented in Fig.\ \eqref{fig102104}. Notably, the bands associated with the nonzero imaginary eigenvalues in subfigure (a)
split into two unequal halves due to the eigenvalue switching. Meanwhile, the bands corresponding to the real eigenvalues in subfigure (b) remain predominantly negative, indicating system stability.

\section{Zak phases for a unit cell with weak couplings and corresponding lattice structures}\label{app5}

For completeness, we present here the Hamiltonian obtained when the unit cell is chosen with weak couplings assigned to its internal links.
Following the same derivation as described in Sec.\ \eqref{sec4}, the resulting Hamiltonian in momentum space takes the form
\begin{equation}\label{ouroldhamiltonian}
	\mathcal{H}(k_x, k_y) = 
	\scalebox{0.8}{$
		\begin{bmatrix}
			J_{XX}^{A} & J_{XY}^{A} & w  & 0 & 0 & 0 & se^{-i k_x} & 0 \\
			J_{YX}^{A} & J_{YY}^{A} & 0 & 0 & 0 & 0 & 0 & 0 \\
			s e^{i k_y} & 0 & J_{XX}^{B} & J_{XY}^{B} & w  & 0 & 0 & 0 \\
			0 & 0 & J_{YX}^{B} & J_{YY}^{B} & 0 & 0 & 0 & 0 \\
			0 & 0 & s e^{i k_x} & 0 & J_{XX}^{C} & J_{XY}^{C} & w  & 0 \\
			0 & 0 & 0 & 0 & J_{YX}^{C} & J_{YY}^{C} & 0 & 0 \\
			w  & 0 & 0 & 0 & s e^{-i k_y} & 0 & J_{XX}^{D} & J_{XY}^{D} \\
			0 & 0 & 0 & 0 & 0 & 0 & J_{YX}^{D} & J_{YY}^{D}
		\end{bmatrix}.
		$}
\end{equation}
As stated in the main text,  if the links, connecting nodes from the edge to nodes from the edge of the bulk, have weak couplings assigned, we observe edge-localized oscillations (here as for  Fig.~\ref{Picture2e}(b) and (d)), if these couplings  are strong as for Fig.~\ref{Picture2e}(a) and (c), we observe OD-states everywhere.

In more detail, let us keep $\alpha_i, \omega_i, \beta_i, g$ fixed and choose $s=8.0$ and $w=0.001$. To observe OD-states everywhere as suggested by the vanishing Zak phases for our first choice of Hamiltonian Eq.\ \eqref{ourhamiltonian},  it seems first at odds that instead of OD-states we observe edge oscillations for Fig.~\ref{Picture2e} (d). Therefore, let us extend the geometry of Fig.~\ref{Picture2e}(d) to a $6\times 6$ grid with the same alternating coupling assignments as in Fig.~\ref{Picture2e}(d). In this case, for the $6\times 6$ grid of Fig.~\ref{Picture2e}(a), indeed we would observe  OD-states everywhere, as the links orthogonal to the edge, connecting to the nodes in the boundary of the bulk, have strong couplings assigned. Switching here the values of $s$ and $w$ to $s=0.001$ and $w=8.0$, corresponding to Zak phases $(\pi,\pi)$, leads to edge-localized oscillations on a $6\times 6$ lattice of Fig.~\ref{Picture2e}(b).\\
If, instead of switching the coupling values on a grid of the same size, we change the size of the grid say again to $4\times 4$, the single plaquette in the bulk is weakly coupled to the edge (without any switch) for $s=8.0$ and $w=0.001$, going along with edge-localized oscillations (Fig.~\ref{Picture2e}(d)). Here, the corresponding non-vanishing Zak phases $(\pi,\pi)$ for $s=8.0$ and $w=0.001$ are obtained from the alternative Hamiltonian Eq.\ \eqref{ouroldhamiltonian}, derived for  weak couplings inside a unit cell. \\

\begin{figure}[htp]
	\centerline{\includegraphics[width=0.4\textwidth]{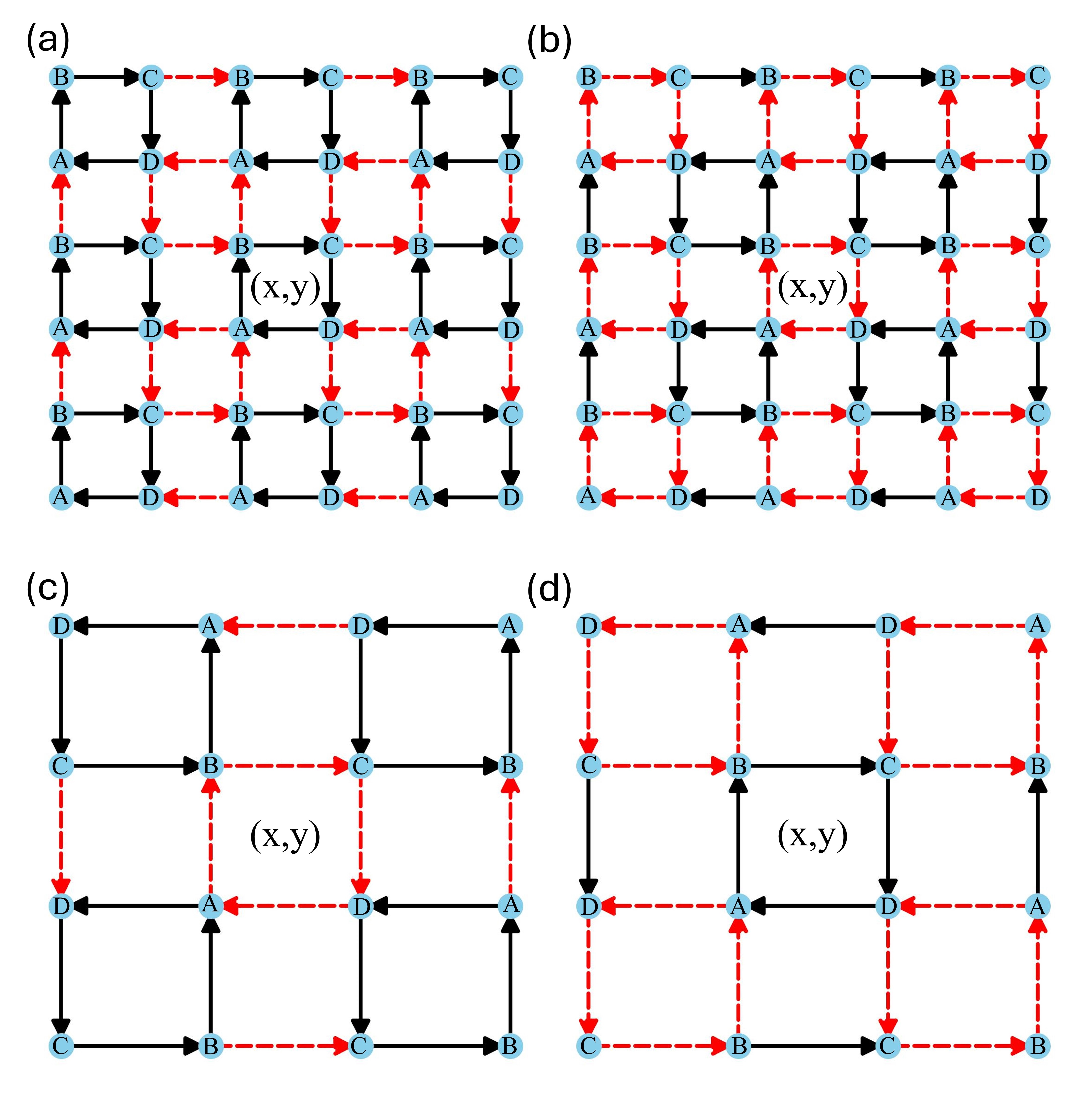}}
	\caption{{\bf Dependence on the lattice size}: Same as Fig.\ \eqref{Picture2}, but also for $6\times 6$ lattices. For further explanations see the text.
	}
	\label{Picture2e}
\end{figure}

This means, which Hamiltonian to choose (i.e., which coupling assignments to the internal links of a unit cell) for calculating the bulk bands and the appropriate Zak phases, depends on the distance between the unit cell and the edge. (The distance is measured as the minimal number of connecting links.) For Zak phases zero as in (a) and (c), the connecting links to the edge should be strong, therefore for an even (odd) distance of the edge from the unit cell ($2$ in (a) and $1$ in (c)), its internal links should be strong (weak), respectively; accordingly, for Zak phases $\pi$ as in (b) and (d), the connecting  links should be weak, therefore for an even (odd) distance from the unit cell ($2$ in (b) and $1$ in (d)), its internal links should be weak (strong), respectively. \\
For these choices, the non-vanishing Zak phases of the bulk bands of the Hamiltonian are in one-to-one correspondence to edge-localized oscillations on a finite grid with even or odd distances of unit cells to the edge. Thus, whether the distance is even or odd determines the choice of coupling assignment to the unit cell (and the  respective Hamiltonian).

\typeout{}
\bibliography{library}

\begin{thebibliography}{78}%
\makeatletter
\providecommand \@ifxundefined [1]{%
 \@ifx{#1\undefined}
}%
\providecommand \@ifnum [1]{%
 \ifnum #1\expandafter \@firstoftwo
 \else \expandafter \@secondoftwo
 \fi
}%
\providecommand \@ifx [1]{%
 \ifx #1\expandafter \@firstoftwo
 \else \expandafter \@secondoftwo
 \fi
}%
\providecommand \natexlab [1]{#1}%
\providecommand \enquote  [1]{``#1''}%
\providecommand \bibnamefont  [1]{#1}%
\providecommand \bibfnamefont [1]{#1}%
\providecommand \citenamefont [1]{#1}%
\providecommand \href@noop [0]{\@secondoftwo}%
\providecommand \href [0]{\begingroup \@sanitize@url \@href}%
\providecommand \@href[1]{\@@startlink{#1}\@@href}%
\providecommand \@@href[1]{\endgroup#1\@@endlink}%
\providecommand \@sanitize@url [0]{\catcode `\\12\catcode `\$12\catcode
  `\&12\catcode `\#12\catcode `\^12\catcode `\_12\catcode `\%12\relax}%
\providecommand \@@startlink[1]{}%
\providecommand \@@endlink[0]{}%
\providecommand \url  [0]{\begingroup\@sanitize@url \@url }%
\providecommand \@url [1]{\endgroup\@href {#1}{\urlprefix }}%
\providecommand \urlprefix  [0]{URL }%
\providecommand \Eprint [0]{\href }%
\providecommand \doibase [0]{https://doi.org/}%
\providecommand \selectlanguage [0]{\@gobble}%
\providecommand \bibinfo  [0]{\@secondoftwo}%
\providecommand \bibfield  [0]{\@secondoftwo}%
\providecommand \translation [1]{[#1]}%
\providecommand \BibitemOpen [0]{}%
\providecommand \bibitemStop [0]{}%
\providecommand \bibitemNoStop [0]{.\EOS\space}%
\providecommand \EOS [0]{\spacefactor3000\relax}%
\providecommand \BibitemShut  [1]{\csname bibitem#1\endcsname}%
\let\auto@bib@innerbib\@empty
\bibitem [{\citenamefont {Cohen}\ \emph {et~al.}(2019)\citenamefont {Cohen},
  \citenamefont {Larocque}, \citenamefont {Bouchard}, \citenamefont
  {Nejadsattari}, \citenamefont {Gefen},\ and\ \citenamefont
  {Karimi}}]{cohen2019geometric}%
  \BibitemOpen
  \bibfield  {author} {\bibinfo {author} {\bibfnamefont {E.}~\bibnamefont
  {Cohen}}, \bibinfo {author} {\bibfnamefont {H.}~\bibnamefont {Larocque}},
  \bibinfo {author} {\bibfnamefont {F.}~\bibnamefont {Bouchard}}, \bibinfo
  {author} {\bibfnamefont {F.}~\bibnamefont {Nejadsattari}}, \bibinfo {author}
  {\bibfnamefont {Y.}~\bibnamefont {Gefen}},\ and\ \bibinfo {author}
  {\bibfnamefont {E.}~\bibnamefont {Karimi}},\ }\bibfield  {title} {\bibinfo
  {title} {Geometric phase from aharonov--bohm to pancharatnam--berry and
  beyond},\ }\href@noop {} {\bibfield  {journal} {\bibinfo  {journal} {Nature
  Reviews Physics}\ }\textbf {\bibinfo {volume} {1}},\ \bibinfo {pages} {437}
  (\bibinfo {year} {2019})}\BibitemShut {NoStop}%
\bibitem [{\citenamefont {Imhof}\ \emph {et~al.}(2018)\citenamefont {Imhof},
  \citenamefont {Berger}, \citenamefont {Bayer}, \citenamefont {Brehm},
  \citenamefont {Molenkamp}, \citenamefont {Kiessling}, \citenamefont
  {Schindler}, \citenamefont {Lee}, \citenamefont {Greiter}, \citenamefont
  {Neupert} \emph {et~al.}}]{imhof2018topolectrical}%
  \BibitemOpen
  \bibfield  {author} {\bibinfo {author} {\bibfnamefont {S.}~\bibnamefont
  {Imhof}}, \bibinfo {author} {\bibfnamefont {C.}~\bibnamefont {Berger}},
  \bibinfo {author} {\bibfnamefont {F.}~\bibnamefont {Bayer}}, \bibinfo
  {author} {\bibfnamefont {J.}~\bibnamefont {Brehm}}, \bibinfo {author}
  {\bibfnamefont {L.~W.}\ \bibnamefont {Molenkamp}}, \bibinfo {author}
  {\bibfnamefont {T.}~\bibnamefont {Kiessling}}, \bibinfo {author}
  {\bibfnamefont {F.}~\bibnamefont {Schindler}}, \bibinfo {author}
  {\bibfnamefont {C.~H.}\ \bibnamefont {Lee}}, \bibinfo {author} {\bibfnamefont
  {M.}~\bibnamefont {Greiter}}, \bibinfo {author} {\bibfnamefont
  {T.}~\bibnamefont {Neupert}}, \emph {et~al.},\ }\bibfield  {title} {\bibinfo
  {title} {Topolectrical-circuit realization of topological corner modes},\
  }\href@noop {} {\bibfield  {journal} {\bibinfo  {journal} {Nature Physics}\
  }\textbf {\bibinfo {volume} {14}},\ \bibinfo {pages} {925} (\bibinfo {year}
  {2018})}\BibitemShut {NoStop}%
\bibitem [{\citenamefont {Liu}\ \emph {et~al.}(2017)\citenamefont {Liu},
  \citenamefont {Lian}, \citenamefont {Li}, \citenamefont {Xu},\ and\
  \citenamefont {Duan}}]{liu2017pseudospins}%
  \BibitemOpen
  \bibfield  {author} {\bibinfo {author} {\bibfnamefont {Y.}~\bibnamefont
  {Liu}}, \bibinfo {author} {\bibfnamefont {C.-S.}\ \bibnamefont {Lian}},
  \bibinfo {author} {\bibfnamefont {Y.}~\bibnamefont {Li}}, \bibinfo {author}
  {\bibfnamefont {Y.}~\bibnamefont {Xu}},\ and\ \bibinfo {author}
  {\bibfnamefont {W.}~\bibnamefont {Duan}},\ }\bibfield  {title} {\bibinfo
  {title} {Pseudospins and topological effects of phonons in a kekul{\'e}
  lattice},\ }\href@noop {} {\bibfield  {journal} {\bibinfo  {journal}
  {Physical Review Letters}\ }\textbf {\bibinfo {volume} {119}},\ \bibinfo
  {pages} {255901} (\bibinfo {year} {2017})}\BibitemShut {NoStop}%
\bibitem [{\citenamefont {Ni}\ \emph {et~al.}(2017)\citenamefont {Ni},
  \citenamefont {Gorlach}, \citenamefont {Alu},\ and\ \citenamefont
  {Khanikaev}}]{ni2017topological}%
  \BibitemOpen
  \bibfield  {author} {\bibinfo {author} {\bibfnamefont {X.}~\bibnamefont
  {Ni}}, \bibinfo {author} {\bibfnamefont {M.~A.}\ \bibnamefont {Gorlach}},
  \bibinfo {author} {\bibfnamefont {A.}~\bibnamefont {Alu}},\ and\ \bibinfo
  {author} {\bibfnamefont {A.~B.}\ \bibnamefont {Khanikaev}},\ }\bibfield
  {title} {\bibinfo {title} {Topological edge states in acoustic kagome
  lattices},\ }\href@noop {} {\bibfield  {journal} {\bibinfo  {journal} {New
  Journal of Physics}\ }\textbf {\bibinfo {volume} {19}},\ \bibinfo {pages}
  {055002} (\bibinfo {year} {2017})}\BibitemShut {NoStop}%
\bibitem [{\citenamefont {Xiao}\ \emph {et~al.}(2014)\citenamefont {Xiao},
  \citenamefont {Zhang},\ and\ \citenamefont {Chan}}]{xiao2014surface}%
  \BibitemOpen
  \bibfield  {author} {\bibinfo {author} {\bibfnamefont {M.}~\bibnamefont
  {Xiao}}, \bibinfo {author} {\bibfnamefont {Z.}~\bibnamefont {Zhang}},\ and\
  \bibinfo {author} {\bibfnamefont {C.~T.}\ \bibnamefont {Chan}},\ }\bibfield
  {title} {\bibinfo {title} {Surface impedance and bulk band geometric phases
  in one-dimensional systems},\ }\href@noop {} {\bibfield  {journal} {\bibinfo
  {journal} {Physical Review X}\ }\textbf {\bibinfo {volume} {4}},\ \bibinfo
  {pages} {021017} (\bibinfo {year} {2014})}\BibitemShut {NoStop}%
\bibitem [{\citenamefont {Yuce}\ and\ \citenamefont
  {Ramezani}(2019)}]{yuce2019topological}%
  \BibitemOpen
  \bibfield  {author} {\bibinfo {author} {\bibfnamefont {C.}~\bibnamefont
  {Yuce}}\ and\ \bibinfo {author} {\bibfnamefont {H.}~\bibnamefont
  {Ramezani}},\ }\bibfield  {title} {\bibinfo {title} {Topological states in a
  non-hermitian two-dimensional su-schrieffer-heeger model},\ }\href@noop {}
  {\bibfield  {journal} {\bibinfo  {journal} {Physical Review A}\ }\textbf
  {\bibinfo {volume} {100}},\ \bibinfo {pages} {032102} (\bibinfo {year}
  {2019})}\BibitemShut {NoStop}%
\bibitem [{\citenamefont {Sone}\ and\ \citenamefont
  {Ashida}(2019)}]{sone2019anomalous}%
  \BibitemOpen
  \bibfield  {author} {\bibinfo {author} {\bibfnamefont {K.}~\bibnamefont
  {Sone}}\ and\ \bibinfo {author} {\bibfnamefont {Y.}~\bibnamefont {Ashida}},\
  }\bibfield  {title} {\bibinfo {title} {Anomalous topological active matter},\
  }\href@noop {} {\bibfield  {journal} {\bibinfo  {journal} {Physical Review
  Letters}\ }\textbf {\bibinfo {volume} {123}},\ \bibinfo {pages} {205502}
  (\bibinfo {year} {2019})}\BibitemShut {NoStop}%
\bibitem [{\citenamefont {Shankar}\ \emph {et~al.}(2022)\citenamefont
  {Shankar}, \citenamefont {Souslov}, \citenamefont {Bowick}, \citenamefont
  {Marchetti},\ and\ \citenamefont {Vitelli}}]{shankar2022topological}%
  \BibitemOpen
  \bibfield  {author} {\bibinfo {author} {\bibfnamefont {S.}~\bibnamefont
  {Shankar}}, \bibinfo {author} {\bibfnamefont {A.}~\bibnamefont {Souslov}},
  \bibinfo {author} {\bibfnamefont {M.~J.}\ \bibnamefont {Bowick}}, \bibinfo
  {author} {\bibfnamefont {M.~C.}\ \bibnamefont {Marchetti}},\ and\ \bibinfo
  {author} {\bibfnamefont {V.}~\bibnamefont {Vitelli}},\ }\bibfield  {title}
  {\bibinfo {title} {Topological active matter},\ }\href@noop {} {\bibfield
  {journal} {\bibinfo  {journal} {Nature Reviews Physics}\ }\textbf {\bibinfo
  {volume} {4}},\ \bibinfo {pages} {380} (\bibinfo {year} {2022})}\BibitemShut
  {NoStop}%
\bibitem [{\citenamefont {Knebel}\ \emph {et~al.}(2020)\citenamefont {Knebel},
  \citenamefont {Geiger},\ and\ \citenamefont {Frey}}]{knebel2020topological}%
  \BibitemOpen
  \bibfield  {author} {\bibinfo {author} {\bibfnamefont {J.}~\bibnamefont
  {Knebel}}, \bibinfo {author} {\bibfnamefont {P.~M.}\ \bibnamefont {Geiger}},\
  and\ \bibinfo {author} {\bibfnamefont {E.}~\bibnamefont {Frey}},\ }\bibfield
  {title} {\bibinfo {title} {Topological phase transition in coupled
  rock-paper-scissors cycles},\ }\href@noop {} {\bibfield  {journal} {\bibinfo
  {journal} {Physical Review Letters}\ }\textbf {\bibinfo {volume} {125}},\
  \bibinfo {pages} {258301} (\bibinfo {year} {2020})}\BibitemShut {NoStop}%
\bibitem [{\citenamefont {Yoshida}\ \emph {et~al.}(2021)\citenamefont
  {Yoshida}, \citenamefont {Mizoguchi},\ and\ \citenamefont
  {Hatsugai}}]{yoshida2021chiral}%
  \BibitemOpen
  \bibfield  {author} {\bibinfo {author} {\bibfnamefont {T.}~\bibnamefont
  {Yoshida}}, \bibinfo {author} {\bibfnamefont {T.}~\bibnamefont {Mizoguchi}},\
  and\ \bibinfo {author} {\bibfnamefont {Y.}~\bibnamefont {Hatsugai}},\
  }\bibfield  {title} {\bibinfo {title} {Chiral edge modes in evolutionary game
  theory: A kagome network of rock-paper-scissors cycles},\ }\href@noop {}
  {\bibfield  {journal} {\bibinfo  {journal} {Physical Review E}\ }\textbf
  {\bibinfo {volume} {104}},\ \bibinfo {pages} {025003} (\bibinfo {year}
  {2021})}\BibitemShut {NoStop}%
\bibitem [{\citenamefont {Bai}\ \emph {et~al.}(2025)\citenamefont {Bai},
  \citenamefont {Chatterjee}, \citenamefont {Shekhar}, \citenamefont
  {Deshpande}, \citenamefont {Bhattacharyya},\ and\ \citenamefont
  {Hens}}]{bai2025topological}%
  \BibitemOpen
  \bibfield  {author} {\bibinfo {author} {\bibfnamefont {R.}~\bibnamefont
  {Bai}}, \bibinfo {author} {\bibfnamefont {S.}~\bibnamefont {Chatterjee}},
  \bibinfo {author} {\bibfnamefont {U.}~\bibnamefont {Shekhar}}, \bibinfo
  {author} {\bibfnamefont {A.}~\bibnamefont {Deshpande}}, \bibinfo {author}
  {\bibfnamefont {S.}~\bibnamefont {Bhattacharyya}},\ and\ \bibinfo {author}
  {\bibfnamefont {C.}~\bibnamefont {Hens}},\ }\bibfield  {title} {\bibinfo
  {title} {Topological phase transition in antisymmetric lotka-volterra doublet
  chain},\ }\href@noop {} {\bibfield  {journal} {\bibinfo  {journal} {Physical
  Review E}\ }\textbf {\bibinfo {volume} {111}},\ \bibinfo {pages} {L022203}
  (\bibinfo {year} {2025})}\BibitemShut {NoStop}%
\bibitem [{\citenamefont {Kane}\ and\ \citenamefont
  {Lubensky}(2014)}]{kane2014topological}%
  \BibitemOpen
  \bibfield  {author} {\bibinfo {author} {\bibfnamefont {C.~L.}\ \bibnamefont
  {Kane}}\ and\ \bibinfo {author} {\bibfnamefont {T.~C.}\ \bibnamefont
  {Lubensky}},\ }\bibfield  {title} {\bibinfo {title} {Topological boundary
  modes in isostatic lattices},\ }\href@noop {} {\bibfield  {journal} {\bibinfo
   {journal} {Nature Physics}\ }\textbf {\bibinfo {volume} {10}},\ \bibinfo
  {pages} {39} (\bibinfo {year} {2014})}\BibitemShut {NoStop}%
\bibitem [{\citenamefont {Lubensky}\ \emph {et~al.}(2015)\citenamefont
  {Lubensky}, \citenamefont {Kane}, \citenamefont {Mao}, \citenamefont
  {Souslov},\ and\ \citenamefont {Sun}}]{lubensky2015phonons}%
  \BibitemOpen
  \bibfield  {author} {\bibinfo {author} {\bibfnamefont {T.}~\bibnamefont
  {Lubensky}}, \bibinfo {author} {\bibfnamefont {C.}~\bibnamefont {Kane}},
  \bibinfo {author} {\bibfnamefont {X.}~\bibnamefont {Mao}}, \bibinfo {author}
  {\bibfnamefont {A.}~\bibnamefont {Souslov}},\ and\ \bibinfo {author}
  {\bibfnamefont {K.}~\bibnamefont {Sun}},\ }\bibfield  {title} {\bibinfo
  {title} {Phonons and elasticity in critically coordinated lattices},\
  }\href@noop {} {\bibfield  {journal} {\bibinfo  {journal} {Reports on
  Progress in Physics}\ }\textbf {\bibinfo {volume} {78}},\ \bibinfo {pages}
  {073901} (\bibinfo {year} {2015})}\BibitemShut {NoStop}%
\bibitem [{\citenamefont {Paulose}\ \emph {et~al.}(2015)\citenamefont
  {Paulose}, \citenamefont {Chen},\ and\ \citenamefont
  {Vitelli}}]{paulose2015topological}%
  \BibitemOpen
  \bibfield  {author} {\bibinfo {author} {\bibfnamefont {J.}~\bibnamefont
  {Paulose}}, \bibinfo {author} {\bibfnamefont {B.~G.-g.}\ \bibnamefont
  {Chen}},\ and\ \bibinfo {author} {\bibfnamefont {V.}~\bibnamefont
  {Vitelli}},\ }\bibfield  {title} {\bibinfo {title} {Topological modes bound
  to dislocations in mechanical metamaterials},\ }\href@noop {} {\bibfield
  {journal} {\bibinfo  {journal} {Nature Physics}\ }\textbf {\bibinfo {volume}
  {11}},\ \bibinfo {pages} {153} (\bibinfo {year} {2015})}\BibitemShut
  {NoStop}%
\bibitem [{\citenamefont {Szolnoki}\ \emph {et~al.}(2009)\citenamefont
  {Szolnoki}, \citenamefont {Perc},\ and\ \citenamefont
  {Szab{\'o}}}]{szolnoki2009topology}%
  \BibitemOpen
  \bibfield  {author} {\bibinfo {author} {\bibfnamefont {A.}~\bibnamefont
  {Szolnoki}}, \bibinfo {author} {\bibfnamefont {M.}~\bibnamefont {Perc}},\
  and\ \bibinfo {author} {\bibfnamefont {G.}~\bibnamefont {Szab{\'o}}},\
  }\bibfield  {title} {\bibinfo {title} {Topology-independent impact of noise
  on cooperation in spatial public goods games},\ }\href@noop {} {\bibfield
  {journal} {\bibinfo  {journal} {Physical Review E—Statistical, Nonlinear,
  and Soft Matter Physics}\ }\textbf {\bibinfo {volume} {80}},\ \bibinfo
  {pages} {056109} (\bibinfo {year} {2009})}\BibitemShut {NoStop}%
\bibitem [{\citenamefont {Helbing}\ \emph {et~al.}(2010)\citenamefont
  {Helbing}, \citenamefont {Szolnoki}, \citenamefont {Perc},\ and\
  \citenamefont {Szab{\'o}}}]{helbing2010evolutionary}%
  \BibitemOpen
  \bibfield  {author} {\bibinfo {author} {\bibfnamefont {D.}~\bibnamefont
  {Helbing}}, \bibinfo {author} {\bibfnamefont {A.}~\bibnamefont {Szolnoki}},
  \bibinfo {author} {\bibfnamefont {M.}~\bibnamefont {Perc}},\ and\ \bibinfo
  {author} {\bibfnamefont {G.}~\bibnamefont {Szab{\'o}}},\ }\bibfield  {title}
  {\bibinfo {title} {Evolutionary establishment of moral and double moral
  standards through spatial interactions},\ }\href@noop {} {\bibfield
  {journal} {\bibinfo  {journal} {PLoS Computational Biology}\ }\textbf
  {\bibinfo {volume} {6}},\ \bibinfo {pages} {e1000758} (\bibinfo {year}
  {2010})}\BibitemShut {NoStop}%
\bibitem [{\citenamefont {Perc}\ \emph {et~al.}(2013)\citenamefont {Perc},
  \citenamefont {G{\'o}mez-Gardenes}, \citenamefont {Szolnoki}, \citenamefont
  {Flor{\'\i}a},\ and\ \citenamefont {Moreno}}]{perc2013evolutionary}%
  \BibitemOpen
  \bibfield  {author} {\bibinfo {author} {\bibfnamefont {M.}~\bibnamefont
  {Perc}}, \bibinfo {author} {\bibfnamefont {J.}~\bibnamefont
  {G{\'o}mez-Gardenes}}, \bibinfo {author} {\bibfnamefont {A.}~\bibnamefont
  {Szolnoki}}, \bibinfo {author} {\bibfnamefont {L.~M.}\ \bibnamefont
  {Flor{\'\i}a}},\ and\ \bibinfo {author} {\bibfnamefont {Y.}~\bibnamefont
  {Moreno}},\ }\bibfield  {title} {\bibinfo {title} {Evolutionary dynamics of
  group interactions on structured populations: a review},\ }\href@noop {}
  {\bibfield  {journal} {\bibinfo  {journal} {Journal of the Royal Society
  Interface}\ }\textbf {\bibinfo {volume} {10}},\ \bibinfo {pages} {20120997}
  (\bibinfo {year} {2013})}\BibitemShut {NoStop}%
\bibitem [{\citenamefont {Szolnoki}\ \emph {et~al.}(2014)\citenamefont
  {Szolnoki}, \citenamefont {Mobilia}, \citenamefont {Jiang}, \citenamefont
  {Szczesny}, \citenamefont {Rucklidge},\ and\ \citenamefont
  {Perc}}]{szolnoki2014cyclic}%
  \BibitemOpen
  \bibfield  {author} {\bibinfo {author} {\bibfnamefont {A.}~\bibnamefont
  {Szolnoki}}, \bibinfo {author} {\bibfnamefont {M.}~\bibnamefont {Mobilia}},
  \bibinfo {author} {\bibfnamefont {L.-L.}\ \bibnamefont {Jiang}}, \bibinfo
  {author} {\bibfnamefont {B.}~\bibnamefont {Szczesny}}, \bibinfo {author}
  {\bibfnamefont {A.~M.}\ \bibnamefont {Rucklidge}},\ and\ \bibinfo {author}
  {\bibfnamefont {M.}~\bibnamefont {Perc}},\ }\bibfield  {title} {\bibinfo
  {title} {Cyclic dominance in evolutionary games: a review},\ }\href@noop {}
  {\bibfield  {journal} {\bibinfo  {journal} {Journal of the Royal Society
  Interface}\ }\textbf {\bibinfo {volume} {11}},\ \bibinfo {pages} {20140735}
  (\bibinfo {year} {2014})}\BibitemShut {NoStop}%
\bibitem [{\citenamefont {Tang}\ \emph {et~al.}(2021)\citenamefont {Tang},
  \citenamefont {Agudo-Canalejo},\ and\ \citenamefont
  {Golestanian}}]{tang2021topology}%
  \BibitemOpen
  \bibfield  {author} {\bibinfo {author} {\bibfnamefont {E.}~\bibnamefont
  {Tang}}, \bibinfo {author} {\bibfnamefont {J.}~\bibnamefont
  {Agudo-Canalejo}},\ and\ \bibinfo {author} {\bibfnamefont {R.}~\bibnamefont
  {Golestanian}},\ }\bibfield  {title} {\bibinfo {title} {Topology protects
  chiral edge currents in stochastic systems},\ }\href@noop {} {\bibfield
  {journal} {\bibinfo  {journal} {Physical Review X}\ }\textbf {\bibinfo
  {volume} {11}},\ \bibinfo {pages} {031015} (\bibinfo {year}
  {2021})}\BibitemShut {NoStop}%
\bibitem [{\citenamefont {Zheng}\ and\ \citenamefont
  {Tang}(2024)}]{zheng2024topological}%
  \BibitemOpen
  \bibfield  {author} {\bibinfo {author} {\bibfnamefont {C.}~\bibnamefont
  {Zheng}}\ and\ \bibinfo {author} {\bibfnamefont {E.}~\bibnamefont {Tang}},\
  }\bibfield  {title} {\bibinfo {title} {A topological mechanism for robust and
  efficient global oscillations in biological networks},\ }\href@noop {}
  {\bibfield  {journal} {\bibinfo  {journal} {Nature Communications}\ }\textbf
  {\bibinfo {volume} {15}},\ \bibinfo {pages} {6453} (\bibinfo {year}
  {2024})}\BibitemShut {NoStop}%
\bibitem [{\citenamefont {Sone}\ \emph {et~al.}(2022)\citenamefont {Sone},
  \citenamefont {Ashida},\ and\ \citenamefont {Sagawa}}]{sone2022topological}%
  \BibitemOpen
  \bibfield  {author} {\bibinfo {author} {\bibfnamefont {K.}~\bibnamefont
  {Sone}}, \bibinfo {author} {\bibfnamefont {Y.}~\bibnamefont {Ashida}},\ and\
  \bibinfo {author} {\bibfnamefont {T.}~\bibnamefont {Sagawa}},\ }\bibfield
  {title} {\bibinfo {title} {Topological synchronization of coupled nonlinear
  oscillators},\ }\href@noop {} {\bibfield  {journal} {\bibinfo  {journal}
  {Physical Review Research}\ }\textbf {\bibinfo {volume} {4}},\ \bibinfo
  {pages} {023211} (\bibinfo {year} {2022})}\BibitemShut {NoStop}%
\bibitem [{\citenamefont {Qi}\ \emph {et~al.}(2006)\citenamefont {Qi},
  \citenamefont {Wu},\ and\ \citenamefont {Zhang}}]{qi2006topological}%
  \BibitemOpen
  \bibfield  {author} {\bibinfo {author} {\bibfnamefont {X.-L.}\ \bibnamefont
  {Qi}}, \bibinfo {author} {\bibfnamefont {Y.-S.}\ \bibnamefont {Wu}},\ and\
  \bibinfo {author} {\bibfnamefont {S.-C.}\ \bibnamefont {Zhang}},\ }\bibfield
  {title} {\bibinfo {title} {Topological quantization of the spin hall effect
  in two-dimensional paramagnetic semiconductors},\ }\href@noop {} {\bibfield
  {journal} {\bibinfo  {journal} {Physical Review B—Condensed Matter and
  Materials Physics}\ }\textbf {\bibinfo {volume} {74}},\ \bibinfo {pages}
  {085308} (\bibinfo {year} {2006})}\BibitemShut {NoStop}%
\bibitem [{\citenamefont {Sone}\ \emph {et~al.}(2020)\citenamefont {Sone},
  \citenamefont {Ashida},\ and\ \citenamefont {Sagawa}}]{sone2020exceptional}%
  \BibitemOpen
  \bibfield  {author} {\bibinfo {author} {\bibfnamefont {K.}~\bibnamefont
  {Sone}}, \bibinfo {author} {\bibfnamefont {Y.}~\bibnamefont {Ashida}},\ and\
  \bibinfo {author} {\bibfnamefont {T.}~\bibnamefont {Sagawa}},\ }\bibfield
  {title} {\bibinfo {title} {Exceptional non-hermitian topological edge mode
  and its application to active matter},\ }\href@noop {} {\bibfield  {journal}
  {\bibinfo  {journal} {Nature Communications}\ }\textbf {\bibinfo {volume}
  {11}},\ \bibinfo {pages} {5745} (\bibinfo {year} {2020})}\BibitemShut
  {NoStop}%
\bibitem [{\citenamefont {Nakajima}\ \emph {et~al.}(2005)\citenamefont
  {Nakajima}, \citenamefont {Imai}, \citenamefont {Ito}, \citenamefont
  {Nishiwaki}, \citenamefont {Murayama}, \citenamefont {Iwasaki}, \citenamefont
  {Oyama},\ and\ \citenamefont {Kondo}}]{nakajima2005reconstitution}%
  \BibitemOpen
  \bibfield  {author} {\bibinfo {author} {\bibfnamefont {M.}~\bibnamefont
  {Nakajima}}, \bibinfo {author} {\bibfnamefont {K.}~\bibnamefont {Imai}},
  \bibinfo {author} {\bibfnamefont {H.}~\bibnamefont {Ito}}, \bibinfo {author}
  {\bibfnamefont {T.}~\bibnamefont {Nishiwaki}}, \bibinfo {author}
  {\bibfnamefont {Y.}~\bibnamefont {Murayama}}, \bibinfo {author}
  {\bibfnamefont {H.}~\bibnamefont {Iwasaki}}, \bibinfo {author} {\bibfnamefont
  {T.}~\bibnamefont {Oyama}},\ and\ \bibinfo {author} {\bibfnamefont
  {T.}~\bibnamefont {Kondo}},\ }\bibfield  {title} {\bibinfo {title}
  {Reconstitution of circadian oscillation of cyanobacterial kaic
  phosphorylation in vitro},\ }\href@noop {} {\bibfield  {journal} {\bibinfo
  {journal} {Science}\ }\textbf {\bibinfo {volume} {308}},\ \bibinfo {pages}
  {414} (\bibinfo {year} {2005})}\BibitemShut {NoStop}%
\bibitem [{\citenamefont {Rust}\ \emph {et~al.}(2007)\citenamefont {Rust},
  \citenamefont {Markson}, \citenamefont {Lane}, \citenamefont {Fisher},\ and\
  \citenamefont {O'Shea}}]{rust2007ordered}%
  \BibitemOpen
  \bibfield  {author} {\bibinfo {author} {\bibfnamefont {M.~J.}\ \bibnamefont
  {Rust}}, \bibinfo {author} {\bibfnamefont {J.~S.}\ \bibnamefont {Markson}},
  \bibinfo {author} {\bibfnamefont {W.~S.}\ \bibnamefont {Lane}}, \bibinfo
  {author} {\bibfnamefont {D.~S.}\ \bibnamefont {Fisher}},\ and\ \bibinfo
  {author} {\bibfnamefont {E.~K.}\ \bibnamefont {O'Shea}},\ }\bibfield  {title}
  {\bibinfo {title} {Ordered phosphorylation governs oscillation of a
  three-protein circadian clock},\ }\href@noop {} {\bibfield  {journal}
  {\bibinfo  {journal} {Science}\ }\textbf {\bibinfo {volume} {318}},\ \bibinfo
  {pages} {809} (\bibinfo {year} {2007})}\BibitemShut {NoStop}%
\bibitem [{\citenamefont {Goldbeter}(1991)}]{goldbeter1991minimal}%
  \BibitemOpen
  \bibfield  {author} {\bibinfo {author} {\bibfnamefont {A.}~\bibnamefont
  {Goldbeter}},\ }\bibfield  {title} {\bibinfo {title} {A minimal cascade model
  for the mitotic oscillator involving cyclin and cdc2 kinase.},\ }\href@noop
  {} {\bibfield  {journal} {\bibinfo  {journal} {Proceedings of the National
  Academy of Sciences}\ }\textbf {\bibinfo {volume} {88}},\ \bibinfo {pages}
  {9107} (\bibinfo {year} {1991})}\BibitemShut {NoStop}%
\bibitem [{\citenamefont {Pomerening}\ \emph {et~al.}(2005)\citenamefont
  {Pomerening}, \citenamefont {Kim},\ and\ \citenamefont
  {Ferrell}}]{pomerening2005systems}%
  \BibitemOpen
  \bibfield  {author} {\bibinfo {author} {\bibfnamefont {J.~R.}\ \bibnamefont
  {Pomerening}}, \bibinfo {author} {\bibfnamefont {S.~Y.}\ \bibnamefont
  {Kim}},\ and\ \bibinfo {author} {\bibfnamefont {J.~E.}\ \bibnamefont
  {Ferrell}},\ }\bibfield  {title} {\bibinfo {title} {Systems-level dissection
  of the cell-cycle oscillator: bypassing positive feedback produces damped
  oscillations},\ }\href@noop {} {\bibfield  {journal} {\bibinfo  {journal}
  {Cell}\ }\textbf {\bibinfo {volume} {122}},\ \bibinfo {pages} {565} (\bibinfo
  {year} {2005})}\BibitemShut {NoStop}%
\bibitem [{\citenamefont {Danino}\ \emph {et~al.}(2010)\citenamefont {Danino},
  \citenamefont {Mondrag{\'o}n-Palomino}, \citenamefont {Tsimring},\ and\
  \citenamefont {Hasty}}]{danino2010synchronized}%
  \BibitemOpen
  \bibfield  {author} {\bibinfo {author} {\bibfnamefont {T.}~\bibnamefont
  {Danino}}, \bibinfo {author} {\bibfnamefont {O.}~\bibnamefont
  {Mondrag{\'o}n-Palomino}}, \bibinfo {author} {\bibfnamefont {L.}~\bibnamefont
  {Tsimring}},\ and\ \bibinfo {author} {\bibfnamefont {J.}~\bibnamefont
  {Hasty}},\ }\bibfield  {title} {\bibinfo {title} {A synchronized quorum of
  genetic clocks},\ }\href@noop {} {\bibfield  {journal} {\bibinfo  {journal}
  {Nature}\ }\textbf {\bibinfo {volume} {463}},\ \bibinfo {pages} {326}
  (\bibinfo {year} {2010})}\BibitemShut {NoStop}%
\bibitem [{\citenamefont {Nicolis}(1977)}]{nicolis1977prigogine}%
  \BibitemOpen
  \bibfield  {author} {\bibinfo {author} {\bibfnamefont {G.}~\bibnamefont
  {Nicolis}},\ }\href@noop {} {\emph {\bibinfo {title} {Prigogine I.
  Self-Organization in Non-Equilibrium Systems. Number ISBN 0471024015}}}\
  (\bibinfo  {publisher} {Wiley},\ \bibinfo {year} {1977})\BibitemShut
  {NoStop}%
\bibitem [{\citenamefont {Biancalani}\ \emph {et~al.}(2010)\citenamefont
  {Biancalani}, \citenamefont {Fanelli},\ and\ \citenamefont
  {Di~Patti}}]{biancalani2010stochastic}%
  \BibitemOpen
  \bibfield  {author} {\bibinfo {author} {\bibfnamefont {T.}~\bibnamefont
  {Biancalani}}, \bibinfo {author} {\bibfnamefont {D.}~\bibnamefont
  {Fanelli}},\ and\ \bibinfo {author} {\bibfnamefont {F.}~\bibnamefont
  {Di~Patti}},\ }\bibfield  {title} {\bibinfo {title} {Stochastic turing
  patterns in the brusselator model},\ }\href@noop {} {\bibfield  {journal}
  {\bibinfo  {journal} {Physical Review E—Statistical, Nonlinear, and Soft
  Matter Physics}\ }\textbf {\bibinfo {volume} {81}},\ \bibinfo {pages}
  {046215} (\bibinfo {year} {2010})}\BibitemShut {NoStop}%
\bibitem [{\citenamefont {Kolinichenko}\ and\ \citenamefont
  {Ryashko}(2020)}]{kolinichenko2020multistability}%
  \BibitemOpen
  \bibfield  {author} {\bibinfo {author} {\bibfnamefont {A.}~\bibnamefont
  {Kolinichenko}}\ and\ \bibinfo {author} {\bibfnamefont {L.}~\bibnamefont
  {Ryashko}},\ }\bibfield  {title} {\bibinfo {title} {Multistability and
  stochastic phenomena in the distributed brusselator model},\ }\href@noop {}
  {\bibfield  {journal} {\bibinfo  {journal} {Journal of Computational and
  Nonlinear Dynamics}\ }\textbf {\bibinfo {volume} {15}},\ \bibinfo {pages}
  {011007} (\bibinfo {year} {2020})}\BibitemShut {NoStop}%
\bibitem [{\citenamefont {Osipov}\ and\ \citenamefont
  {Ponizovskaya}(2000)}]{osipov2000stochastic}%
  \BibitemOpen
  \bibfield  {author} {\bibinfo {author} {\bibfnamefont {V.}~\bibnamefont
  {Osipov}}\ and\ \bibinfo {author} {\bibfnamefont {E.}~\bibnamefont
  {Ponizovskaya}},\ }\bibfield  {title} {\bibinfo {title} {Stochastic resonance
  in the brusselator model},\ }\href@noop {} {\bibfield  {journal} {\bibinfo
  {journal} {Physical Review E}\ }\textbf {\bibinfo {volume} {61}},\ \bibinfo
  {pages} {4603} (\bibinfo {year} {2000})}\BibitemShut {NoStop}%
\bibitem [{\citenamefont {Pietras}\ and\ \citenamefont
  {Daffertshofer}(2019)}]{pietras2019network}%
  \BibitemOpen
  \bibfield  {author} {\bibinfo {author} {\bibfnamefont {B.}~\bibnamefont
  {Pietras}}\ and\ \bibinfo {author} {\bibfnamefont {A.}~\bibnamefont
  {Daffertshofer}},\ }\bibfield  {title} {\bibinfo {title} {Network dynamics of
  coupled oscillators and phase reduction techniques},\ }\href@noop {}
  {\bibfield  {journal} {\bibinfo  {journal} {Physics Reports}\ }\textbf
  {\bibinfo {volume} {819}},\ \bibinfo {pages} {1} (\bibinfo {year}
  {2019})}\BibitemShut {NoStop}%
\bibitem [{\citenamefont {Su}\ \emph {et~al.}(1979)\citenamefont {Su},
  \citenamefont {Schrieffer},\ and\ \citenamefont {Heeger}}]{su1979solitons}%
  \BibitemOpen
  \bibfield  {author} {\bibinfo {author} {\bibfnamefont {W.-P.}\ \bibnamefont
  {Su}}, \bibinfo {author} {\bibfnamefont {J.~R.}\ \bibnamefont {Schrieffer}},\
  and\ \bibinfo {author} {\bibfnamefont {A.~J.}\ \bibnamefont {Heeger}},\
  }\bibfield  {title} {\bibinfo {title} {Solitons in polyacetylene},\
  }\href@noop {} {\bibfield  {journal} {\bibinfo  {journal} {Physical Review
  Letters}\ }\textbf {\bibinfo {volume} {42}},\ \bibinfo {pages} {1698}
  (\bibinfo {year} {1979})}\BibitemShut {NoStop}%
\bibitem [{\citenamefont {Lieu}(2018)}]{lieu2018topological}%
  \BibitemOpen
  \bibfield  {author} {\bibinfo {author} {\bibfnamefont {S.}~\bibnamefont
  {Lieu}},\ }\bibfield  {title} {\bibinfo {title} {Topological phases in the
  non-hermitian su-schrieffer-heeger model},\ }\href@noop {} {\bibfield
  {journal} {\bibinfo  {journal} {Physical Review B}\ }\textbf {\bibinfo
  {volume} {97}},\ \bibinfo {pages} {045106} (\bibinfo {year}
  {2018})}\BibitemShut {NoStop}%
\bibitem [{\citenamefont {Hasan}\ and\ \citenamefont
  {Kane}(2010)}]{hasan2010colloquium}%
  \BibitemOpen
  \bibfield  {author} {\bibinfo {author} {\bibfnamefont {M.~Z.}\ \bibnamefont
  {Hasan}}\ and\ \bibinfo {author} {\bibfnamefont {C.~L.}\ \bibnamefont
  {Kane}},\ }\bibfield  {title} {\bibinfo {title} {Colloquium: topological
  insulators},\ }\href@noop {} {\bibfield  {journal} {\bibinfo  {journal}
  {Reviews of Modern Physics}\ }\textbf {\bibinfo {volume} {82}},\ \bibinfo
  {pages} {3045} (\bibinfo {year} {2010})}\BibitemShut {NoStop}%
\bibitem [{\citenamefont {Qi}\ and\ \citenamefont
  {Zhang}(2011)}]{qi2011topological}%
  \BibitemOpen
  \bibfield  {author} {\bibinfo {author} {\bibfnamefont {X.-L.}\ \bibnamefont
  {Qi}}\ and\ \bibinfo {author} {\bibfnamefont {S.-C.}\ \bibnamefont {Zhang}},\
  }\bibfield  {title} {\bibinfo {title} {Topological insulators and
  superconductors},\ }\href@noop {} {\bibfield  {journal} {\bibinfo  {journal}
  {Reviews of Modern Physics}\ }\textbf {\bibinfo {volume} {83}},\ \bibinfo
  {pages} {1057} (\bibinfo {year} {2011})}\BibitemShut {NoStop}%
\bibitem [{\citenamefont {Xiong}(2018)}]{xiong2018does}%
  \BibitemOpen
  \bibfield  {author} {\bibinfo {author} {\bibfnamefont {Y.}~\bibnamefont
  {Xiong}},\ }\bibfield  {title} {\bibinfo {title} {Why does bulk boundary
  correspondence fail in some non-hermitian topological models},\ }\href@noop
  {} {\bibfield  {journal} {\bibinfo  {journal} {Journal of Physics
  Communications}\ }\textbf {\bibinfo {volume} {2}},\ \bibinfo {pages} {035043}
  (\bibinfo {year} {2018})}\BibitemShut {NoStop}%
\bibitem [{\citenamefont {Xiao}\ \emph {et~al.}(2020)\citenamefont {Xiao},
  \citenamefont {Deng}, \citenamefont {Wang}, \citenamefont {Zhu},
  \citenamefont {Wang}, \citenamefont {Yi},\ and\ \citenamefont
  {Xue}}]{xiao2020non}%
  \BibitemOpen
  \bibfield  {author} {\bibinfo {author} {\bibfnamefont {L.}~\bibnamefont
  {Xiao}}, \bibinfo {author} {\bibfnamefont {T.}~\bibnamefont {Deng}}, \bibinfo
  {author} {\bibfnamefont {K.}~\bibnamefont {Wang}}, \bibinfo {author}
  {\bibfnamefont {G.}~\bibnamefont {Zhu}}, \bibinfo {author} {\bibfnamefont
  {Z.}~\bibnamefont {Wang}}, \bibinfo {author} {\bibfnamefont {W.}~\bibnamefont
  {Yi}},\ and\ \bibinfo {author} {\bibfnamefont {P.}~\bibnamefont {Xue}},\
  }\bibfield  {title} {\bibinfo {title} {Non-hermitian bulk--boundary
  correspondence in quantum dynamics},\ }\href@noop {} {\bibfield  {journal}
  {\bibinfo  {journal} {Nature Physics}\ }\textbf {\bibinfo {volume} {16}},\
  \bibinfo {pages} {761} (\bibinfo {year} {2020})}\BibitemShut {NoStop}%
\bibitem [{\citenamefont {Koseska}\ \emph
  {et~al.}(2013{\natexlab{a}})\citenamefont {Koseska}, \citenamefont {Volkov},\
  and\ \citenamefont {Kurths}}]{koseska2013oscillation}%
  \BibitemOpen
  \bibfield  {author} {\bibinfo {author} {\bibfnamefont {A.}~\bibnamefont
  {Koseska}}, \bibinfo {author} {\bibfnamefont {E.}~\bibnamefont {Volkov}},\
  and\ \bibinfo {author} {\bibfnamefont {J.}~\bibnamefont {Kurths}},\
  }\bibfield  {title} {\bibinfo {title} {Oscillation quenching mechanisms:
  Amplitude vs. oscillation death},\ }\href@noop {} {\bibfield  {journal}
  {\bibinfo  {journal} {Physics Reports}\ }\textbf {\bibinfo {volume} {531}},\
  \bibinfo {pages} {173} (\bibinfo {year} {2013}{\natexlab{a}})}\BibitemShut
  {NoStop}%
\bibitem [{\citenamefont {Wang}\ \emph {et~al.}(2020)\citenamefont {Wang},
  \citenamefont {Baruni}, \citenamefont {Parastesh}, \citenamefont {Jafari},
  \citenamefont {Ghosh}, \citenamefont {Perc},\ and\ \citenamefont
  {Hussain}}]{wang2020chimeras}%
  \BibitemOpen
  \bibfield  {author} {\bibinfo {author} {\bibfnamefont {Z.}~\bibnamefont
  {Wang}}, \bibinfo {author} {\bibfnamefont {S.}~\bibnamefont {Baruni}},
  \bibinfo {author} {\bibfnamefont {F.}~\bibnamefont {Parastesh}}, \bibinfo
  {author} {\bibfnamefont {S.}~\bibnamefont {Jafari}}, \bibinfo {author}
  {\bibfnamefont {D.}~\bibnamefont {Ghosh}}, \bibinfo {author} {\bibfnamefont
  {M.}~\bibnamefont {Perc}},\ and\ \bibinfo {author} {\bibfnamefont
  {I.}~\bibnamefont {Hussain}},\ }\bibfield  {title} {\bibinfo {title}
  {Chimeras in an adaptive neuronal network with burst-timing-dependent
  plasticity},\ }\href@noop {} {\bibfield  {journal} {\bibinfo  {journal}
  {Neurocomputing}\ }\textbf {\bibinfo {volume} {406}},\ \bibinfo {pages} {117}
  (\bibinfo {year} {2020})}\BibitemShut {NoStop}%
\bibitem [{\citenamefont {Faghani}\ \emph {et~al.}(2018)\citenamefont
  {Faghani}, \citenamefont {Arab}, \citenamefont {Parastesh}, \citenamefont
  {Jafari}, \citenamefont {Perc},\ and\ \citenamefont
  {Slavinec}}]{faghani2018effects}%
  \BibitemOpen
  \bibfield  {author} {\bibinfo {author} {\bibfnamefont {Z.}~\bibnamefont
  {Faghani}}, \bibinfo {author} {\bibfnamefont {Z.}~\bibnamefont {Arab}},
  \bibinfo {author} {\bibfnamefont {F.}~\bibnamefont {Parastesh}}, \bibinfo
  {author} {\bibfnamefont {S.}~\bibnamefont {Jafari}}, \bibinfo {author}
  {\bibfnamefont {M.}~\bibnamefont {Perc}},\ and\ \bibinfo {author}
  {\bibfnamefont {M.}~\bibnamefont {Slavinec}},\ }\bibfield  {title} {\bibinfo
  {title} {Effects of different initial conditions on the emergence of chimera
  states},\ }\href@noop {} {\bibfield  {journal} {\bibinfo  {journal} {Chaos,
  Solitons \& Fractals}\ }\textbf {\bibinfo {volume} {114}},\ \bibinfo {pages}
  {306} (\bibinfo {year} {2018})}\BibitemShut {NoStop}%
\bibitem [{\citenamefont {Wei}\ \emph {et~al.}(2018)\citenamefont {Wei},
  \citenamefont {Parastesh}, \citenamefont {Azarnoush}, \citenamefont {Jafari},
  \citenamefont {Ghosh}, \citenamefont {Perc},\ and\ \citenamefont
  {Slavinec}}]{wei2018nonstationary}%
  \BibitemOpen
  \bibfield  {author} {\bibinfo {author} {\bibfnamefont {Z.}~\bibnamefont
  {Wei}}, \bibinfo {author} {\bibfnamefont {F.}~\bibnamefont {Parastesh}},
  \bibinfo {author} {\bibfnamefont {H.}~\bibnamefont {Azarnoush}}, \bibinfo
  {author} {\bibfnamefont {S.}~\bibnamefont {Jafari}}, \bibinfo {author}
  {\bibfnamefont {D.}~\bibnamefont {Ghosh}}, \bibinfo {author} {\bibfnamefont
  {M.}~\bibnamefont {Perc}},\ and\ \bibinfo {author} {\bibfnamefont
  {M.}~\bibnamefont {Slavinec}},\ }\bibfield  {title} {\bibinfo {title}
  {Nonstationary chimeras in a neuronal network},\ }\href@noop {} {\bibfield
  {journal} {\bibinfo  {journal} {Europhysics Letters}\ }\textbf {\bibinfo
  {volume} {123}},\ \bibinfo {pages} {48003} (\bibinfo {year}
  {2018})}\BibitemShut {NoStop}%
\bibitem [{\citenamefont {Mehrabbeik}\ \emph {et~al.}(2021)\citenamefont
  {Mehrabbeik}, \citenamefont {Parastesh}, \citenamefont {Ramadoss},
  \citenamefont {Rajagopal}, \citenamefont {Namazi},\ and\ \citenamefont
  {Jafari}}]{mehrabbeik2021synchronization}%
  \BibitemOpen
  \bibfield  {author} {\bibinfo {author} {\bibfnamefont {M.}~\bibnamefont
  {Mehrabbeik}}, \bibinfo {author} {\bibfnamefont {F.}~\bibnamefont
  {Parastesh}}, \bibinfo {author} {\bibfnamefont {J.}~\bibnamefont {Ramadoss}},
  \bibinfo {author} {\bibfnamefont {K.}~\bibnamefont {Rajagopal}}, \bibinfo
  {author} {\bibfnamefont {H.}~\bibnamefont {Namazi}},\ and\ \bibinfo {author}
  {\bibfnamefont {S.}~\bibnamefont {Jafari}},\ }\bibfield  {title} {\bibinfo
  {title} {Synchronization and chimera states in the network of
  electrochemically coupled memristive rulkov neuron maps},\ }\href@noop {}
  {\bibfield  {journal} {\bibinfo  {journal} {Mathematical Biosciences and
  Engineering}\ }\textbf {\bibinfo {volume} {18}},\ \bibinfo {pages} {9394}
  (\bibinfo {year} {2021})}\BibitemShut {NoStop}%
\bibitem [{\citenamefont {Parastesh}\ \emph {et~al.}(2019)\citenamefont
  {Parastesh}, \citenamefont {Jafari}, \citenamefont {Azarnoush}, \citenamefont
  {Hatef}, \citenamefont {Namazi},\ and\ \citenamefont
  {Dudkowski}}]{parastesh2019chimera}%
  \BibitemOpen
  \bibfield  {author} {\bibinfo {author} {\bibfnamefont {F.}~\bibnamefont
  {Parastesh}}, \bibinfo {author} {\bibfnamefont {S.}~\bibnamefont {Jafari}},
  \bibinfo {author} {\bibfnamefont {H.}~\bibnamefont {Azarnoush}}, \bibinfo
  {author} {\bibfnamefont {B.}~\bibnamefont {Hatef}}, \bibinfo {author}
  {\bibfnamefont {H.}~\bibnamefont {Namazi}},\ and\ \bibinfo {author}
  {\bibfnamefont {D.}~\bibnamefont {Dudkowski}},\ }\bibfield  {title} {\bibinfo
  {title} {Chimera in a network of memristor-based hopfield neural network},\
  }\href@noop {} {\bibfield  {journal} {\bibinfo  {journal} {The European
  Physical Journal Special Topics}\ }\textbf {\bibinfo {volume} {228}},\
  \bibinfo {pages} {2023} (\bibinfo {year} {2019})}\BibitemShut {NoStop}%
\bibitem [{\citenamefont {Hussain}\ \emph {et~al.}(2021)\citenamefont
  {Hussain}, \citenamefont {Jafari}, \citenamefont {Ghosh},\ and\ \citenamefont
  {Perc}}]{hussain2021synchronization}%
  \BibitemOpen
  \bibfield  {author} {\bibinfo {author} {\bibfnamefont {I.}~\bibnamefont
  {Hussain}}, \bibinfo {author} {\bibfnamefont {S.}~\bibnamefont {Jafari}},
  \bibinfo {author} {\bibfnamefont {D.}~\bibnamefont {Ghosh}},\ and\ \bibinfo
  {author} {\bibfnamefont {M.}~\bibnamefont {Perc}},\ }\bibfield  {title}
  {\bibinfo {title} {Synchronization and chimeras in a network of
  photosensitive fitzhugh--nagumo neurons},\ }\href@noop {} {\bibfield
  {journal} {\bibinfo  {journal} {Nonlinear Dynamics}\ }\textbf {\bibinfo
  {volume} {104}},\ \bibinfo {pages} {2711} (\bibinfo {year}
  {2021})}\BibitemShut {NoStop}%
\bibitem [{\citenamefont {Parastesh}\ \emph {et~al.}(2020)\citenamefont
  {Parastesh}, \citenamefont {Azarnoush}, \citenamefont {Jafari},\ and\
  \citenamefont {Perc}}]{parastesh2020detecting}%
  \BibitemOpen
  \bibfield  {author} {\bibinfo {author} {\bibfnamefont {F.}~\bibnamefont
  {Parastesh}}, \bibinfo {author} {\bibfnamefont {H.}~\bibnamefont
  {Azarnoush}}, \bibinfo {author} {\bibfnamefont {S.}~\bibnamefont {Jafari}},\
  and\ \bibinfo {author} {\bibfnamefont {M.}~\bibnamefont {Perc}},\ }\bibfield
  {title} {\bibinfo {title} {Detecting chimeras by eigenvalue decomposition of
  the bivariate local order parameter},\ }\href@noop {} {\bibfield  {journal}
  {\bibinfo  {journal} {Europhysics Letters}\ }\textbf {\bibinfo {volume}
  {130}},\ \bibinfo {pages} {28003} (\bibinfo {year} {2020})}\BibitemShut
  {NoStop}%
\bibitem [{\citenamefont {Su}\ \emph {et~al.}(1980)\citenamefont {Su},
  \citenamefont {Schrieffer},\ and\ \citenamefont {Heeger}}]{su1980soliton}%
  \BibitemOpen
  \bibfield  {author} {\bibinfo {author} {\bibfnamefont {W.-P.}\ \bibnamefont
  {Su}}, \bibinfo {author} {\bibfnamefont {J.}~\bibnamefont {Schrieffer}},\
  and\ \bibinfo {author} {\bibfnamefont {A.}~\bibnamefont {Heeger}},\
  }\bibfield  {title} {\bibinfo {title} {Soliton excitations in
  polyacetylene},\ }\href@noop {} {\bibfield  {journal} {\bibinfo  {journal}
  {Physical Review B}\ }\textbf {\bibinfo {volume} {22}},\ \bibinfo {pages}
  {2099} (\bibinfo {year} {1980})}\BibitemShut {NoStop}%
\bibitem [{\citenamefont {Kuramoto}(2003)}]{kuramoto2003chemical}%
  \BibitemOpen
  \bibfield  {author} {\bibinfo {author} {\bibfnamefont {Y.}~\bibnamefont
  {Kuramoto}},\ }\href@noop {} {\emph {\bibinfo {title} {Chemical oscillations,
  waves and turbulence. mineola}}}\ (\bibinfo  {publisher} {New York): Dover
  Publications},\ \bibinfo {year} {2003})\BibitemShut {NoStop}%
\bibitem [{\citenamefont {Krishna}\ \emph {et~al.}(2009)\citenamefont
  {Krishna}, \citenamefont {Semsey},\ and\ \citenamefont
  {Jensen}}]{krishna2009frustrated}%
  \BibitemOpen
  \bibfield  {author} {\bibinfo {author} {\bibfnamefont {S.}~\bibnamefont
  {Krishna}}, \bibinfo {author} {\bibfnamefont {S.}~\bibnamefont {Semsey}},\
  and\ \bibinfo {author} {\bibfnamefont {M.}~\bibnamefont {Jensen}},\
  }\bibfield  {title} {\bibinfo {title} {Frustrated bistability as a means to
  engineer oscillations in biological systems},\ }\href@noop {} {\bibfield
  {journal} {\bibinfo  {journal} {Physical Biology}\ }\textbf {\bibinfo
  {volume} {6}},\ \bibinfo {pages} {036009} (\bibinfo {year}
  {2009})}\BibitemShut {NoStop}%
\bibitem [{\citenamefont {Kaluza}\ and\ \citenamefont
  {Meyer-Ortmanns}(2010)}]{kaluza2010role}%
  \BibitemOpen
  \bibfield  {author} {\bibinfo {author} {\bibfnamefont {P.}~\bibnamefont
  {Kaluza}}\ and\ \bibinfo {author} {\bibfnamefont {H.}~\bibnamefont
  {Meyer-Ortmanns}},\ }\bibfield  {title} {\bibinfo {title} {On the role of
  frustration in excitable systems},\ }\href@noop {} {\bibfield  {journal}
  {\bibinfo  {journal} {Chaos: An Interdisciplinary Journal of Nonlinear
  Science}\ }\textbf {\bibinfo {volume} {20}} (\bibinfo {year}
  {2010})}\BibitemShut {NoStop}%
\bibitem [{\citenamefont {Labavi{\'c}}\ and\ \citenamefont
  {Meyer-Ortmanns}(2014)}]{labavic2014networks}%
  \BibitemOpen
  \bibfield  {author} {\bibinfo {author} {\bibfnamefont {D.}~\bibnamefont
  {Labavi{\'c}}}\ and\ \bibinfo {author} {\bibfnamefont {H.}~\bibnamefont
  {Meyer-Ortmanns}},\ }\bibfield  {title} {\bibinfo {title} {Networks of
  coupled circuits: From a versatile toggle switch to collective coherent
  behavior},\ }\href@noop {} {\bibfield  {journal} {\bibinfo  {journal} {Chaos:
  An Interdisciplinary Journal of Nonlinear Science}\ }\textbf {\bibinfo
  {volume} {24}} (\bibinfo {year} {2014})}\BibitemShut {NoStop}%
\bibitem [{\citenamefont {Lefever}\ and\ \citenamefont
  {Nicolis}(1971)}]{lefever1971chemical}%
  \BibitemOpen
  \bibfield  {author} {\bibinfo {author} {\bibfnamefont {R.}~\bibnamefont
  {Lefever}}\ and\ \bibinfo {author} {\bibfnamefont {G.}~\bibnamefont
  {Nicolis}},\ }\bibfield  {title} {\bibinfo {title} {Chemical instabilities
  and sustained oscillations},\ }\href@noop {} {\bibfield  {journal} {\bibinfo
  {journal} {Journal of Theoretical Biology}\ }\textbf {\bibinfo {volume}
  {30}},\ \bibinfo {pages} {267} (\bibinfo {year} {1971})}\BibitemShut
  {NoStop}%
\bibitem [{\citenamefont {Dixit}\ \emph
  {et~al.}(2021{\natexlab{a}})\citenamefont {Dixit}, \citenamefont
  {Nag~Chowdhury}, \citenamefont {Prasad}, \citenamefont {Ghosh},\ and\
  \citenamefont {Shrimali}}]{dixit2021emergent}%
  \BibitemOpen
  \bibfield  {author} {\bibinfo {author} {\bibfnamefont {S.}~\bibnamefont
  {Dixit}}, \bibinfo {author} {\bibfnamefont {S.}~\bibnamefont
  {Nag~Chowdhury}}, \bibinfo {author} {\bibfnamefont {A.}~\bibnamefont
  {Prasad}}, \bibinfo {author} {\bibfnamefont {D.}~\bibnamefont {Ghosh}},\ and\
  \bibinfo {author} {\bibfnamefont {M.~D.}\ \bibnamefont {Shrimali}},\
  }\bibfield  {title} {\bibinfo {title} {Emergent rhythms in coupled nonlinear
  oscillators due to dynamic interactions},\ }\href@noop {} {\bibfield
  {journal} {\bibinfo  {journal} {Chaos: An Interdisciplinary Journal of
  Nonlinear Science}\ }\textbf {\bibinfo {volume} {31}} (\bibinfo {year}
  {2021}{\natexlab{a}})}\BibitemShut {NoStop}%
\bibitem [{\citenamefont {Sharma}\ and\ \citenamefont
  {Shrimali}(2012)}]{sharma2012amplitude}%
  \BibitemOpen
  \bibfield  {author} {\bibinfo {author} {\bibfnamefont {A.}~\bibnamefont
  {Sharma}}\ and\ \bibinfo {author} {\bibfnamefont {M.~D.}\ \bibnamefont
  {Shrimali}},\ }\bibfield  {title} {\bibinfo {title} {Amplitude death with
  mean-field diffusion},\ }\href@noop {} {\bibfield  {journal} {\bibinfo
  {journal} {Physical Review E—Statistical, Nonlinear, and Soft Matter
  Physics}\ }\textbf {\bibinfo {volume} {85}},\ \bibinfo {pages} {057204}
  (\bibinfo {year} {2012})}\BibitemShut {NoStop}%
\bibitem [{\citenamefont {Boccaletti}\ \emph {et~al.}(2016)\citenamefont
  {Boccaletti}, \citenamefont {Almendral}, \citenamefont {Guan}, \citenamefont
  {Leyva}, \citenamefont {Liu}, \citenamefont {Sendi{\~n}a-Nadal},
  \citenamefont {Wang},\ and\ \citenamefont {Zou}}]{boccaletti2016explosive}%
  \BibitemOpen
  \bibfield  {author} {\bibinfo {author} {\bibfnamefont {S.}~\bibnamefont
  {Boccaletti}}, \bibinfo {author} {\bibfnamefont {J.}~\bibnamefont
  {Almendral}}, \bibinfo {author} {\bibfnamefont {S.}~\bibnamefont {Guan}},
  \bibinfo {author} {\bibfnamefont {I.}~\bibnamefont {Leyva}}, \bibinfo
  {author} {\bibfnamefont {Z.}~\bibnamefont {Liu}}, \bibinfo {author}
  {\bibfnamefont {I.}~\bibnamefont {Sendi{\~n}a-Nadal}}, \bibinfo {author}
  {\bibfnamefont {Z.}~\bibnamefont {Wang}},\ and\ \bibinfo {author}
  {\bibfnamefont {Y.}~\bibnamefont {Zou}},\ }\bibfield  {title} {\bibinfo
  {title} {Explosive transitions in complex networks’ structure and dynamics:
  Percolation and synchronization},\ }\href@noop {} {\bibfield  {journal}
  {\bibinfo  {journal} {Physics Reports}\ }\textbf {\bibinfo {volume} {660}},\
  \bibinfo {pages} {1} (\bibinfo {year} {2016})}\BibitemShut {NoStop}%
\bibitem [{\citenamefont {Leyva}\ \emph {et~al.}(2012)\citenamefont {Leyva},
  \citenamefont {Sevilla-Escoboza}, \citenamefont {Buld{\'u}}, \citenamefont
  {Sendina-Nadal}, \citenamefont {G{\'o}mez-Gardenes}, \citenamefont {Arenas},
  \citenamefont {Moreno}, \citenamefont {Gomez}, \citenamefont
  {Jaimes-Reategui},\ and\ \citenamefont {Boccaletti}}]{leyva2012explosive}%
  \BibitemOpen
  \bibfield  {author} {\bibinfo {author} {\bibfnamefont {I.}~\bibnamefont
  {Leyva}}, \bibinfo {author} {\bibfnamefont {R.}~\bibnamefont
  {Sevilla-Escoboza}}, \bibinfo {author} {\bibfnamefont {J.}~\bibnamefont
  {Buld{\'u}}}, \bibinfo {author} {\bibfnamefont {I.}~\bibnamefont
  {Sendina-Nadal}}, \bibinfo {author} {\bibfnamefont {J.}~\bibnamefont
  {G{\'o}mez-Gardenes}}, \bibinfo {author} {\bibfnamefont {A.}~\bibnamefont
  {Arenas}}, \bibinfo {author} {\bibfnamefont {Y.}~\bibnamefont {Moreno}},
  \bibinfo {author} {\bibfnamefont {S.}~\bibnamefont {Gomez}}, \bibinfo
  {author} {\bibfnamefont {R.}~\bibnamefont {Jaimes-Reategui}},\ and\ \bibinfo
  {author} {\bibfnamefont {S.}~\bibnamefont {Boccaletti}},\ }\bibfield  {title}
  {\bibinfo {title} {Explosive first-order transition to synchrony in networked
  chaotic oscillators},\ }\href@noop {} {\bibfield  {journal} {\bibinfo
  {journal} {Physical Review Letters}\ }\textbf {\bibinfo {volume} {108}},\
  \bibinfo {pages} {168702} (\bibinfo {year} {2012})}\BibitemShut {NoStop}%
\bibitem [{\citenamefont {Dixit}\ \emph
  {et~al.}(2021{\natexlab{b}})\citenamefont {Dixit}, \citenamefont
  {Nag~Chowdhury}, \citenamefont {Ghosh},\ and\ \citenamefont
  {Shrimali}}]{dixit2021dynamic}%
  \BibitemOpen
  \bibfield  {author} {\bibinfo {author} {\bibfnamefont {S.}~\bibnamefont
  {Dixit}}, \bibinfo {author} {\bibfnamefont {S.}~\bibnamefont
  {Nag~Chowdhury}}, \bibinfo {author} {\bibfnamefont {D.}~\bibnamefont
  {Ghosh}},\ and\ \bibinfo {author} {\bibfnamefont {M.~D.}\ \bibnamefont
  {Shrimali}},\ }\bibfield  {title} {\bibinfo {title} {Dynamic interaction
  induced explosive death},\ }\href@noop {} {\bibfield  {journal} {\bibinfo
  {journal} {Europhysics Letters}\ }\textbf {\bibinfo {volume} {133}},\
  \bibinfo {pages} {40003} (\bibinfo {year} {2021}{\natexlab{b}})}\BibitemShut
  {NoStop}%
\bibitem [{\citenamefont {Koseska}\ \emph
  {et~al.}(2013{\natexlab{b}})\citenamefont {Koseska}, \citenamefont {Volkov},\
  and\ \citenamefont {Kurths}}]{koseska2013transition}%
  \BibitemOpen
  \bibfield  {author} {\bibinfo {author} {\bibfnamefont {A.}~\bibnamefont
  {Koseska}}, \bibinfo {author} {\bibfnamefont {E.}~\bibnamefont {Volkov}},\
  and\ \bibinfo {author} {\bibfnamefont {J.}~\bibnamefont {Kurths}},\
  }\bibfield  {title} {\bibinfo {title} {Transition from amplitude to
  oscillation death via turing bifurcation},\ }\href@noop {} {\bibfield
  {journal} {\bibinfo  {journal} {Physical Review Letters}\ }\textbf {\bibinfo
  {volume} {111}},\ \bibinfo {pages} {024103} (\bibinfo {year}
  {2013}{\natexlab{b}})}\BibitemShut {NoStop}%
\bibitem [{\citenamefont {Parastesh}\ \emph {et~al.}(2021)\citenamefont
  {Parastesh}, \citenamefont {Jafari}, \citenamefont {Azarnoush}, \citenamefont
  {Shahriari}, \citenamefont {Wang}, \citenamefont {Boccaletti},\ and\
  \citenamefont {Perc}}]{parastesh2021chimeras}%
  \BibitemOpen
  \bibfield  {author} {\bibinfo {author} {\bibfnamefont {F.}~\bibnamefont
  {Parastesh}}, \bibinfo {author} {\bibfnamefont {S.}~\bibnamefont {Jafari}},
  \bibinfo {author} {\bibfnamefont {H.}~\bibnamefont {Azarnoush}}, \bibinfo
  {author} {\bibfnamefont {Z.}~\bibnamefont {Shahriari}}, \bibinfo {author}
  {\bibfnamefont {Z.}~\bibnamefont {Wang}}, \bibinfo {author} {\bibfnamefont
  {S.}~\bibnamefont {Boccaletti}},\ and\ \bibinfo {author} {\bibfnamefont
  {M.}~\bibnamefont {Perc}},\ }\bibfield  {title} {\bibinfo {title}
  {Chimeras},\ }\href@noop {} {\bibfield  {journal} {\bibinfo  {journal}
  {Physics Reports}\ }\textbf {\bibinfo {volume} {898}},\ \bibinfo {pages} {1}
  (\bibinfo {year} {2021})}\BibitemShut {NoStop}%
\bibitem [{\citenamefont {Majhi}\ \emph {et~al.}(2019)\citenamefont {Majhi},
  \citenamefont {Bera}, \citenamefont {Ghosh},\ and\ \citenamefont
  {Perc}}]{majhi2019chimera}%
  \BibitemOpen
  \bibfield  {author} {\bibinfo {author} {\bibfnamefont {S.}~\bibnamefont
  {Majhi}}, \bibinfo {author} {\bibfnamefont {B.~K.}\ \bibnamefont {Bera}},
  \bibinfo {author} {\bibfnamefont {D.}~\bibnamefont {Ghosh}},\ and\ \bibinfo
  {author} {\bibfnamefont {M.}~\bibnamefont {Perc}},\ }\bibfield  {title}
  {\bibinfo {title} {Chimera states in neuronal networks: A review},\
  }\href@noop {} {\bibfield  {journal} {\bibinfo  {journal} {Physics of Life
  Reviews}\ }\textbf {\bibinfo {volume} {28}},\ \bibinfo {pages} {100}
  (\bibinfo {year} {2019})}\BibitemShut {NoStop}%
\bibitem [{\citenamefont {Abrams}\ and\ \citenamefont
  {Strogatz}(2004)}]{abrams2004chimera}%
  \BibitemOpen
  \bibfield  {author} {\bibinfo {author} {\bibfnamefont {D.~M.}\ \bibnamefont
  {Abrams}}\ and\ \bibinfo {author} {\bibfnamefont {S.~H.}\ \bibnamefont
  {Strogatz}},\ }\bibfield  {title} {\bibinfo {title} {Chimera states for
  coupled oscillators},\ }\href@noop {} {\bibfield  {journal} {\bibinfo
  {journal} {Physical Review Letters}\ }\textbf {\bibinfo {volume} {93}},\
  \bibinfo {pages} {174102} (\bibinfo {year} {2004})}\BibitemShut {NoStop}%
\bibitem [{\citenamefont {Khaleghi}\ \emph {et~al.}(2019)\citenamefont
  {Khaleghi}, \citenamefont {Panahi}, \citenamefont {Nag~Chowdhury},
  \citenamefont {Bogomolov}, \citenamefont {Ghosh},\ and\ \citenamefont
  {Jafari}}]{khaleghi2019chimera}%
  \BibitemOpen
  \bibfield  {author} {\bibinfo {author} {\bibfnamefont {L.}~\bibnamefont
  {Khaleghi}}, \bibinfo {author} {\bibfnamefont {S.}~\bibnamefont {Panahi}},
  \bibinfo {author} {\bibfnamefont {S.}~\bibnamefont {Nag~Chowdhury}}, \bibinfo
  {author} {\bibfnamefont {S.}~\bibnamefont {Bogomolov}}, \bibinfo {author}
  {\bibfnamefont {D.}~\bibnamefont {Ghosh}},\ and\ \bibinfo {author}
  {\bibfnamefont {S.}~\bibnamefont {Jafari}},\ }\bibfield  {title} {\bibinfo
  {title} {Chimera states in a ring of map-based neurons},\ }\href@noop {}
  {\bibfield  {journal} {\bibinfo  {journal} {Physica A: Statistical Mechanics
  and its Applications}\ }\textbf {\bibinfo {volume} {536}},\ \bibinfo {pages}
  {122596} (\bibinfo {year} {2019})}\BibitemShut {NoStop}%
\bibitem [{\citenamefont {Artime}\ \emph {et~al.}(2024)\citenamefont {Artime},
  \citenamefont {Grassia}, \citenamefont {De~Domenico}, \citenamefont
  {Gleeson}, \citenamefont {Makse}, \citenamefont {Mangioni}, \citenamefont
  {Perc},\ and\ \citenamefont {Radicchi}}]{artime2024robustness}%
  \BibitemOpen
  \bibfield  {author} {\bibinfo {author} {\bibfnamefont {O.}~\bibnamefont
  {Artime}}, \bibinfo {author} {\bibfnamefont {M.}~\bibnamefont {Grassia}},
  \bibinfo {author} {\bibfnamefont {M.}~\bibnamefont {De~Domenico}}, \bibinfo
  {author} {\bibfnamefont {J.~P.}\ \bibnamefont {Gleeson}}, \bibinfo {author}
  {\bibfnamefont {H.~A.}\ \bibnamefont {Makse}}, \bibinfo {author}
  {\bibfnamefont {G.}~\bibnamefont {Mangioni}}, \bibinfo {author}
  {\bibfnamefont {M.}~\bibnamefont {Perc}},\ and\ \bibinfo {author}
  {\bibfnamefont {F.}~\bibnamefont {Radicchi}},\ }\bibfield  {title} {\bibinfo
  {title} {Robustness and resilience of complex networks},\ }\href@noop {}
  {\bibfield  {journal} {\bibinfo  {journal} {Nature Reviews Physics}\ }\textbf
  {\bibinfo {volume} {6}},\ \bibinfo {pages} {114} (\bibinfo {year}
  {2024})}\BibitemShut {NoStop}%
\bibitem [{\citenamefont {Mikaberidze}\ \emph {et~al.}(2024)\citenamefont
  {Mikaberidze}, \citenamefont {Nag~Chowdhury}, \citenamefont {Hastings},\ and\
  \citenamefont {D’Souza}}]{mikaberidze2024consensus}%
  \BibitemOpen
  \bibfield  {author} {\bibinfo {author} {\bibfnamefont {G.}~\bibnamefont
  {Mikaberidze}}, \bibinfo {author} {\bibfnamefont {S.}~\bibnamefont
  {Nag~Chowdhury}}, \bibinfo {author} {\bibfnamefont {A.}~\bibnamefont
  {Hastings}},\ and\ \bibinfo {author} {\bibfnamefont {R.~M.}\ \bibnamefont
  {D’Souza}},\ }\bibfield  {title} {\bibinfo {title} {Consensus formation
  among mobile agents in networks of heterogeneous interaction venues},\
  }\href@noop {} {\bibfield  {journal} {\bibinfo  {journal} {Chaos, Solitons \&
  Fractals}\ }\textbf {\bibinfo {volume} {178}},\ \bibinfo {pages} {114298}
  (\bibinfo {year} {2024})}\BibitemShut {NoStop}%
\bibitem [{\citenamefont {Lai}\ \emph {et~al.}(2025)\citenamefont {Lai},
  \citenamefont {Fang}, \citenamefont {Su}, \citenamefont {Li},\ and\
  \citenamefont {Wu}}]{lai2025non}%
  \BibitemOpen
  \bibfield  {author} {\bibinfo {author} {\bibfnamefont {Y.-P.}\ \bibnamefont
  {Lai}}, \bibinfo {author} {\bibfnamefont {Y.-X.}\ \bibnamefont {Fang}},
  \bibinfo {author} {\bibfnamefont {C.-Q.}\ \bibnamefont {Su}}, \bibinfo
  {author} {\bibfnamefont {Y.}~\bibnamefont {Li}},\ and\ \bibinfo {author}
  {\bibfnamefont {S.-Q.}\ \bibnamefont {Wu}},\ }\bibfield  {title} {\bibinfo
  {title} {Non-hermitian skin effect in an extended su-schrieffer-heeger model
  with balanced gain and loss and nonreciprocal next-nearest-neighbor hopping
  phase},\ }\href@noop {} {\bibfield  {journal} {\bibinfo  {journal} {Physical
  Review B}\ }\textbf {\bibinfo {volume} {111}},\ \bibinfo {pages} {085102}
  (\bibinfo {year} {2025})}\BibitemShut {NoStop}%
\bibitem [{\citenamefont {Heiss}(2012)}]{heiss2012physics}%
  \BibitemOpen
  \bibfield  {author} {\bibinfo {author} {\bibfnamefont {W.~D.}\ \bibnamefont
  {Heiss}},\ }\bibfield  {title} {\bibinfo {title} {The physics of exceptional
  points},\ }\href@noop {} {\bibfield  {journal} {\bibinfo  {journal} {Journal
  of Physics A: Mathematical and Theoretical}\ }\textbf {\bibinfo {volume}
  {45}},\ \bibinfo {pages} {444016} (\bibinfo {year} {2012})}\BibitemShut
  {NoStop}%
\bibitem [{\citenamefont {Miri}\ and\ \citenamefont
  {Alu}(2019)}]{miri2019exceptional}%
  \BibitemOpen
  \bibfield  {author} {\bibinfo {author} {\bibfnamefont {M.-A.}\ \bibnamefont
  {Miri}}\ and\ \bibinfo {author} {\bibfnamefont {A.}~\bibnamefont {Alu}},\
  }\bibfield  {title} {\bibinfo {title} {Exceptional points in optics and
  photonics},\ }\href@noop {} {\bibfield  {journal} {\bibinfo  {journal}
  {Science}\ }\textbf {\bibinfo {volume} {363}},\ \bibinfo {pages} {eaar7709}
  (\bibinfo {year} {2019})}\BibitemShut {NoStop}%
\bibitem [{\citenamefont {Panahi}\ \emph {et~al.}(2024)\citenamefont {Panahi},
  \citenamefont {Ye},\ and\ \citenamefont {Lai}}]{panahi2024higher}%
  \BibitemOpen
  \bibfield  {author} {\bibinfo {author} {\bibfnamefont {S.}~\bibnamefont
  {Panahi}}, \bibinfo {author} {\bibfnamefont {L.-L.}\ \bibnamefont {Ye}},\
  and\ \bibinfo {author} {\bibfnamefont {Y.-C.}\ \bibnamefont {Lai}},\
  }\bibfield  {title} {\bibinfo {title} {Higher-order exceptional points and
  stochastic resonance in pseudo-hermitian systems},\ }\href@noop {} {\bibfield
   {journal} {\bibinfo  {journal} {Physical Review Applied}\ }\textbf {\bibinfo
  {volume} {22}},\ \bibinfo {pages} {054063} (\bibinfo {year}
  {2024})}\BibitemShut {NoStop}%
\bibitem [{\citenamefont {Zak}(1989)}]{zak1989berry}%
  \BibitemOpen
  \bibfield  {author} {\bibinfo {author} {\bibfnamefont {J.}~\bibnamefont
  {Zak}},\ }\bibfield  {title} {\bibinfo {title} {Berry’s phase for energy
  bands in solids},\ }\href@noop {} {\bibfield  {journal} {\bibinfo  {journal}
  {Physical Review Letters}\ }\textbf {\bibinfo {volume} {62}},\ \bibinfo
  {pages} {2747} (\bibinfo {year} {1989})}\BibitemShut {NoStop}%
\bibitem [{\citenamefont {Liu}\ and\ \citenamefont
  {Wakabayashi}(2017)}]{liu2017novel}%
  \BibitemOpen
  \bibfield  {author} {\bibinfo {author} {\bibfnamefont {F.}~\bibnamefont
  {Liu}}\ and\ \bibinfo {author} {\bibfnamefont {K.}~\bibnamefont
  {Wakabayashi}},\ }\bibfield  {title} {\bibinfo {title} {Novel topological
  phase with a zero berry curvature},\ }\href@noop {} {\bibfield  {journal}
  {\bibinfo  {journal} {Physical Review Letters}\ }\textbf {\bibinfo {volume}
  {118}},\ \bibinfo {pages} {076803} (\bibinfo {year} {2017})}\BibitemShut
  {NoStop}%
\bibitem [{\citenamefont {Obana}\ \emph {et~al.}(2019)\citenamefont {Obana},
  \citenamefont {Liu},\ and\ \citenamefont
  {Wakabayashi}}]{obana2019topological}%
  \BibitemOpen
  \bibfield  {author} {\bibinfo {author} {\bibfnamefont {D.}~\bibnamefont
  {Obana}}, \bibinfo {author} {\bibfnamefont {F.}~\bibnamefont {Liu}},\ and\
  \bibinfo {author} {\bibfnamefont {K.}~\bibnamefont {Wakabayashi}},\
  }\bibfield  {title} {\bibinfo {title} {Topological edge states in the
  su-schrieffer-heeger model},\ }\href@noop {} {\bibfield  {journal} {\bibinfo
  {journal} {Physical Review B}\ }\textbf {\bibinfo {volume} {100}},\ \bibinfo
  {pages} {075437} (\bibinfo {year} {2019})}\BibitemShut {NoStop}%
\bibitem [{\citenamefont {Nelson}\ and\ \citenamefont
  {Tang}(2024)}]{nelson2024nonreciprocity}%
  \BibitemOpen
  \bibfield  {author} {\bibinfo {author} {\bibfnamefont {A.}~\bibnamefont
  {Nelson}}\ and\ \bibinfo {author} {\bibfnamefont {E.}~\bibnamefont {Tang}},\
  }\bibfield  {title} {\bibinfo {title} {Nonreciprocity is necessary for robust
  dimensional reduction and strong responses in stochastic topological
  systems},\ }\href@noop {} {\bibfield  {journal} {\bibinfo  {journal}
  {Physical Review B}\ }\textbf {\bibinfo {volume} {110}},\ \bibinfo {pages}
  {155116} (\bibinfo {year} {2024})}\BibitemShut {NoStop}%
\bibitem [{\citenamefont {Jiao}\ \emph {et~al.}(2021)\citenamefont {Jiao},
  \citenamefont {Longhi}, \citenamefont {Wang}, \citenamefont {Gao},
  \citenamefont {Zhou}, \citenamefont {Wang}, \citenamefont {Fu}, \citenamefont
  {Wang}, \citenamefont {Ren}, \citenamefont {Qiao} \emph
  {et~al.}}]{jiao2021experimentally}%
  \BibitemOpen
  \bibfield  {author} {\bibinfo {author} {\bibfnamefont {Z.-Q.}\ \bibnamefont
  {Jiao}}, \bibinfo {author} {\bibfnamefont {S.}~\bibnamefont {Longhi}},
  \bibinfo {author} {\bibfnamefont {X.-W.}\ \bibnamefont {Wang}}, \bibinfo
  {author} {\bibfnamefont {J.}~\bibnamefont {Gao}}, \bibinfo {author}
  {\bibfnamefont {W.-H.}\ \bibnamefont {Zhou}}, \bibinfo {author}
  {\bibfnamefont {Y.}~\bibnamefont {Wang}}, \bibinfo {author} {\bibfnamefont
  {Y.-X.}\ \bibnamefont {Fu}}, \bibinfo {author} {\bibfnamefont
  {L.}~\bibnamefont {Wang}}, \bibinfo {author} {\bibfnamefont {R.-J.}\
  \bibnamefont {Ren}}, \bibinfo {author} {\bibfnamefont {L.-F.}\ \bibnamefont
  {Qiao}}, \emph {et~al.},\ }\bibfield  {title} {\bibinfo {title}
  {Experimentally detecting quantized zak phases without chiral symmetry in
  photonic lattices},\ }\href@noop {} {\bibfield  {journal} {\bibinfo
  {journal} {Physical Review Letters}\ }\textbf {\bibinfo {volume} {127}},\
  \bibinfo {pages} {147401} (\bibinfo {year} {2021})}\BibitemShut {NoStop}%
\bibitem [{\citenamefont {Xiao}\ \emph {et~al.}(2016)\citenamefont {Xiao},
  \citenamefont {Zhang},\ and\ \citenamefont {Chan}}]{xiao2016coexistence}%
  \BibitemOpen
  \bibfield  {author} {\bibinfo {author} {\bibfnamefont {Y.-X.}\ \bibnamefont
  {Xiao}}, \bibinfo {author} {\bibfnamefont {Z.-Q.}\ \bibnamefont {Zhang}},\
  and\ \bibinfo {author} {\bibfnamefont {C.}~\bibnamefont {Chan}},\ }\bibfield
  {title} {\bibinfo {title} {Coexistence of quantized and non-quantized
  geometric phases in quasi-one-dimensional systems without inversion
  symmetry},\ }\href@noop {} {\bibfield  {journal} {\bibinfo  {journal} {arXiv
  preprint arXiv:1605.00549}\ } (\bibinfo {year} {2016})}\BibitemShut {NoStop}%
\bibitem [{\citenamefont {Nag~Chowdhury}\ \emph {et~al.}(2024)\citenamefont
  {Nag~Chowdhury}, \citenamefont {Anwar},\ and\ \citenamefont
  {Ghosh}}]{nag2024cluster}%
  \BibitemOpen
  \bibfield  {author} {\bibinfo {author} {\bibfnamefont {S.}~\bibnamefont
  {Nag~Chowdhury}}, \bibinfo {author} {\bibfnamefont {M.~S.}\ \bibnamefont
  {Anwar}},\ and\ \bibinfo {author} {\bibfnamefont {D.}~\bibnamefont {Ghosh}},\
  }\bibfield  {title} {\bibinfo {title} {Cluster formation due to repulsive
  spanning trees in attractively coupled networks},\ }\href@noop {} {\bibfield
  {journal} {\bibinfo  {journal} {Physical Review E}\ }\textbf {\bibinfo
  {volume} {109}},\ \bibinfo {pages} {044314} (\bibinfo {year}
  {2024})}\BibitemShut {NoStop}%
\bibitem [{\citenamefont {Nag~Chowdhury}\ \emph {et~al.}(2020)\citenamefont
  {Nag~Chowdhury}, \citenamefont {Ghosh},\ and\ \citenamefont
  {Hens}}]{chowdhury2020effect}%
  \BibitemOpen
  \bibfield  {author} {\bibinfo {author} {\bibfnamefont {S.}~\bibnamefont
  {Nag~Chowdhury}}, \bibinfo {author} {\bibfnamefont {D.}~\bibnamefont
  {Ghosh}},\ and\ \bibinfo {author} {\bibfnamefont {C.}~\bibnamefont {Hens}},\
  }\bibfield  {title} {\bibinfo {title} {Effect of repulsive links on
  frustration in attractively coupled networks},\ }\href@noop {} {\bibfield
  {journal} {\bibinfo  {journal} {Physical Review E}\ }\textbf {\bibinfo
  {volume} {101}},\ \bibinfo {pages} {022310} (\bibinfo {year}
  {2020})}\BibitemShut {NoStop}%
\bibitem [{\citenamefont {Saxena}\ \emph {et~al.}(2012)\citenamefont {Saxena},
  \citenamefont {Prasad},\ and\ \citenamefont
  {Ramaswamy}}]{saxena2012amplitude}%
  \BibitemOpen
  \bibfield  {author} {\bibinfo {author} {\bibfnamefont {G.}~\bibnamefont
  {Saxena}}, \bibinfo {author} {\bibfnamefont {A.}~\bibnamefont {Prasad}},\
  and\ \bibinfo {author} {\bibfnamefont {R.}~\bibnamefont {Ramaswamy}},\
  }\bibfield  {title} {\bibinfo {title} {Amplitude death: The emergence of
  stationarity in coupled nonlinear systems},\ }\href@noop {} {\bibfield
  {journal} {\bibinfo  {journal} {Physics Reports}\ }\textbf {\bibinfo {volume}
  {521}},\ \bibinfo {pages} {205} (\bibinfo {year} {2012})}\BibitemShut
  {NoStop}%
\end{thebibliography}%


\end{document}